\DeclareMathOperator*{\bigbullet}{\scalerel*{\bullet}{\sum}}
\newcounter{ToDos}
\newcounter{WarnCounts}
\newcommand{\cons}{\join}
\newcommand{\cat}{\join}
\newcommand{\sngl}[1]{#1}
\newcommand{\nil}{[]}
\newcommand{\nnull}{\textrm{null}}
\newcommand{\lseg}{\mathit{list}}
\newcommand{\gseg}{\mathit{graph}}
\newcommand{\emp}{\mathit{emp}}
\newcommand{\emptymap}{e}
\newcommand{\partgraph}{\textit{partial{-}graph}}
\newcommand{\bingraph}{\textit{binary{-}graph}}
\newcommand{\erasure}[1]{|{#1}|}
\newcommand{\goodg}{\textit{closed}}
\newcommand{\notM}{\scalebox{1.2}{$\scriptstyle{\textsf{\textit{O}}}$}}
\newcommand{\rightM}{\scalebox{1.2}{$\scriptstyle{\textsf{\textit{R}}}$}}
\newcommand{\leftM}{\scalebox{1.2}{$\scriptstyle{\textsf{\textit{L}}}$}}
\newcommand{\killed}{\scalebox{1.2}{$\scriptstyle{\textsf{\textit{X}}}$}}
\newcommand{\assign}{\,{:=}\,}
\newcommand{\deref}[1]{#1}
\newcommand{\mutate}{\,{:=}\,}
\newcommand{\ldot}{\mathord{.}\,}
\newcommand{\eqdef}{\mathrel{\:\widehat{=\hspace{0.2mm}}\:}}
\newcommand{\specK}[1]{\ensuremath{\textcolor{blue}{#1}}}
\newcommand{\spec}[1]{\specK{\left\{{#1}\right\}}}
\newcommand{\sspecopen}[1]{\specK{\{{#1}\hphantom{\}}}}
\newcommand{\opensspec}[1]{\specK{\hphantom{\{}{#1}\}}}
\newcommand{\opensspecopen}[1]{\specK{\hphantom{\{}{#1}\hphantom{\}}}}
\newcommand{\dotcup}{\ensuremath{\mathaccent\cdot\cup}}
\newcommand{\pcmF}{\bullet}
\newcommand{\join}{\pcmF}
\newcommand\ostep[2][]{\ext@arrow 0099{\longrightarrowfill@}{#1}{#2}}
\def\longrightarrowfill@{\arrowfill@{\ \ \ }\relbar\longrightarrow}
\newcommand\osteps[2][]{\ext@arrow 0099{\longrightarrowfillstar@}{#1}{#2}}
\def\longrightarrowfillstar@{\arrowfill@{\ \ \ }\relbar{\longrightarrow^*}}
\newcommand\mstep[2][]{\ext@arrow 0099{\longrightarrowfill@}{#1}{#2}}
\def\longrightarrowfill@{\arrowfill@{\ \ \ }\relbar\longrightarrow}
\newcommand\msteps[2][]{\ext@arrow 0099{\longrightarrowfillstar@}{#1}{#2}}
\def\longrightarrowfillstar@{\arrowfill@{\ \ \ }\relbar{\longrightarrow^*}}
\begin{document}

\title{Verifying Graph Algorithms in Separation Logic: A Case for an Algebraic Approach (Extended Version)}

\author{Marcos Grandury}
\affiliation{%
  \institution{IMDEA Software Institute and Universidad Polit\'{e}cnica de Madrid}
  \city{Madrid}
  \country{Spain}}
\email{marcos.grandury@imdea.org}

\author{Aleksandar Nanevski}
\affiliation{%
  \institution{IMDEA Software Institute}
  \city{Madrid}
  \country{Spain}}
\email{aleks.nanevski@imdea.org}

\author{Alexander Gryzlov}
\affiliation{%
  \institution{IMDEA Software Institute}
  \city{Madrid}
  \country{Spain}}
\email{aliaksandr.hryzlou@imdea.org}

\begin{abstract}
Verifying graph algorithms has long been considered challenging in
separation logic, mainly due to structural sharing between graph
subcomponents. We show that these challenges can be effectively
addressed by representing graphs as a partial commutative monoid
(PCM), and by leveraging structure-preserving functions (PCM
morphisms), including higher-order combinators.

PCM morphisms are important because they generalize separation logic's
principle of local reasoning. While traditional framing isolates
relevant portions of the heap only at the top level of a
specification, morphisms enable contextual localization: they
distribute over monoid operations to isolate relevant subgraphs, even
when nested deeply within a specification.

We demonstrate the morphisms' effectiveness with novel and concise
verifications of two canonical graph benchmarks: the Schorr-Waite graph
marking algorithm and the union-find data structure.

\end{abstract}

\begin{CCSXML}
<ccs2012>
<concept>
<concept_id>10003752.10003790.10011742</concept_id>
<concept_desc>Theory of computation~Separation logic</concept_desc>
<concept_significance>500</concept_significance>
</concept>
<concept>
<concept_id>10003752.10010124.10010138.10010142</concept_id>
<concept_desc>Theory of computation~Program verification</concept_desc>
<concept_significance>500</concept_significance>
</concept>
</ccs2012>
\end{CCSXML}

\ccsdesc[500]{Theory of computation~Separation logic}
\ccsdesc[500]{Theory of computation~Program verification}

\maketitle

\setcounter{footnote}{0}
\newcommand{\G}{\gamma}
\newcommand{\Ga}{\gamma_{\textrm{adj}}}
\newcommand{\Gm}{\gamma_{\textrm{val}}}

\section{Introduction}\label{sec:intro}
The defining property of separation
logic~\cite{ohe+rey+yan:csl01,ish+ohe:popl01,rey:lics02} is that a
program's specification tightly circumscribes the heap that the
program accesses. Then \emph{framing}, with the associated frame rule
of inference, allows extending the specification with
a \emph{disjoint} set of pointers, deducing that the program doesn't
modify the extension. This makes the
verification \emph{local}~\cite{cal+ohe+yan:lics07}, in the sense that
it can focus on the relevant heap by framing out the unnecessary
pointers.

Another characteristic property of separation logic is that data
structure's layout in the heap is typically defined in relation to the
structure's contents, so that clients can reason about the contents
and abstract away from the heap.
A common example is the predicate $\lseg\ \alpha\ (i,
j)$~\cite{rey:lics02} which holds of a heap iff that heap contains a
singly-linked list between pointers $i$ and $j$, and the list's
contents corresponds to the inductive mathematical sequence $\alpha$.
This way, \emph{spatial} (i.e., about state; involving heaps and
pointers) reasoning gives rise to \emph{non-spatial} (i.e.,
purely mathematical, state-free, about contents) reasoning for the clients.
Fig.~\ref{fig:list-apart} illustrates a heap satisfying
$\lseg\ \alpha\ (i, j)$ when $\alpha = [a, b, c, d]$.

\begin{figure}[t]
  \centering
  \includegraphics[width=0.5\textwidth]{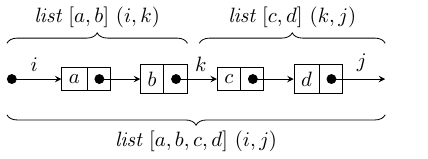}
  \vspace{-2mm}
  \caption{Spatial predicate $\lseg\ \alpha\ (i, j)$ describes the
    layout of a linked list, whose contents is sequence $\alpha$,
    between pointers $i$ and $j$. Dividing the heap is matched by
    dividing $\alpha$.}\label{fig:list-apart}
\end{figure}

The interaction of the above two properties imparts an important
requirement on non-spatial reasoning. Since framing involves
decomposition of heaps into disjoint subheaps, locality mandates that
the mathematical structure representing contents also needs to support
some form of disjoint decomposition. For example, in
Fig.~\ref{fig:list-apart}, the heap divides into two contiguous but
disjoint parts, between pointers $i$ and $k$, and $k$ and $j$,
respectively. Correspondingly, at the non-spatial level, the sequence
$[a, b, c, d]$ laid out between the pointers $i$ and $j$, decomposes
into subsequences $[a, b]$ and $[c, d]$, such that $\lseg\ [a, b]\
(i, k)$ and $\lseg\ [c, d]\ (k, j)$ hold of disjoint subheaps.

While mathematical sequences (along with sets, mathematical trees, and
many other structures) admit useful forms of decomposition, graphs
generally don't. Indeed, traversing a list or a tree reduces to
traversing disjoint sublists or disjoint subtrees from different
children of a common root; thus lists and trees naturally divide into
disjoint subcomponents. But in a graph, the subgraphs reachable from
the different children of a common node need not be disjoint,
preventing an easy division.
Even if a graph is divided into subgraphs with disjoint nodes, one
must still somehow keep track of the edges crossing the divide, in
order to eventually reattach the pieces. Keeping track of the
crossing edges is simple in the case of lists, as there's at most one
such edge (e.g., $k$ in Fig.~\ref{fig:list-apart}),
but graphs generally don't exhibit such regularity.

The lack of natural decomposition is why graphs have traditionally
posed a challenge for separation logic. For example, the Schorr-Waite
graph marking algorithm~\cite{schorr1967} is a well-known graph
benchmark,\footnote{The Schorr-Waite algorithm, described in
Section~\ref{sec:schorr}, marks the nodes reachable from a given root
node in a directed graph. Unlike recursive depth-first implementation, it
uses an iterative approach that simulates the call stack through graph
mutations, temporarily modifying the graph's pointer structure during
traversal and restoring it upon completion.}
that has been verified in separation logic early on
by~\citet{Yang2001AnEO,yang2001}, and in many other logics and tools
such as Isabelle~\cite{meh+nip:cade03} and Dafny~\cite{leino:lpar10}.
While these proofs share conceptually similar mathematical notions
(e.g., reachability in a graph by paths of unmarked nodes), they are
formally encoded differently, with Yang's proof exhibiting by far the
largest formalization overhead.%
\footnote{Yang's proof 
is described in some 40 pages of dense mathematical text. The proof in
Isabelle is mechanized in about 400 lines, and the proof in Dafny is
automated, aside from some 70 lines of annotations. Of course, Yang's
proof is the only one to use framing, which allowed restricting the
focus to connected graphs without sacrificing generality.}

Notably, Yang’s proof avoids using graphs in non-spatial reasoning by
refraining from explicit mathematical graph parameters in spatial
predicates. Instead, these predicates are parametrized by decomposable
abstractions---such as node sequences encoding traversal paths,
marked/unmarked node sets, the graph's spanning tree, etc.---that
indirectly approximate the graph’s structure. However, these proxies
are interdependent: maintaining their mutual consistency and
synchronization with the heap-allocated graph forces the proof to
perpetually alternate between spatial and non-spatial reasoning, an
entanglement that accounts for most of the proof’s
complexity. Furthermore, the absence of an explicit mathematical graph
parameter precludes invoking formal graph-theoretic lemmas, which in
turn limits proof reuse and scalability.

To reconcile spatial and non-spatial reasoning, we first
require a mathematical representation of graphs that inherently
supports decomposition. This motivates our generalization to 
\emph{partial graphs}, which are directed graphs that admit \emph{dangling edges}, 
i.e., edges whose source node belongs to the graph, but whose sink
node doesn't. Partial graphs decompose into subgraphs with disjoint
nodes, where an edge crossing the divide becomes dangling in the
subgraph containing the edge's source.
We formalize this decomposition by casting partial graphs
as \emph{partial commutative monoid (PCM)}. A PCM is a structure
$(U, \join, e)$ where $\join$, pronounced ``join'', is a partial
commutative and associative binary operation over $U$, with unit $e$,
which captures how graphs combine. If $\G = \G_1 \join \G_2$, we say
that $\G$ \emph{decomposes} or \emph{splits} into $\G_1$ and $\G_2$,
or that $\G_1$ and $\G_2$ \emph{join} into $\G$.

PCMs are central to separation logic, e.g., for defining
heaps~\cite{cal+ohe+yan:lics07},
permissions~\cite{bor+cal+ohe+par:popl05},
histories~\cite{ser+nan+ban:esop15}, and meta
theory~\cite{din+bir+gar+par+yan:popl13}. Modern separation logics
further support abstract states described by arbitrary
PCMs~\cite{jun+swa+sie+sve+tur+bir+dre:popl15,nan+ban+del+fab:oopsla19,kri+sum+wie:esop20}.
However PCMs \emph{alone} are insufficient for graph verification:
while they enable spatial reasoning over diverse notions of state,
effective graph verification requires a mechanism to
localize \emph{non-spatial} graph proofs.

We address this gap with the three novel contributions of this paper:
(1) leveraging PCM morphisms in graph reasoning, (2) utilizing
higher-order morphisms (i.e., combinators), leading to (3) new proofs
of Schorr-Waite and of another graph benchmark---the union-find data
structure~\cite{neelk:phd,wan+qin+moh+hob:oopsla19,cha+pot:jar17}. 
The latter is sketched in Section~\ref{sec:unionfind} to illustrate the
generality of the approach, but is detailed in the appendix.

\paragraph{\textbf{Leveraging morphisms in graph reasoning.}}
Given PCMs $A$ and $B$, the function $f : A \rightarrow B$ is a PCM
morphism if the following equations hold.%
\footnote{We only consider
morphisms that distribute unconditionally. Partial morphisms, which
distribute only if a certain condition on $\G_1$ and $\G_2$ is
satisfied, are useful and have been developed to an extent
by~\citet{far+nan+ban+del+fab:popl21}.}%
\begin{align}
f\ e_A\, &=\, e_B\\
f\ (\G_1 \join_A \G_2)\, &=\, f\ \G_1 \join_B f\ \G_2\label{eq:morphdistrib}
\end{align}
Focusing on equation (\ref{eq:morphdistrib}), it says that $f$
distributes over the join operation of the domain PCM and determines
the following two ways in which morphisms apply to verification.

First, morphisms mediate between PCMs, as the join operation of $A$ is
mapped to the join operation of $B$. We will use morphisms between
various PCMs in this paper, but an important example is to morph the
PCM of graphs to the PCM of propositions of separation logic
(equivalently, to the PCM of sets of heaps). Connecting graphs to sets
of heaps allows a suitable heap modification to be abstracted as a
modification of the graph, thereby elevating low-level separation
logic reasoning about pointers into higher-level reasoning about
graphs.
For example, if a node in a graph is represented as a heap storing the
node's adjacency list, mutating the pointers in this heap corresponds
at the graph level to modifying the node's edges.

Second, continuing with morphisms over graphs specifically, the value
of $f$ over a graph decomposed into components ($\G_1$ and $\G_2$ in
equation (\ref{eq:morphdistrib})) can be computed by applying $f$ to
$\G_1$ and $\G_2$ \emph{independently} and joining the results. During
verification, the left-to-right direction of (\ref{eq:morphdistrib})
localizes reasoning to the modified subgraph (say, $\G_1$), while the
right-to-left direction automatically propagates and reattaches
$\G_2$.
This allows morphisms to serve an analogous localizing role to framing
in separation logic, but with two distinctions. The obvious one is
that framing decomposes heaps, whereas morphisms decompose
mathematical graphs (spatial vs.~non-spatial reasoning). More
importantly, framing operates only at the top level of specifications,
whereas morphisms localize deeply inside a context:
equation~(\ref{eq:morphdistrib}) can rewrite within arbitrary context
$I(-)$ to transform $I(f (\G_1 \join \G_2))$ into $I(f \G_1 \join
f \G_2)$ and allow $f \G_1$ and $f \G_2$ to be manipulated
independently inside $I(-)$. 
The latter is the essential feature that 
differentiates non-spatial from spatial reasoning (where top-level framing suffices), and 
we shall use it to enable and streamline graph proofs.

As an illustration of morphisms, consider \emph{filtering} a graph to
obtain a subgraph containing only the nodes with a specific property.
Filtering differs from taking a subgraph in standard graph theory, in
that the filtered subgraph retains---as dangling---the edges into the
part avoided by the filter. This enables reattaching the two parts
later on, after either has been processed.  Clearly, to filter a
composite graph it suffices to filter the components and join the
results, making filtering a PCM morphism on partial graphs.

While not all useful graph abstractions are morphisms, those that
aren't can still usefully interact with morphisms. The canonical
example of such \emph{global} notions is \emph{reachability}
between nodes in a graph~\cite{kri+sum+wie:esop20}.
In verification, reachability often needs to be restricted to paths
traversing only specific (e.g., unmarked) nodes. While one could
implement this restriction by parametrizing reachability with
admissible node sets (e.g., as done by~\citet{leino:lpar10}), a simpler
solution in the presence of morphisms is to evaluate the global
reachability predicate on dynamically filtered subgraphs---effectively
composing reachability with filtering.  This compositional approach
scales naturally, enabling us to express complex specifications (like
Schorr-Waite's) by combining elementary graph operations from a
minimal core vocabulary.

\paragraph{\textbf{Higher-order combinators.}}
In the presence of morphisms, the specifications can utilize still
higher levels of abstraction. Our second contribution notes that
useful graph transformations are encodable as instances of
a \emph{higher-order combinator} (like map over sequences in
functional programming), itself a PCM morphism on partial graphs.
Morphisms thus relate different levels of the abstraction stack that
has heaps at the bottom, and localize the reasoning across all levels.

\newcolumntype{X}{>{\ $}c<{$\ }}

\paragraph{\textbf{New proofs of Schorr-Waite and union-find.}}
The main challenge in these proofs is establishing the invariance of
complex global properties, which we address precisely using contextual
localization, morphisms and combinators. Based on these abstractions,
we also develop a generic theory of partial graphs, which reduces
example-specific reasoning and yields conceptually simple and concise
proofs.  The entire development (graph theory, Schorr-Waite,
union-find) is mechanized in Hoare Type
Theory~\cite{nan+mor+bir:icfp06,nan+mor+shi+gov+bir:icfp08,nan+vaf+ber:popl10,htt:github},
a Coq library for separation logic reasoning via types, with the component
sizes summarized below~\cite{gra+nan+gry:icfp-artefact}.
\[
\begin{tabular}{X|X|X}
 & \textrm{lines of proof} & \begin{array}{c}\textrm{lines of specification}\\ \textrm{(definitions, annotations, notation, code)} \end{array}\\ \hline
\textrm{Graph-theory library} & 2070 & 1738 \\
\textrm{Schorr-Waite} & 110 & 111 \\
\textrm{Union-find} & 49 & 46 
\end{tabular}
\]

Finally, we note that working with morphisms imparts a distinct
algebraic 
character to specifications and proofs in this paper. While
our reasoning spans both spatial and non-spatial domains, the bulk of
the effort is on the non-spatial side, where decomposition is governed
by the PCM operation $\join$ rather than the separation logic's
characteristic spatial connective $*$. This shift manifests in
specifications: assertions use $*$ sparingly as decomposition occurs
primarily via $\join$ at the abstract graph level. It's this
prioritization of the PCM structure and morphisms over traditional
spatial decomposition that justifies the term ''algebraic'' in the
paper’s title.

We adopt the classical separation logic formulation
from~\citet{ohe+rey+yan:csl01}, Yang's
dissertation~\shortcite{yang2001} and~\citet{rey:lics02}, whose
inference rules (e.g., frame rule, consequence, etc.) are standard and
thus elided here. However, to enable the interleaving of heap-level
assertions with graph-theoretic transformations, we depart somewhat
from that work by allowing the assertions to embed unrestricted
mathematical formulas, including direct applications of morphisms.

This is an extended version of an ICFP 2025 paper~\cite{gra+nan+gry:icfp25}, which it extends with appendices.

\section{Background}\label{sec:overview}
To pinpoint the patterns of separation logic that inform our approach
to graphs, we consider the following program for computing the length
of a singly-linked list headed at the pointer $i$.
\[
  L \eqdef n \assign 0; j \assign i; \textrm{\textbf{while}}\ j \neq \nnull\
  \textrm{\textbf{do}}\  
       j \assign \deref{j.\textrm{next}}; 
       n \assign n+1
       \ \textrm{\textbf{end\ while}}
\]
To specify $L$, one must first describe how a singly-linked list is
laid out in the heap. For that, the proposition $\lseg\ \alpha\ (i,j)$
from Section~\ref{sec:intro} is defined below to hold of a heap that
stores a linked list segment between pointers $i$ and $j$, whose
contents is the mathematical sequence
$\alpha$. Following~\citet{rey:lics02}, we overload $\cat$ to denote
attaching an element to a head or a tail of a sequence, and
concatenating two sequences.
\begin{align*}
  \lseg\ \nil\ (i,j)\ &\eqdef\ \emp \wedge i = j\\
  \lseg\ (a\cons\alpha)\ (i,j)\ &\eqdef\ \exists k\ldot i \Mapsto a,k * \lseg\ \alpha\ (k,j)
\end{align*}
The definition is inductive in $\alpha$ and says: (1) The empty
sequence $\nil$ is stored in a heap between pointers $i$ and $j$ iff
the heap is empty and $i = j$; (2) The sequence $a\cons \alpha$ is
stored in the heap between pointers $i$ and $j$ iff $i$ points to a
list node storing $a$ in the value field, and some pointer $k$ in the
next field, so that $\alpha$ is then stored between $k$ and $j$ in a
heap segment \emph{disjoint} from $i$.

Here $\emp$, $\Mapsto$ and $*$ are the spatial propositional
connectives of separation logic: 
$\emp$ holds of a heap iff the heap is empty; 
$i \Mapsto v_0,\ldots, v_n$ holds of a heap that contains only the
pointers $i,\ldots,i+n$ storing the values $v_0,\ldots,v_n$,
respectively;
the separating conjunction $P * Q$ holds of a heap that can be divided
into two disjoint subheaps of which $P$ and $Q$ hold respectively.
Separation logic propositions form a PCM with $*$ as the
commutative/associative operation, and $\emp$ its unit.

\newcommand{\len}[1]{\#{#1}}
\newcommand{\llen}[1]{\#(#1)}

The following Hoare triple then applies the \emph{sequence length} function
$\len{(-)}$ to say that, upon $L$'s termination, the contents
$\alpha_0$ of the initial $\nnull$-terminated list is unchanged, but
its length $\len{\alpha_0}$ is deposited into the variable $n$.
\[
\spec{\lseg\ \alpha_0\ (i, \nnull)}\, L\, \spec{\lseg\ \alpha_0\ (i, \nnull) \wedge n = \len{\alpha_0}}
\]
We next outline the part of the proof for this Hoare triple that
examines the loop body in $L$, highlighting three key high-level
aspects that we later adapt to graphs.

\makeatletter
\newcounter{codelength}
\newcommand{\lineno}{\stepcounter{codelength}\textsc{\thecodelength}.\quad}
\newcommand{\linelb}[1]{{\refstepcounter{codelength}\ltx@label{#1}\textsc{\thecodelength}.\quad}}
\makeatother

\begin{figure}[t]
\begin{align*}
\linelb{lp-ln1} & \spec{\exists \alpha\ \beta\ldot \lseg\ \alpha\ (i, j) * \lseg\ \beta\ (j, \nnull) \wedge n = \len{\alpha} \wedge \alpha_0 = \alpha \cat \beta \wedge j \neq \nnull}\\
\linelb{lp-ln2} & \spec{\exists \alpha\ b\ \beta'\ldot \lseg\ \alpha\ (i, j) * \lseg\ (b\cons\beta')\ (j, \nnull) \wedge n = \len{\alpha} \wedge \alpha_0 = \alpha \cat (b \cons \beta')}\\
\linelb{lp-ln3} & \spec{\exists \alpha\ b\ \beta'\ k\ldot \lseg\ \alpha\ (i, j) * j \Mapsto b, k * \lseg\ \beta'\ (k, \nnull) \wedge n = \len{\alpha} \wedge \alpha_0 = \alpha \cat (b \cons \beta')}\\
\linelb{lp-ln4} &    j \assign \deref{j.\textrm{next}};\\
\linelb{lp-ln5} & \spec{\exists \alpha\ b\ \beta'\ j'\ldot \lseg\ \alpha\ (i, j') * j' \Mapsto b, j * \lseg\ \beta'\ (j, \nnull) \wedge n = \len{\alpha} \wedge \alpha_0 = \alpha \cat (b \cons \beta')}\\
\linelb{lp-ln6} & \spec{\exists \alpha\ b\ \beta'\ldot \lseg\ (\alpha\cat\sngl{b})\ (i, j) * \lseg\ \beta'\ (j, \nnull) \wedge n = \len{\alpha} \wedge \alpha_0 = (\alpha\cat\sngl{b}) \cat \beta'}\\
\linelb{lp-ln7} &    n \assign n+1;\\
\linelb{lp-ln10} & \spec{\exists \alpha\ b\ \beta'\ldot \lseg\ (\alpha\cat\sngl{b})\ (i, j) * \lseg\ \beta'\ (j, \nnull) \wedge n = \len{\alpha} + 1 \wedge \alpha_0 = (\alpha\cat\sngl{b}) \cat \beta'}\\
\linelb{lp-ln11} & \spec{\exists \alpha\ b\ \beta'\ldot \lseg\ (\alpha\cat\sngl{b})\ (i, j) * \lseg\ \beta'\ (j, \nnull) \wedge n = \llen{\alpha\cat\sngl{b}} \wedge \alpha_0 = (\alpha\cat\sngl{b}) \cat \beta'}\\
\linelb{lp-ln12} & \spec{\exists \alpha'\ \beta'\ldot \lseg\ \alpha'\ (i, j) * \lseg\ \beta'\ (j, \nnull) \wedge n = \len{\alpha'} \wedge \alpha_0 = \alpha' \cat \beta'}
\end{align*}\vspace{-6mm}
\caption{Proof outline that the loop body of the length-computing program $L$ preserves the loop invariant $I$.}\label{fig:length}
\end{figure}

\paragraph*{\textbf{Dangling pointers.}}
The first aspect is that separation logic inherently relies
on \emph{dangling pointers} to capture computations' intermediate
states. For example, the loop invariant for $L$ is
\[
I \eqdef \exists \alpha\ \beta\ldot \lseg\ \alpha\ (i, j) * \lseg\ \beta\ (j, \nnull) \wedge n = \len{\alpha} \wedge \alpha_0 = \alpha \cat \beta
\]
stating that the original sequence $\alpha_0$ divides into $\alpha$
(processed subsequence) and $\beta$ (remaining subsequence), with $n$
tracking $\alpha$'s length (progress computed so far). Importantly,
pointer $j$ connects $\alpha$'s tail to $\beta$'s head, making it
dangling for $\alpha$ since it references memory outside of $\alpha$'s
heap.

\paragraph*{\textbf{Spatial distributivity.}}
The second aspect is that the $\lseg$ predicate distributes over
$\cat$, in the sense of the following equivalence, already illustrated
in Fig.~\ref{fig:list-apart}.
\begin{align}
\lseg\ (\alpha\cat\beta)\ (i, j) &\iff \exists k\ldot \lseg\ \alpha\ (i, k) * \lseg\ \beta\ (k, j) \label{eq:distrib}
\end{align}
Equation~(\ref{eq:distrib}) is clearly related to the distributivity
of morphisms~(\ref{eq:morphdistrib}), even though sequences don't form
a PCM as concatenation isn't commutative. Nonetheless, the equation
underscores that distributivity, in one form or another, is a crucial
notion in separation logic.
It's typical use in separation logic is to transfer ownership of
pointers between heaps; in the case of $L$, to redistribute $\lseg$ so
that the currently counted element, pointed to by $j$, is moved in
assertions from $\beta$ to $\alpha$.

To illustrate, we review the proof in Fig~\ref{fig:length} that $I$ is
the loop invariant for $L$. Line~\ref{lp-ln1} conjoins $I$ with the
loop condition $j \neq \nnull$, to indicate that the execution is
within the loop. Line~\ref{lp-ln2} derives that $\beta$ is non-empty,
hence of the form $b \join \beta'$, as otherwise $j$ would have been
$\nnull$ by the definition of $\lseg$. Line~\ref{lp-ln3} shows the
first use of distributivity to \emph{detach} $b$ from the head of
$\beta$.  Line~\ref{lp-ln4} mutates $j$ into
$\deref{j.\textrm{next}}$, which is reflected in line~\ref{lp-ln5},
where a fresh variable $j'$ names the value of $j$ before the
mutation. This line derives by the standard inference rules for
pointer mutation and framing~\cite{rey:lics02}. Line~\ref{lp-ln6}
shows the second use of distributivity to \emph{attach} $b$ to the
tail of $\alpha$; it also reassociates the concatenations in
$\alpha_0$ correspondingly. Line~\ref{lp-ln7} increments $n$, which is
reflected in lines~\ref{lp-ln10} and~\ref{lp-ln11}. Finally,
line~\ref{lp-ln12} re-establishes $I$ for the updated values of $j$
and $n$.

\paragraph*{\textbf{Non-spatial distributivity.}}
The third aspect is that the distribution over $\join$ is important
for the non-spatial parts of the proof as well. In particular,
transitioning from line~\ref{lp-ln10} to line~\ref{lp-ln11} requires
that the length function distributes over $\join$; specifically, that
$\llen{\alpha \join b} = \len{\alpha}+1$, or more generally, that
\[\llen{\alpha \join \beta} = \len{\alpha} + \len{\beta}\]
to associate the new value of $n$ to the length of the new processed
sequence $\alpha'$. The transformation occurs in the context $n = (-)$
in lines~\ref{lp-ln10}--\ref{lp-ln11}, illustrating a simple case of
contextual localization.

While standard treatments of separation logic rely on non-spatial
distributivity only implicitly, graph verification requires that it be
made explicit and central. This is because graph specifications
commonly compose morphisms, in turn making contextual localization
essential for effective proofs. By bringing these algebraic
foundations---distributivity and morphisms---to the foreground, our
approach allows them to be systematically leveraged in verification.

\section{Partial Graphs}\label{sec:partialgraphs}
\paragraph*{\textbf{Dangling edges.}}
The example in Section~\ref{sec:overview} shows how
dangling \emph{pointers} link disjoint list segments to specify
intermediate computation states.
Similarly, specifying graph algorithms requires dangling \emph{edges}
to connect disjoint parts of a graph. However, unlike lists, which
have a single dangling pointer, graphs generally have multiple
dangling edges bridging the divide (Fig~\ref{fig:graph-apart}).
To parallel Section~\ref{sec:overview}, where the dangling pointer $j$
was treated as a parameter of $\lseg$, one might consider grouping the
dangling edges into a set that parametrizes the predicate $\gseg$, the
graph analogue to $\lseg$.
\citet{bor+cal+ohe:space04} explored this approach, but it
resulted in an unsatisfactory definition, as the resulting predicate
relied on a program-specific traversal order in addition to the graph
itself.

We instead \emph{embed} dangling edges directly into the graph
representation, yielding the \emph{partial graphs} informally
introduced in Section~\ref{sec:intro}.
Specifically, we choose a graph representation where each node is
associated with its adjacency list, but we allow for the possibility
that a node in the adjacency list \emph{need not be present in the
graph itself}. More formally, a partial graph (or simply graph) of
type $T$ is a \emph{partial finite map} on nodes (isomorphic to
natural numbers), that, if defined on a node $x$, maps $x$ to a pair
consisting of a value of type $T$ ($x$'s contents, value, or mark) and
the sequence of nodes adjacent to $x$ ($x$'s immediate successors,
adjacency list/sequence, children), with the proviso that the map is
undefined on the node $\nnull$ (i.e. $0$).
\[
\partgraph\ T \eqdef \textit{node} \rightharpoonup_{\textit{fin}} T \times \textit{seq}\ \textit{node} 
\]

The \emph{domain} of a map is the finite subset of the type
$\textit{node}$ on which the map is defined. A graph contains an edge
from $x$ to $y$ if $y$ appears among the children of $x$ in the
map. An edge from $x$ to $y$ is dangling if $y$ isn't in the map's
domain.
We write $\bingraph\ T$ for the subtype of $\partgraph\ T$ where the
adjacency list of each node has exactly two elements (left/right
child), with the element set to $\nnull$ if the corresponding child
doesn't exist.
Notation $x \mapsto (v, \alpha)$ denotes the \emph{singleton graph},
comprising just the node $x$, with mark $v$ and adjacency list
$\alpha$. The notation simplifies to $x \mapsto \alpha$ when $T$ is
the unit type, as then the mark $v$ isn't important.
\begin{figure}[t]
  \centering
  \includegraphics[width=0.4\textwidth]{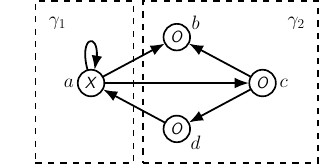}
  \caption{Graph decomposition. Node $a$ has mark $\killed$, and nodes
  $b$, $c$, $d$ have mark $\notM$. Graph $\G_1$ has dangling edges
  from $a$ to $b$ and $c$; graph $\G_2$ has a dangling edge from $d$
  to $a$.}  \label{fig:graph-apart}
\end{figure} 

For example, the graph $\G_1$ in Fig.~\ref{fig:graph-apart} consists
of a single node $a$ with contents $\killed$, and adjacency list $[a,
b, c]$, capturing that $a$ has edges to itself, and to nodes $b$ and
$c$. The latter edges are dangling, as $b$ and $c$ are outside of
$\G_1$. In our notation, $\G_1 = a \mapsto (\killed,[a, b, c])$.

\newcommand{\nodes}{\mathit{nodes}} 
\newcommand{\adj}{\textit{sinks}}

\paragraph*{\textbf{Spatial distributivity.}}
The definition also directly leads to a notion of graph
(de)composition. To see this, notice that finite maps admit the
operation of disjoint union which combines the maps $\G_1$ and $\G_2$,
but only if $\G_1$ and $\G_2$ have disjoint domains.
\[
\G_1 \join \G_2 \eqdef \left\{\begin{array}[c]{ll}
 \G_1\cup \G_2 & \mbox{if $\textit{dom}\ {\G_1} \cap \textit{dom}\ {\G_2} = \emptyset$}\\
 \mbox{undefined} & \mbox{otherwise}
\end{array}\right.
\]
The operation is denoted $\join$ to draw the analogy with
consing/concatenating in the case of sequences, and also because
$\join$ is partial, commutative and associative, meaning that graphs
with $\join$ form a PCM. The unit $\emptymap$ is the empty graph,
i.e. the everywhere-undefined map.
%

For example, the full graph in Fig.~\ref{fig:graph-apart} is a
composition $\G_1 \join \G_2$, where the subgraph $\G_1$ has already
been described, and $\G_2 = b \mapsto (\notM,[])\,\join\,c \mapsto
(\notM,[b,d])\,\join\, d \mapsto (\notM,[a])$.
The representation captures, among other properties, that $\G_2$ has a
node $d$ with a dangling edge to $a$. However, the composite graph
$\G_1\join \G_2$ has no edges dangling.

Similarly to $\lseg$, we can now define the predicate $\gseg$ that
defines how a graph is laid out in the heap.
This can be achieved in several different ways, optimizing for the
structure of the graphs of interest. We will conflate nodes with
pointers and, for general partial graphs, lay out the adjacency list
of each node as a linked list in the heap.
\begin{align*}
  \gseg\, \emptymap\ &\eqdef \emp\\
  \gseg\, (x \mapsto (v, \alpha) \join \G)\ &\eqdef 
   \exists i\ldot x \Mapsto v, i * \lseg\ \alpha\ i\ \nnull * \gseg\, \G
\end{align*}
For binary graphs, however, a simpler representation avoids linking
altogether. Because adjacency lists have exactly two children, a node
$x$ can be represented as a heap with three cells: the pointer $x$
storing the mark $v$, with the \emph{subsequent} pointers $x+1$ and
$x+2$ pointing to $x$'s children.
\begin{align*}
  \gseg\, \emptymap\ &\eqdef \emp\\
  \gseg\, (x \mapsto (v, [l, r]) \join \G)\ &\eqdef x \Mapsto v, l, r * \gseg\, \G
\end{align*}
The following equation~(\ref{dist}) characterizes \emph{both} definitions of
$\gseg$ as PCM morphisms from partial graphs to separation logic
propositions. This lifts heap framing and separation logic reasoning
to graphs, much like $\lseg$ did for sequences with
equation~(\ref{eq:distrib}).
\begin{align}
\gseg\, (\G_1 \join \G_2) \iff \gseg\, \G_1 * \gseg\, \G_2  \label{dist}
\end{align}

To streamline the presentation, in the remainder of the paper we focus
on binary graphs. The simpler definition of $\gseg$ will let us
concentrate on non-spatial reasoning, where partial graphs interact
with PCMs beyond separation logic propositions, and where contextual
localization requires the use of morphisms. We consider the basics of
this interaction next.

\newcommand{\filter}[1]{/_{\!{#1}}} 
\newcommand{\cfilter}[1]{/_{\!{#1}}} 
\newcommand{\rreach}{\textit{reach}}
\newcommand{\reachone}[1]{\rreach\ #1}
\newcommand{\reachtwo}[2]{\rreach\ #1\ #2}
\newcommand{\reachthree}[3]{\reachvia\ #1\ #2\ #3}
\newcommand{\reachvia}{\textit{reach-by}}
\newcommand{\sreach}{\textit{stack-reach}}

\paragraph*{\textbf{Non-spatial distributivity.}}
We first introduce some common notation and write: $\nodes\ \G$
instead of $\mathit{dom}\ \G$ for the set of nodes on which $\G$ is
defined as a finite map, to emphasize that $\G$ is a graph;
$\nodes_0\ \G$ for the disjoint union $\{\nnull\}\,\dotcup\,\nodes\ \G$; and $\Gm\,x$ and
$\Ga\,x$, respectively, for the contents and the adjacency list
of the node $x$, so that
$\G\,x = (\Gm\,x, \Ga\,x)$.

Given the graph $\G$ and node $x$, we define $\G{\setminus} x$ to be
the graph that removes $x$ and its outgoing edges from $\G$, but keep
the edges sinking into $x$ as newly dangling. We can characterize
$\G{\setminus} x$ as a function as follows: $\G{\setminus} x$ agrees
with $\G$ on all inputs except possibly $x$, on which $\G{\setminus}
x$ is undefined. 

We can now state the following two important equalities that expand
(alt.: unfold) the graph $\G$.
\begin{align}
    \G &= (x \mapsto \G\ x) \join (\G{\setminus} x) \qquad \mbox{if $x \in \nodes\ \G$}\label{expand1}\\[2mm]
    \G &= \textstyle\bigbullet\limits_{x\,\in\,\mathit{nodes}\,\G} x \mapsto \G\ x \label{expand2}
\end{align}
Equality~(\ref{expand1}) expands the graph $\G$ around a specific node
$x$, so that $x$'s contents and adjacency list can be considered
separately from the rest of the graph. This is analogous to how a
pointer was separated from $\lseg$ in Section~\ref{sec:overview}, so
that it could be transferred from one sequence to another.
The equality~(\ref{expand2}) iterates the expansion to characterize
the graph in terms of the nodes.

We also require the following functions and combinators over graphs.
\begin{alignat}{3}
  {\G\filter{S}} \eqdef & \!\!\textstyle\bigbullet\limits_{x\,\in\,S\,\cap\,\mathit{nodes}\,\G} && \hspace{-5mm}x \mapsto \G\ x & (\textit{filter}) \label{filterm-def} \\[4mm]
  \erasure{\G} \eqdef &\ \textstyle\bigbullet\limits_{x\,\in\,\mathit{nodes}\,\G}\ && \hspace{-5mm}x \mapsto \Ga\ x & (\textit{erasure}) \label{erasure-def} \\[4mm]
  \textit{map}\ f\ \G \eqdef &\ \textstyle\bigbullet\limits_{x\,\in\,\mathit{nodes}\,\G}\ && \hspace{-5mm}x \mapsto f\ x\ (\G\ x) \qquad& (\textit{map}) \label{map-def} \\[4mm]
  {\G\filter{v_1,\ldots, v_n}} \eqdef &\ {\G\filter{\Gm^{-1}\{v_1,\ldots, v_m\}}} &&& (\textit{filter by contents}) \label{filterc-def}
\end{alignat}
Filtering takes a subset of the nodes of $\G$ that are also in the set
$S$, without modifying the nodes' contents or adjacency lists. Erasure
replaces the contents of each node with the singleton element of unit
type, thus eliding the contents from the notation. In particular
$\erasure{-} : \partgraph\ T \rightarrow \partgraph\ \textit{unit}$.
Map modifies the contents and the adjacency list of each node
according to the mapped function. If $f : node \rightarrow T_1 \times
seq\ node \rightarrow T_2 \times seq\ node$ then $\textit{map}\ f
: \partgraph\ T_1 \rightarrow \partgraph\ T_2$.%
\footnote{When mapping over binary graphs, we allow $f : node \rightarrow T_1 \times
(node\times node) \rightarrow T_2 \times (node \times node)$, as
adjacency lists have exactly two elements. When $T_2 = \mathit{unit}$,
we elide $T_2$ and allow $f : node \rightarrow T_1 \times
(node \times node) \rightarrow node \times node$.}
Filtering by contents selects the nodes whose contents is one of
$v_1, \ldots, v_n$, by (plain) filtering over $\Gm^{-1}\{v_1, \ldots,
v_n\}$. The latter is the inverse image of $\Gm$; thus, the set of
nodes that $\Gm$ maps into $\{v_1, \ldots, v_n\}$.

Sinks of a graph is the set of nodes in the range of $\G$ (i.e., nodes
that possess an incoming edge).
\begin{equation}\label{sinks-def}
  \adj\ \G \eqdef \bigcup\limits_{\nodes\,\G} \Ga
\end{equation}
We then say that $\goodg\ \G$ holds iff $\G$ contains no dangling
edges. A plain (i.e., standard, non-partial) graph is a partial graph
that is closed.
\begin{align}
  \goodg\ \G 
  \eqdef &\ \adj\ \G \subseteq \nodes_0\ \G \label{closed-def}
\end{align}

The final graph primitive in our vocabulary is $\reachtwo{\G}{x}$,
which computes the set of nodes in $\G$ reachable from the node
$x$. The definition is recursive, unioning $x$ with the nodes
reachable from every child of $x$ via a path that avoids $x$. The
definition is well-founded because the (finite) set of graph's nodes
decreases with each recursive call, and thus eventually becomes empty.
\begin{equation}\label{def:reachvia}
  \reachtwo{\G}{x} \eqdef
  \left\{\begin{array}{ll}
    \{x\} \cup \bigcup\limits_{\Ga\,x} \reachone{(\G{\setminus} x)}\ & \mbox{if $x \in \nodes\ \G$} \\
    \emptyset & \mbox{otherwise}
  \end{array}\right.
\end{equation}

\newcolumntype{L}{>{$}l<{$\ }}
\newcolumntype{C}{>{\ $}c<{$\ }}
\begin{figure}[t]
  \rowcolors{2}{gray!7}{white}
  \begin{center}
  \begin{tabular}{L!{\vrule width 1.1pt}C|C|C}
   \rowcolor{gray!15}
    & \G_1 & \G_2 & \G_1\join\G_2 \\ \hline
    (-) & a \mapsto (\killed, [a, b, c]) & 
      \begin{array}[l]{rl}
         &\hphantom{\join}\ b \mapsto (\notM, [])\\
         &\join\ c \mapsto (\notM, [b,d])\\
         &\join\ d \mapsto (\notM, [a])
      \end{array} &
      \begin{array}[l]{rl}
         &\hphantom{\join}\ a \mapsto (\killed, [a, b, c])\\
         &\join\ b \mapsto (\notM, [])\\
         &\join\ c \mapsto (\notM, [b, d])\\
         &\join\ d \mapsto (\notM, [a])
       \end{array} \\
     \nodes\ (-) & \{a\}  & \{b,c,d\} &  \{a,b,c,d\}  \\ 
    {(-)\filter{\{b\}}} & e & b \mapsto (\notM,[]) &  b \mapsto (\notM,[])\\  
  {(-)\filter{X}} & a \mapsto (\killed,[a,b,c])& e & a \mapsto (\killed,[a,b,c])\\ 
    \erasure{-}
    &\begin{array}[l]{rl}
      &a \mapsto [a,b,c]
    \end{array} &
    \begin{array}[l]{rl}
      &\hphantom{\join}\ b \mapsto [] \\
      &\join\ c \mapsto [b,d] \\
      &\join\ d \mapsto [a] 
    \end{array} &
    \begin{array}[l]{rl}
      &\hphantom{\join}\ a \mapsto [a,b,c]  \\
      &\join\ b \mapsto []  \\
      &\join\ c \mapsto [b,d] \\
      &\join\ d \mapsto [a] 
    \end{array}\\ 
    \textit{map}\ f\ (-)
    &\begin{array}[l]{rl}
      &a \mapsto (\notM,[b,c])
    \end{array} &
    \begin{array}[l]{rl}
      &\hphantom{\join}\ b \mapsto (\killed,[]) \\
      &\join\ c \mapsto (\killed,[d]) \\
      &\join\ d \mapsto (\killed,[]) 
    \end{array} &
    \begin{array}[l]{rl}
      &\hphantom{\join}\ a \mapsto (\notM,[b,c])  \\
      &\join\ b \mapsto (\killed,[])  \\
      &\join\ c \mapsto (\killed,[d]) \\
      &\join\ d \mapsto (\killed,[]) 
    \end{array}\\ 
    \adj\ (-) & \{a, b, c\} & \{a, b, d\} & \{a,b,c,d\} \\  \hline
  \goodg\ (-) & \textit{false} & \textit{false} & \textit{true} \\ 
    \reachtwo{(-)}{a} & \{a\} & \emptyset & \{a,b,c,d\} 
\end{tabular} 
\end{center}
  \caption{Action of graph abstractions on the graph
  from Fig.~\ref{fig:graph-apart}. The morphisms are illustrated at
  the top, and the non-morphic (aka.~global) properties at the
  bottom. In the top part, the mapped function $f$ exchanges $\notM$
  and $\killed$ and takes the tail of the adjacency list of each node:
  $f\ x\ (v, \alpha) \eqdef (\overline{v}, \textit{tail}\ \alpha)$,
  where $\overline{\notM} = \killed$ and $\overline{\killed}
  = \notM$.} \label{fig:graph-apart-ex}
\end{figure}

Fig.~\ref{fig:graph-apart-ex} illustrates the definitions on the graph
$\G_1 \join \G_2$ from Fig.~\ref{fig:graph-apart}.  Erasure,
$\textit{map}$, both filters, $\nodes$ and $\adj$ distribute over
$\join$ (and are actually morphisms), while $\goodg$ and $\rreach$
don't.

\begin{lemma}[Morphisms]\label{lemma:distrib}
Functions $\nodes$, $(-)\filter{S}$, $(-)\filter{v_i}$, $\erasure{-}$,
$\textit{map}\ f$, and $\adj$ are morphisms from the PCM of graphs, to
an appropriate target PCM (sets with disjoint union for $\nodes$,
graphs for $(-)\filter{S}$, $(-)\filter{v_i}$, $\erasure{-}$,
$\textit{map}\ f$, and sets with plain union for $\adj$).
\begin{enumerate}[ref=\ref{lemma:distrib} (\arabic*)]
\item $\nodes\ (\G_1\join\G_2) = \nodes\ \G_1\ \dotcup\ \nodes\ \G_2$ and $\nodes\ e = \emptyset$ \label{nodes-dist}
\item ${(\G_1\join\G_2)\filter{S}} = {\G_1\filter{S}} \join  
{\G_2\filter{S}}$ and ${e\filter{S}} = e$ \label{filterm-dist}
\item ${(\G_1\join\G_2)\filter{v_1,\ldots, v_n}} = {\G_1\filter{v_1,\ldots, v_n}} \join
{\G_2\filter{v_1,\ldots,v_n}}$ and ${e\filter{v_1,\ldots, v_n}} = e$ \label{filterc-dist}
\item $\erasure{\G_1\join\G_2} = \erasure{\G_1} \join
\erasure{\G_2}$ and $\erasure{e}= e$ \label{erasure-dist}
\item $\textit{map}\ f\ (\G_1\join\G_2) = \textit{map}\ f\ \G_1 \join
\textit{map}\ f\ \G_2$ and $\textit{map}\ f\ e = e$ \label{map-dist}
\item $\adj\ (\G_1\join\G_2) = \adj\ \G_1 \cup \adj\ \G_2$ and $\adj\ e = \emptyset$ \label{sinks-dist}
\end{enumerate}
\end{lemma}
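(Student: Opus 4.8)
The plan is to reduce all six clauses to a single structural fact---clause~\ref{nodes-dist}---together with two elementary properties of the iterated-join notation $\bigbullet$.

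First I would prove \ref{nodes-dist}, since the graph-valued clauses rest on it. If $\G_1\join\G_2$ is defined then $\mathit{dom}\,\G_1\cap\mathit{dom}\,\G_2=\emptyset$ and $\G_1\join\G_2=\G_1\cup\G_2$, so $\nodes(\G_1\join\G_2)=\mathit{dom}\,\G_1\cup\mathit{dom}\,\G_2$, a \emph{disjoint} union, i.e.\ precisely $\nodes\,\G_1\dotcup\nodes\,\G_2$ in the PCM of sets under disjoint union (in particular the right side is defined); conversely, definedness of $\nodes\,\G_1\dotcup\nodes\,\G_2$ forces disjoint domains, hence definedness of $\G_1\join\G_2$. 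The unit case is $\nodes\,e=\mathit{dom}\,e=\emptyset$. I also record two facts used throughout: \emph{locality of lookup}---if $\G_1\join\G_2$ is defined and $x\in\nodes\,\G_i$ then $(\G_1\join\G_2)\,x=\G_i\,x$, immediate from $\G_1\join\G_2=\G_1\cup\G_2$; and \emph{splitting}---$\bigbullet_{x\in S_1\dotcup S_2}h\,x=\bigl(\bigbullet_{x\in S_1}h\,x\bigr)\join\bigl(\bigbullet_{x\in S_2}h\,x\bigr)$ whenever the joins involved are all defined, by commutativity and associativity of $\join$; the analogous identity holds for $\bigcup$ over arbitrary (not necessarily disjoint) unions.

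With these in hand, the remaining five functions fall out uniformly. Each of $\erasure{-}$, $\textit{map}\,f$, $(-)\filter{S}$, $(-)\filter{v_i}$ has the shape $\Phi(\G)=\bigbullet_{x\in\sigma(\G)}x\mapsto\phi(x,\G\,x)$, where $\sigma(\G)$ is $\nodes\,\G$ (erasure, map), or $S\cap\nodes\,\G$ (filter by $S$), or the set of $x\in\nodes\,\G$ whose contents in $\G$ lies in $\{v_1,\ldots,v_n\}$ (filter by contents); in every case, membership of $x$ in $\sigma(\G)$ depends on $\G$ only through $\nodes\,\G$ and the local value $\G\,x$. Combining \ref{nodes-dist} with locality of lookup gives $\sigma(\G_1\join\G_2)=\sigma(\G_1)\dotcup\sigma(\G_2)$, and locality also gives $\phi(x,(\G_1\join\G_2)\,x)=\phi(x,\G_i\,x)$ for $x\in\nodes\,\G_i$; the splitting identity then yields $\Phi(\G_1\join\G_2)=\Phi(\G_1)\join\Phi(\G_2)$. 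The join on the right is defined because $\mathit{dom}(\Phi(\G_i))\subseteq\nodes\,\G_i$ and these domains are disjoint, and the iterated joins are themselves defined because summands of the form $x\mapsto(\ldots)$ at distinct nodes have disjoint singleton domains. The unit case holds since $\sigma(e)=\emptyset$, so $\Phi(e)$ is the empty iterated join, namely $e$. The case of $\adj$ is the same, with $\bigbullet$ replaced by $\bigcup$, $\phi(x,\G\,x)$ by the adjacency list of $x$ in $\G$, and $\sigma(\G)=\nodes\,\G$; here the target operation is \emph{plain} union, not disjoint union, since $\adj\,\G_1$ and $\adj\,\G_2$ can overlap (for instance, they share $a$ and $b$ in Fig.~\ref{fig:graph-apart-ex}).

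I expect no deep obstacle; the only care needed concerns the bookkeeping tied to the \emph{partiality} of $\join$. Since every displayed equality is a PCM equality (both sides defined and equal, or both undefined), each clause needs the small definedness argument sketched above---domain inclusion for the graph-valued morphisms, disjointness of node sets for $\nodes$, and well-definedness of iterated joins of distinct singletons. The one mildly fiddly clause is $(-)\filter{v_i}$, whose index set depends on the graph's contents: one must observe explicitly that this dependence is local (through $\G\,x$), so that the contents-preimage of $\{v_1,\ldots,v_n\}$ in $\G_1\join\G_2$ splits as the disjoint union of the corresponding preimages in $\G_1$ and $\G_2$, after which the uniform argument---equivalently, composition with the already-proved clause for $(-)\filter{S}$---applies.
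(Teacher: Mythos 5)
Your proposal is correct and follows essentially the same route as the paper, which elides the details (deferring to the Coq library) but gives exactly this justification: the combinators iterate a node-local transformation over the set of nodes, so the decomposition of $\nodes$ under $\dotcup$ plus locality of lookup and splitting of the iterated join yields distributivity. Your extra care with definedness of the partial joins and the remark that $\adj$ targets plain (possibly overlapping) union are consistent with the paper's statement and fill in the bookkeeping it leaves implicit.
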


\begin{lemma}[Filtering]\label{lemma:filter}
\quad
\begin{enumerate}[ref=\ref{lemma:filter} (\arabic*)]
\item ${\G\filter{S_1\, \dotcup\, S_2}} = {\G\filter{S_1}} \join {\G\filter{S_2}}$ and ${\G\filter{\emptyset}} = \emptymap$
\item ${\G\filter{v_1, \ldots, v_m, w_1,\ldots, w_n}} = {\G\filter{v_1,\ldots, v_m}} \join {\G\filter{w_1,\ldots, w_m}}$, for disjoint $\{v_i\}$, $\{w_j\}$\label{filterdisj}
\item ${\G\filter{S_1 \cap S_2}} = {\G\filter{S_1}\filter{S_2}}$
\item ${\G\filter{v_1, \ldots, v_m}\filter{w_1,\ldots, w_n}} = \emptymap$, for disjoint $\{v_i\}$, $\{w_j\}$\label{filter-disj}
\end{enumerate}
\end{lemma}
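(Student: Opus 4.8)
The plan is to first observe that $\G\filter{S}$ is nothing but the restriction of $\G$, viewed as a finite map, to the node set $S\cap\nodes\ \G$. This is immediate from definition~(\ref{filterm-def}) together with the unfolding~(\ref{expand2}): the summands $x\mapsto\G\ x$ ranging over distinct nodes $x$ have pairwise disjoint (singleton) domains, so every intermediate $\join$ inside the iterated operator $\bigbullet$ is defined and the result is independent of the enumeration order. From this characterization I read off two facts used repeatedly: $\nodes\ (\G\filter{S}) = S\cap\nodes\ \G$, and $(\G\filter{S})\ x = \G\ x$ for every $x\in S\cap\nodes\ \G$ (filtering alters neither contents nor adjacency lists); in particular the value component of $\G\filter{S}$ is $\Gm$ restricted to $S\cap\nodes\ \G$.

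For Part 1, since $(S_1\dotcup S_2)\cap\nodes\ \G = (S_1\cap\nodes\ \G)\dotcup(S_2\cap\nodes\ \G)$ and the two pieces stay disjoint, splitting the index set of $\bigbullet$ and using commutativity and associativity of $\join$ turns $\G\filter{S_1\dotcup S_2}$ into $\G\filter{S_1}\join\G\filter{S_2}$; the defining $\join$ is automatically defined because the two subgraphs have disjoint domains. The unit case $\G\filter{\emptyset}=\emptymap$ is the empty $\bigbullet$. For Part 3, unfolding the outer filter gives $(\G\filter{S_1})\filter{S_2}=\bigbullet_{x\in S_2\cap\nodes\ (\G\filter{S_1})} x\mapsto(\G\filter{S_1})\ x$; by the two facts above the index set is $S_2\cap S_1\cap\nodes\ \G=(S_1\cap S_2)\cap\nodes\ \G$ and the summands are $x\mapsto\G\ x$, so this equals $\G\filter{S_1\cap S_2}$.

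Parts 2 and 4 add one ingredient: $\Gm$ is a \emph{function}, so preimages of disjoint mark sets are disjoint, $\Gm^{-1}\{v_1,\ldots,v_m\}\cap\Gm^{-1}\{w_1,\ldots,w_n\}=\emptyset$ when $\{v_i\}\cap\{w_j\}=\emptyset$, and $\Gm^{-1}\{v_1,\ldots,v_m,w_1,\ldots,w_n\}=\Gm^{-1}\{v_i\}\dotcup\Gm^{-1}\{w_j\}$. For Part 2, rewrite the contents-filter via~(\ref{filterc-def}) as the node-set filter over this disjoint union, then apply Part 1. For Part 4, put $\G'=\G\filter{v_1,\ldots,v_m}=\G\filter{\Gm^{-1}\{v_i\}}$; since $\G'$'s value component is $\Gm$ restricted to $\nodes\ \G'$, the contents-filter $\G'\filter{w_1,\ldots,w_n}$ is $\G'$ restricted to $\Gm^{-1}\{w_j\}\cap\nodes\ \G'$, which by Part 3 equals $\G\filter{\Gm^{-1}\{v_i\}\cap\Gm^{-1}\{w_j\}\cap\nodes\ \G'}$; the index set is empty, so the result is $\emptymap$ by the unit case of Part 1. (More directly: a node survives $\G\filter{v_i}\filter{w_j}$ iff its mark lies in both $\{v_i\}$ and $\{w_j\}$, which is impossible.)

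I expect no genuine difficulty here. The only care required is bookkeeping: discharging the (always satisfied) definedness side conditions of the iterated join $\bigbullet$, and remembering that in a nested contents-filter the outer filter is taken relative to the already-filtered graph, whose value map is a restriction of $\Gm$. Both are handled once and for all by a general lemma stating that $\bigbullet$ of singleton graphs over a finite node set is well-defined, order-independent, and splits along partitions of the index set --- the same lemma that underlies Lemma~\ref{lemma:distrib}.
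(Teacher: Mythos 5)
Your proposal is correct, and it follows exactly the route the paper has in mind: the paper elides this proof to its Coq library, justifying it only by the remark that the combinators iterate a node-local transformation over a set of nodes, which is precisely the characterization of $\G\filter{S}$ as the restriction of the finite map to $S\cap\nodes\ \G$ together with partition-splitting of the iterated join that you spell out. Your extra care about the nested contents-filter being taken relative to the already-restricted value map (and the disjointness of preimages under $\Gm$) is the right bookkeeping and introduces no deviation from the intended argument.
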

\begin{lemma}[Mapping]\label{lemma:mapeq}
$\textit{map}\ f_1\ \G = \textit{map}\ f_2\ \G$ iff\ \ 
$\forall x \in \nodes\ \G\ldot f_1\ x\ (\G\ x) = f_2\ x\ (\G\ x)$.
\end{lemma}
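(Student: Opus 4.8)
The plan is to reduce both directions to the extensionality principle for finite maps: two partial graphs are equal iff they have the same node set and, viewed as finite maps, agree on every node in that set. First I would record that $\nodes\ (\textit{map}\ f\ \G) = \nodes\ \G$ for any $f$, directly from definition~(\ref{map-def}): the indexed join $\bigbullet_{x\in\nodes\ \G} x \mapsto f\ x\ (\G\ x)$ ranges over exactly the nodes of $\G$, its summands have pairwise disjoint domains (distinct singleton graphs), so the join is everywhere defined and its domain is the union of the singleton domains, namely $\nodes\ \G$. Consequently $\textit{map}\ f_1\ \G$ and $\textit{map}\ f_2\ \G$ have the same node set $\nodes\ \G$ unconditionally, so the domain clause of the extensionality principle is automatically satisfied and only pointwise agreement on $\nodes\ \G$ remains to be analysed.

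Second, I would establish the key lookup fact: for $x \in \nodes\ \G$, $(\textit{map}\ f\ \G)\ x = f\ x\ (\G\ x)$. This follows either by applying the expansion principle analogous to~(\ref{expand1}) to the finite map $\textit{map}\ f\ \G$ at $x$, or more directly by splitting the join in~(\ref{map-def}) into the $x$-summand and the rest, $\textit{map}\ f\ \G = (x \mapsto f\ x\ (\G\ x)) \join \bigbullet_{y\in\nodes\ \G,\, y\neq x} y \mapsto f\ y\ (\G\ y)$, and observing that the second component is undefined at $x$, so looking up $x$ returns $f\ x\ (\G\ x)$.

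With these two facts the lemma is immediate. For the right-to-left direction, if $f_1\ x\ (\G\ x) = f_2\ x\ (\G\ x)$ for every $x \in \nodes\ \G$, then the joins defining $\textit{map}\ f_1\ \G$ and $\textit{map}\ f_2\ \G$ are summand-for-summand identical, hence the graphs coincide. For left-to-right, assume $\textit{map}\ f_1\ \G = \textit{map}\ f_2\ \G$ and pick $x \in \nodes\ \G$; applying the lookup fact to both sides yields $f_1\ x\ (\G\ x) = (\textit{map}\ f_1\ \G)\ x = (\textit{map}\ f_2\ \G)\ x = f_2\ x\ (\G\ x)$.

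The main obstacle is really only the lookup fact $(\textit{map}\ f\ \G)\ x = f\ x\ (\G\ x)$; the rest is bookkeeping. That step is where the finite-map structure of partial graphs and the well-definedness of the indexed $\bigbullet$ over pairwise-disjoint singletons do the work, and in the mechanization it is presumably discharged by the library's lemmas on big joins of finite maps. One subtlety worth flagging: the equivalence quantifies only over $x \in \nodes\ \G$, which is exactly right, since $\textit{map}\ f\ \G$ is insensitive to the behaviour of $f$ outside $\nodes\ \G$; this is why no closedness or totality hypothesis on $f_1$, $f_2$ is required.
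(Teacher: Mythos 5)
Your proposal is correct and matches the paper's approach: the paper elides the proof (deferring to its Coq library) with exactly the intuition you formalize, namely that $\textit{map}$ applies the argument function only to nodes of $\G$, so the indexed join in~(\ref{map-def}) has domain $\nodes\ \G$ and lookup $(\textit{map}\ f\ \G)\ x = f\ x\ (\G\ x)$, after which both directions follow by finite-map extensionality.
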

\begin{lemma}[Reachability]\label{lemma:reach}
\quad
\begin{enumerate}[ref=\ref{lemma:reach} (\arabic*)]
\item $\reachtwo{\G}{x} = \reachtwo{\erasure{\G}}{x}$ \label{reach-eq}
\item if $y \notin \reachtwo{\G}{x}$, then $\reachtwo{\G}{x} = \reachtwo{(\G {\setminus} y)}{x}$ \label{notin-reach-single}
\item if $y \in \reachtwo{\G}{x}$, then $\reachtwo{\G}{x} = \reachtwo{(\G {\setminus} y)}{x} \cup \reachtwo{\G}{y}$ \label{in-reach-single}
\end{enumerate}
\end{lemma}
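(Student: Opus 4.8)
The plan is to prove all three items by induction on the size of $\nodes\ \G$; this is well founded because deleting a node that belongs to $\nodes\ \G$ strictly decreases $\nodes\ \G$. Throughout I rely on three routine facts about node deletion: $(\G {\setminus} x) {\setminus} y = (\G {\setminus} y) {\setminus} x$; if $v \neq x$, then deleting $x$ leaves $v$'s adjacency list $\Ga\, v$ (and membership of $v$ in $\nodes$) unchanged; and $\erasure{-}$ preserves $\nodes$, preserves every adjacency list, and commutes with $(-){\setminus}x$. It is also convenient to establish, by the very same induction, a \emph{monotonicity} fact used below: $\reachtwo{(\G {\setminus} y)}{x} \subseteq \reachtwo{\G}{x}$. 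Indeed, if $x \notin \nodes\ (\G {\setminus} y)$ the left-hand side is $\emptyset$; otherwise $x \in \nodes\ \G$ and $x \neq y$, so unfolding both sides around $x$ via~(\ref{def:reachvia}) (with the common adjacency list $\Ga\, x$) reduces the claim to the induction hypothesis on the strictly smaller $\G {\setminus} x$, applied to each child of $x$ together with commutativity of deletion.

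Item~\ref{reach-eq} is then immediate: since $\erasure{-}$ preserves $\nodes$ and all adjacency lists and commutes with $(-){\setminus}x$, the two unfoldings of $\reachtwo{\G}{x}$ and $\reachtwo{\erasure{\G}}{x}$ given by~(\ref{def:reachvia}) coincide clause by clause, and the induction hypothesis on $\G {\setminus} x$ closes the recursive calls. For item~\ref{notin-reach-single}: if $x \notin \nodes\ \G$, then $x \notin \nodes\ (\G {\setminus} y)$ as well and both sides are $\emptyset$; otherwise $x \in \reachtwo{\G}{x}$ together with $y \notin \reachtwo{\G}{x}$ gives $x \neq y$, so $x$ has the same adjacency list in $\G$ and in $\G {\setminus} y$. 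Unfolding both sides around $x$, it remains to prove $\reachtwo{(\G {\setminus} x)}{c} = \reachtwo{((\G {\setminus} x) {\setminus} y)}{c}$ for each child $c$ of $x$; but $\reachtwo{(\G {\setminus} x)}{c} \subseteq \reachtwo{\G}{x}$ forces $y \notin \reachtwo{(\G {\setminus} x)}{c}$, so the induction hypothesis on $\G {\setminus} x$ applies (again modulo commutativity of deletion).

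The substantive case is item~\ref{in-reach-single}, proved by the same induction, now additionally invoking item~\ref{notin-reach-single} and monotonicity as already-established lemmas. The case $x = y$ is trivial, since then $\reachtwo{(\G {\setminus} y)}{x} = \emptyset$ while $\reachtwo{\G}{y} = \reachtwo{\G}{x}$. Assume $x \neq y$; then $y \in \reachtwo{\G}{x}$ gives $x, y \in \nodes\ \G$. Unfold $\reachtwo{\G}{x}$ around $x$ and split $x$'s children into those $c$ with $y \in \reachtwo{(\G {\setminus} x)}{c}$---a nonempty collection, because $y \neq x$---and the remaining ones; rewrite the former by the induction hypothesis for~\ref{in-reach-single} on $\G {\setminus} x$, rewrite the latter by item~\ref{notin-reach-single} on $\G {\setminus} x$, and collect terms (using $(\G {\setminus} x) {\setminus} y = (\G {\setminus} y) {\setminus} x$ and the unfolding of $\reachtwo{(\G {\setminus} y)}{x}$ around $x$) to obtain $\reachtwo{\G}{x} = \reachtwo{(\G {\setminus} y)}{x} \cup \reachtwo{(\G {\setminus} x)}{y}$. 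It then suffices to show $\reachtwo{(\G {\setminus} x)}{y} \cup \reachtwo{(\G {\setminus} y)}{x} = \reachtwo{\G}{y} \cup \reachtwo{(\G {\setminus} y)}{x}$. One inclusion follows from monotonicity, $\reachtwo{(\G {\setminus} x)}{y} \subseteq \reachtwo{\G}{y}$. For the other, unfold $\reachtwo{\G}{y}$ around $y$; for each child $w$ of $y$ apply item~\ref{notin-reach-single} or the induction hypothesis for~\ref{in-reach-single}---crucially, \emph{on the strictly smaller graph $\G {\setminus} y$}---according to whether $x \in \reachtwo{(\G {\setminus} y)}{w}$, to get $\reachtwo{(\G {\setminus} y)}{w} \subseteq \reachtwo{((\G {\setminus} x) {\setminus} y)}{w} \cup \reachtwo{(\G {\setminus} y)}{x} \subseteq \reachtwo{(\G {\setminus} x)}{y} \cup \reachtwo{(\G {\setminus} y)}{x}$, the last step by unfolding $\reachtwo{(\G {\setminus} x)}{y}$ around $y$ and using $w \in \Ga\, y$.

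I expect the main obstacle to be precisely this last step of~\ref{in-reach-single}: one must reason about $\reachtwo{\G}{y}$ inside an induction on $\G$, where $\G$ is not available as a smaller instance, which at first glance looks circular. The fix is to peel $y$ off first---expand $\reachtwo{\G}{y}$ via~(\ref{def:reachvia}) into $y$'s children inside $\G {\setminus} y$---so that the recursive appeal is legitimately to the smaller $\G {\setminus} y$. Beyond this, the remaining difficulty is purely bookkeeping: tracking which partial graph each occurrence of $\rreach$ is taken over, and systematically discharging the side conditions via commutativity of deletion and the invariance of adjacency lists under deletion of other nodes.
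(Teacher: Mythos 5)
Your proof is correct. Note, though, that the paper itself does not prove Lemma~\ref{lemma:reach}: it explicitly elides the proofs (deferring to the Coq graph library) and offers only one-line intuitions, e.g.\ for item~\ref{in-reach-single} the path-based reading that $z$ is reachable from $x$ iff it is reachable from $y$ or reachable from $x$ avoiding $y$. Your argument is the natural formalization of that intuition along the structure the paper actually provides: well-founded induction on the size of $\nodes\ \G$, mirroring the recursive definition~(\ref{def:reachvia}), with item~\ref{notin-reach-single} and the monotonicity fact $\reachtwo{(\G{\setminus}y)}{x} \subseteq \reachtwo{\G}{x}$ established first and then reused in item~\ref{in-reach-single}; your device of peeling $y$ off before recursing (so the appeal is to the strictly smaller $\G{\setminus}y$) correctly dissolves the apparent circularity you flag, and the use of commutativity of deletion and invariance of adjacency lists is exactly the bookkeeping needed. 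One small gap worth making explicit: in item~\ref{in-reach-single} you conclude $y \in \nodes\ \G$ from $y \in \reachtwo{\G}{x}$, which silently uses the auxiliary fact $\reachtwo{\G}{x} \subseteq \nodes\ \G$; this is routine (the same induction gives it immediately) but it is not among the "routine facts" you list, and it is also what licenses $y \in \reachtwo{(\G{\setminus}x)}{y}$ when you absorb the singleton $\{y\}$ in the final inclusion. With that fact added, the proof goes through as written.
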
 
\begin{lemma}[Closure]\label{closed-sub1}
\quad
\begin{enumerate}[ref=\ref{closed-sub1} (\arabic*)]
\item If $\goodg\ \G$ then $\goodg\ ({\G\filter{\rreach\,\G\,x}})$, for every $x$. \label{closed-sub11}
\item If $\goodg\ (\G_1 \join \G_2)$, and $x \in \nodes\ (\G_1 \join \G_2)$, and $\nodes\ \G_1 = \rreach\
  (\G_1 \join \G_2)\ x$ (i.e., $\G_1$ is the subgraph reachable from
  $x$), then $\goodg\ \G_1$, and $x \in \nodes\ \G_1$, and
  $\nodes\ \G_1 = \rreach\ \G_1\ x$. \label{closed-sub2}
\end{enumerate}
\end{lemma}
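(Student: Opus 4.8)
The plan is to derive both parts from one auxiliary closure property of reachability, which holds for \emph{every} graph $\G$: if $w \in \rreach\ \G\ x$ and $z \in \Ga\ w$ with $z \in \nodes\ \G$, then $z \in \rreach\ \G\ x$. To prove it, note $w \in \nodes\ \G$ (since $\rreach\ \G\ x \subseteq \nodes\ \G$), so unfolding~(\ref{def:reachvia}) once gives $\rreach\ \G\ w = \{w\}\cup\bigcup_{y\in\Ga\ w}\rreach\ (\G{\setminus}w)\ y$. If $z = w$ then $z \in \rreach\ \G\ w$ trivially; otherwise $z \neq w$ and $z \in \nodes\ \G$, hence $z \in \nodes\ (\G{\setminus}w)$ and so $z \in \rreach\ (\G{\setminus}w)\ z$ by~(\ref{def:reachvia}), which is one of the sets in the above union, so again $z \in \rreach\ \G\ w$. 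Finally Lemma~\ref{in-reach-single} instantiated at the node $w$ (which lies in $\rreach\ \G\ x$) gives $\rreach\ \G\ x = \rreach\ (\G{\setminus}w)\ x \cup \rreach\ \G\ w$, so $\rreach\ \G\ w \subseteq \rreach\ \G\ x$ and therefore $z \in \rreach\ \G\ x$.

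For the first part, write $R = \rreach\ \G\ x$. Since reachability returns only genuine nodes ($\rreach\ \G\ x \subseteq \nodes\ \G$, by induction on $|\nodes\ \G|$ from~(\ref{def:reachvia})), definition~(\ref{filterm-def}) gives $\nodes\ (\G\filter{R}) = R$, and $\G\filter{R}$ carries the same adjacency list $\Ga\ w$ as $\G$ at each $w \in R$; hence by~(\ref{sinks-def}), $\adj\ (\G\filter{R}) = \bigcup_{w\in R}\Ga\ w$. Fix $w \in R$. Closedness of $\G$ yields $\Ga\ w \subseteq \adj\ \G \subseteq \nodes_0\ \G = \{\nnull\}\cup\nodes\ \G$; splitting off $\nnull$ and applying the closure property above to each remaining element (which lies in $\nodes\ \G$) shows $\Ga\ w \subseteq \{\nnull\}\cup R = \nodes_0\ (\G\filter{R})$. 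Taking the union over $w \in R$ gives $\adj\ (\G\filter{R}) \subseteq \nodes_0\ (\G\filter{R})$, i.e.\ $\goodg\ (\G\filter{R})$ by~(\ref{closed-def}).

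For the second part, set $\G = \G_1\join\G_2$ and $R = \rreach\ \G\ x$, so that $\nodes\ \G_1 = R$ by hypothesis. The pivotal observation is $\G_1 = \G\filter{R}$: since $R = \nodes\ \G_1 \subseteq \nodes\ \G$, definition~(\ref{filterm-def}) gives $\G\filter{R} = \bigbullet_{y\,\in\,\nodes\ \G_1} y \mapsto \G\ y$, and by disjointness of $\G_1\join\G_2$ we have $\G\ y = \G_1\ y$ for $y\in\nodes\ \G_1$, so $\G\filter{R} = \G_1$ by~(\ref{expand2}). Consequently: $x\in\nodes\ \G_1$, because $x\in\nodes\ \G$ forces $x\in\rreach\ \G\ x = R$ by~(\ref{def:reachvia}); and $\goodg\ \G_1 = \goodg\ (\G\filter{R})$ holds by the first part applied to the closed graph $\G$. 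For the remaining equation $\rreach\ \G_1\ x = \nodes\ \G_1$, I would iterate Lemma~\ref{notin-reach-single}: enumerate the finitely many nodes of $\nodes\ \G\setminus R$ and remove them from $\G$ one at a time; at each step the node removed lies outside the current reachable set (which stays equal to $R$ by Lemma~\ref{notin-reach-single}), and removing a node outside $R$ leaves $\G\filter{R}$ unchanged, so the process terminates at $\G\filter{R} = \G_1$ with reachable set still $R$; thus $\rreach\ \G_1\ x = R = \nodes\ \G_1$.

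The main obstacle is the auxiliary closure property of the first paragraph: it is the only place where the recursive definition~(\ref{def:reachvia}) must be genuinely combined with Lemma~\ref{in-reach-single} rather than merely unfolded, and getting the ``$z = w$ versus $z \neq w$'' case split right is the one delicate point. Everything else is bookkeeping --- that $\nodes\ (\G\filter{R}) = R$, that $\G_1 = \G\filter{R}$, and the peeling-off induction --- the only small additional fact being $(\G{\setminus}y)\filter{R} = \G\filter{R}$ for $y\notin R$, immediate from~(\ref{filterm-def}).
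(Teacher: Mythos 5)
Your proof is correct, and it follows the same structure the paper indicates (the paper elides the details, deferring to the Coq library, but states that part (1) is the standard ``reachable nodes form a closed subgraph'' property and part (2) is a simple consequence of it): your auxiliary closure fact plus Lemma~\ref{in-reach-single} correctly yields part (1), and your identification $\G_1 = {\G\filter{\rreach\,\G\,x}}$ together with the iterated use of Lemma~\ref{notin-reach-single} correctly reduces part (2) to part (1).
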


We elide the proofs here (they are in our Coq graph library), and just
comment on the intuition behind each. Lemmas~\ref{lemma:distrib}
and~\ref{lemma:filter} hold because combinators iterate a node-local
transformation over a set of nodes.
Lemma~\ref{lemma:mapeq} holds because $\textit{map}$ applies the
argument function only to nodes in the graph.
Lemma~\ref{reach-eq} holds because reachability isn't concerned with
the contents of the nodes.
Lemma~\ref{notin-reach-single} holds because a node $y$ that isn't
reachable from $x$ doesn't influence the reachability relation, and
can thus be removed from the graph. Lemma~\ref{in-reach-single} holds
because, given $y$ that's reachable from $x$, another node $z$ is
reachable from $x$ iff it's reachable from $y$, or is otherwise
reachable from $x$ by a path avoiding $y$.
Lemma~\ref{closed-sub11} restates in the notation of partial graphs
the well-known property that in a standard non-partial graph, the
nodes reachable from some node $x$ form a connected
subgraph. Lemma~\ref{closed-sub2} is a simple consequence of
Lemma~\ref{closed-sub11}.

We close by illustrating how contextual localization helps prove that
a global property (here, $\goodg$) is preserved under graph
modifications---a pattern used extensively in the Schorr-Waite and
union-find verifications (Sections~\ref{sec:schorr}
and~\ref{sec:unionfind}). The idea is captured in the following lemma
and proof.

\begin{lemma}\label{lemma:closedlocal}
Let $\G$ be a binary graph such that $\goodg\ \G$, and $x \in \nodes\ \G$ and
$y \in \nodes_0\ \G$. The graph $\G'$ obtained by modifying $x$'s
child (left or right) to $y$, also satisfies $\goodg\ \G'$.
\end{lemma}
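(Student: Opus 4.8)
The plan is to unfold $\goodg\ \G'$ into the inclusion $\adj\ \G' \subseteq \nodes_0\ \G'$ (equation~(\ref{closed-def})) and to obtain it by comparing each side with its counterpart for $\G$. The first, easy observation is that rewriting one entry of $x$'s adjacency list alters neither the domain of the underlying finite map nor the mark of $x$: it only changes $\Ga\,x$. Hence $\nodes\ \G' = \nodes\ \G$, and therefore $\nodes_0\ \G' = \nodes_0\ \G$; in particular the hypothesis $y \in \nodes_0\ \G$ gives $y \in \nodes_0\ \G'$, which will be the one fact about $y$ we need.

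Next I would use contextual localization to confine the modification to the node $x$. By the expansion equality~(\ref{expand1}) we have $\G = (x \mapsto \G\ x) \join (\G{\setminus} x)$, and since $\G$ and $\G'$ agree away from $x$, also $\G' = (x \mapsto \G'\ x) \join (\G{\setminus} x)$, where $\G'\ x$ is the pair $\G\ x$ with the designated (left or right) child replaced by $y$. Applying the morphism law for $\adj$ (Lemma~\ref{sinks-dist}) to both decompositions gives $\adj\ \G = \adj\ (x \mapsto \G\ x) \cup \adj\ (\G{\setminus} x)$ and $\adj\ \G' = \adj\ (x \mapsto \G'\ x) \cup \adj\ (\G{\setminus} x)$. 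The sinks of a singleton graph are just the entries of its adjacency list viewed as a set, so replacing one child of $x$ by $y$ can at most add $y$: $\adj\ (x \mapsto \G'\ x) \subseteq \{y\} \cup \adj\ (x \mapsto \G\ x)$, and hence $\adj\ \G' \subseteq \{y\} \cup \adj\ \G$.

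Finally I would chain these facts with the hypothesis $\goodg\ \G$, i.e.\ $\adj\ \G \subseteq \nodes_0\ \G$:
\[
\adj\ \G' \ \subseteq\ \{y\} \cup \adj\ \G \ \subseteq\ \{y\} \cup \nodes_0\ \G \ =\ \nodes_0\ \G \ =\ \nodes_0\ \G',
\]
which is exactly $\goodg\ \G'$. There is no real obstacle here; the only care needed is the bookkeeping that rewriting one adjacency-list entry leaves untouched every other node's adjacency list and the whole node set, so that the sole new sink that can appear is $y$ --- admissible precisely by hypothesis. This is the paradigmatic small instance of the pattern used throughout Sections~\ref{sec:schorr} and~\ref{sec:unionfind}: the morphism law for $\adj$ isolates the one node whose neighborhood changed and discharges the remainder of the graph automatically.
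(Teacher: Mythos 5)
Your proof is correct and follows essentially the same route as the paper's: expand $\G$ and $\G'$ around $x$, distribute $\adj$ over the join to isolate the single changed adjacency list, bound the new sinks by $\{y\} \cup \adj\ \G$, and close with $y \in \nodes_0\ \G$ and $\goodg\ \G$. The only cosmetic difference is that you observe $\nodes_0\ \G' = \nodes_0\ \G$ directly from the unchanged domain, where the paper rederives it by also distributing $\nodes$ over the same decomposition; this changes nothing of substance.
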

\begin{proof}
Without loss of generality, we assume that it's the left child of $x$
that's modified. In other words, we take $\G = x \mapsto (v,
[x_l,x_r]) \join \G{\setminus} x$ and $\G' = x \mapsto (v,
[y,x_r]) \join \G{\setminus} x$, for some $v$, $x_l$ and $x_r$.

Having assumed $\goodg\ \G$, we need to prove $\goodg\ \G'$; that is
$\adj\ \G' \subseteq \nodes_0\ \G'$.
While $\goodg$ itself is a global property, the main components of its
definition, $\adj$ and $\nodes$, are morphisms. The proof rewrites
within the context $- \subseteq -$ to distribute the morphisms as
follows.
\[
\begin{array}[c]{rl>{\ \ }l<{}}
\adj\ \G'\!\! &= \ \ \adj\ (x \mapsto (v, [y,x_r]) \join \G{\setminus}x)=  & \\
              &= \ \ \adj\ (x \mapsto (v, [y,x_r]) \cup \adj\ \G{\setminus}x & \text{Distributivity of $\adj$}\\
              &= \ \ \{y,x_r\} \cup \adj\ \G{\setminus}x & \text{Definition of $\adj$} \\ 
              &\subseteq \ \ \{y\} \cup \{x_l,x_r\} \cup \adj\ \G{\setminus}x &  \\
              &= \ \ \{y\} \cup \adj\ (x \mapsto (v, [x_l,x_r])\cup \adj\ \G{\setminus}x & \text{Definition of $\adj$} \\
              &= \ \ \{y\} \cup \adj\ (x \mapsto (v, [x_l,x_r]) \join \G{\setminus}x) &\text{Distributivity of $\adj$}\\
              &= \ \ \{y\} \cup \adj\ \G\\
              &\subseteq \ \ \nodes_0\ \G & \text{By assumptions $y \in \nodes_0\ \G$}\\
              &                           & \text{and $\goodg\ \G$ (i.e.}\\
              &                           & \text{$\adj\ \G \subseteq \nodes_0\ \G$)}\\
              &= \ \ \{\nnull\}\,\dotcup\,\nodes\ (x \mapsto (v, [x_l,x_r]) \join \G{\setminus}x) \\
              &= \ \ \{\nnull\}\,\dotcup\,\nodes\ (x \mapsto (v, [x_l,x_r]))\,\dotcup\,\nodes\ {\G{\setminus}x} & \text{Distributivity of $\nodes$}\\
              &= \ \ \{\nnull\}\,\dotcup\,\{x\}\, \dotcup\,\nodes\ \G{\setminus}x & \text{Definition of $\nodes$}\\
              &= \ \ \{\nnull\}\,\dotcup\,\nodes\ (x \mapsto (v, [y,x_r])\,\dotcup\, \nodes\ {\G{\setminus}x} &\text{Definition of $\nodes$}\\
              &= \ \ \nodes_0\ (x \mapsto (v, [y,x_r]) \join \G{\setminus}x) &\text{Distributivity of $\nodes$}\\
              &= \ \ \nodes_0\ \G'
\end{array} 
\]
By distributing the morphisms, the proof separates the reasoning about
the node $x$ from that about the subgraph $\G{\setminus}x$. While it
manipulates the values of $\adj$ and $\nodes$ at $x$ to relate
$\adj\ \G'$ and $\nodes_0\ \G'$ to $\adj\ \G$ and $\nodes_0\ \G$,
respectively, it also exploits the fact that $\G$ and $\G'$ share the
same subgraph $\G{\setminus}x$.
By isolating the treatment of $x$ from the rest of the graph, the
proof enacts a style of reasoning that's fundamentally local, albeit
distinct from the typical notion of locality of separation logic
that's achieved by framing. This alternative, contextual, form of
locality arises not from eliding the rest of the structure, but from
explicitly factoring it through morphisms. 
\end{proof}

\newcommand{\marked}{\textit{marked}} 
\newcommand{\pre}{\textit{pre}}
\newcommand{\post}{\textit{post}}
\newcommand{\inv}{\textit{inv}}
\newcommand{\invx}[4]{\inv\ {#1}\ {#2}\ {#3}\ {#4}}
\newcommand{\invp}{\inv'}
\newcommand{\invpx}[5]{\invp\ {#1}\ {#2}\ {#3}\ {#4}\ {#5}}
\newcommand{\gdiff}{\textit{restore}}
\newcommand{\gdiffx}[3]{\gdiff\ {#2}\ {#3}\ {#1}}
\newcommand{\smarked}{\textit{stack-marked}}
\newcommand{\sconsec}{\textit{inset}}
\newcommand{\sconsecx}[2]{\sconsec\ {#2}\ {#1}}
\newcommand{\allu}[1]{#1 = {#1\filter{\notM}}}
\newcommand{\allm}[1]{#1 = {#1\filter{\killed}}}
\newcommand{\setmarknull}[1]{\marked_0\ #1}
\newcommand{\xinmarknull}[2]{#1 \in {#2\filter{M}} \vee t=\nnull}
\newcommand{\unique}{\textit{uniq}}
\newcommand{\empl}{\textit{nil}}
\newcommand{\pathp}{\textit{path}} 
\newcommand{\proj}[3]{#1_{#2}\ #3}
\newcommand{\pathViaU}{\textit{pathViaU}}
\newcommand{\St}{\alpha}

\newcommand{\llast}{\textit{last}}
\newcommand{\hhead}{\textit{head}}
\newcommand{\nnext}{\textit{next}}
\newcommand{\pprev}{\textit{prev}}
\newcommand{\reachviau}{\textit{reach-{\scriptsize$\notM$}}}

\newcommand{\GG}{\G_0}
\section{Schorr-Waite Algorithm}\label{sec:schorr}
The goal of a graph-marking algorithm is to traverse a graph starting
from some root node $r$ and mark the reachable nodes. An obvious way
to implement this functionality is as a recursive function that
traverses the graph in a depth-first, left-to-right manner. However,
as graph marking is typically employed in garbage collection---when
space is sparse---recursive implementation isn't optimal, as it uses
up space on the stack to keep track of the recursive calls and execute
backtracking. The idea of Schorr-Waite's algorithm is that the
information about backtracking can be maintained within the graph
itself, while the graph is traversed \emph{iteratively} in a loop.

We consider the variant of Schorr-Waite that operates over binary
graphs. We also record the node's status in the traversal by setting
the mark to: $\notM$ if the node is unmarked, i.e., the traversal
hasn't encountered the node; $\leftM$ if the node has been traversed
once towards the left subgraph; $\rightM$ if the traversal of the left
subgraph has completed, and the traversal of the right subgraph began;
$\killed$ if both subgraphs have been traversed.
Thus, we proceed to use graphs of type $\bingraph\ \{\notM, \leftM, \rightM,
\killed\}$. 

\begin{figure}[t]
  \centering
  \includegraphics[trim=9mm 0 0 0,clip,width=0.8\textwidth]{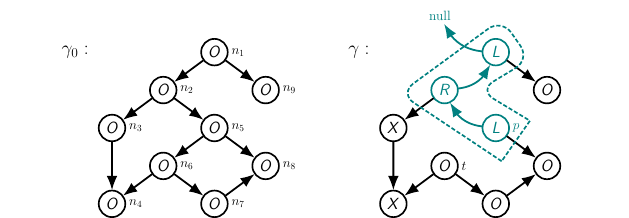} 
  \caption{The unmarked graph $\GG$ with nodes $n_1,...,n_9$, named in
    depth-first, left-to-right traversal order. The partially marked
    graph $\G$ shows an intermediate state, with $t$ the current tip
    node, and $p$ its predecessor. The dashed line encloses the
    traversal stack (top = $p$), which contains exactly the nodes
    marked $\leftM$ or $\rightM$. A node marked $\leftM$
    (resp.~$\rightM$) has its left (resp.~right) edge redirected to
    the predecessor through which it was reached. This inverts the
    edges on the stack: the path $n_1\rightarrow n_2 \rightarrow n_5$
    in $\GG$ becomes $n_5 \rightarrow n_2 \rightarrow n_1$ in $\G$.}
\label{fig:initialcurrentG}
\end{figure}

Along with modifying the nodes' marks, the edges of the graph are
modified during traversal to keep the backtracking information.
This is illustrated in Fig.~\ref{fig:initialcurrentG}, where $\GG$ is
the original unmarked graph, and $\G$ is an intermediate graph halfway
through the execution.
The figure illustrates the traversal that started at $n_1$, proceeded
to $n_2$, fully marked the left subgraph rooted at $n_2$, reached the
node $p = n_5$ and is just about to traverse the left subgraph of $p$
starting from the node $t = n_6$.  The variables $t$ (tip) and $p$
(predecessor) are modified as the traversal advances. The idea of
Schorr-Waite is that once the traversal has obtained the node $t$ from
which to proceed, the corresponding edge (or pointer) from $p$ to $t$
can temporarily be repurposed and redirected towards $p$'s predecessor
in the traversal (here $n_2$).
As similar repurposing has been carried out for $n_2$ and $n_1$ when
they were first encountered, this explicitly inverts in $\G$ the path
encompassing the sequence of nodes $\St = [n_1, n_2, n_5]$ in $\G_0$
(dashed line in Fig.~\ref{fig:initialcurrentG}). The sequence $\St$
serves the same role as the call stack in a recursive implementation,
as it records the nodes whose subgraphs are currently being traversed,
and the relative order in which each node has been reached. We refer
to $\St$ as \emph{the stack}, with $\St$'s \emph{last} element (also
stored in $p$) being the stack's top. A node on the stack can be
marked $\leftM$ or $\rightM$, but not $\notM$ or $\killed$, as the
latter signifies that the node's traversal hasn't started, or has
finished, respectively. Conversely, a node marked $\leftM$ or
$\rightM$ must be on the stack. These properties constitute some of
the main invariants of the algorithm and are formalized in
Section~\ref{sec:invariants}.

\begin{figure}[t]
  \centering
  \includegraphics[trim=4mm 0 0 0,clip,width=0.8\textwidth]{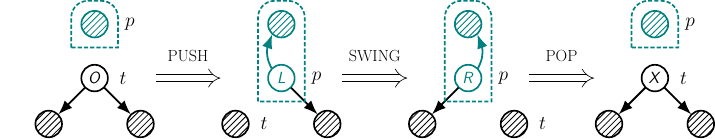} 
\caption{Operations PUSH, SWING, and POP on the pivot node (unshaded),
  coordinate the marking of the pivot with the stack (dashed line) and
  with edge inversion. SWING and POP require that $t$ is marked or 
  $\nnull$.}\label{fig:markcycle}
\end{figure}
Fig.~\ref{fig:markcycle} zooms onto the tip node $t$ (unshaded node,
pivot) to illustrate how the traversal coordinates the marking and
edge redirection around $t$ with the modifications of $t$ and $p$ in
three separate operations: PUSH, SWING, and POP. When $t$ is first
encountered, it's unmarked. PUSH promptly marks it $\leftM$, and
pushes it onto the stack (dashed line in Fig.~\ref{fig:markcycle}) by
redirecting its left edge towards $p$. The traversal continues by
advancing $t$ towards the left subgraph, and $p$ to what $t$
previously was. Once the left subgraph is fully marked, SWING restores
pivot's left pointer, but keeps the pivot enlinked onto the stack by
using the pivot's right pointer. The tip is swung to the right
subgraph, and the mark changed to $\rightM$, to indicate that the left
subgraph has been traversed, and we're moving to the right
subgraph. Once the pivot's right subgraph is traversed as well, POP
sets the mark to $\killed$ and restores the right edge. This returns
the edges of the pivot to their originals from the initial graph, but
also unlinks (i.e., pops) the pivot from the stack. Nodes stored into
$p$ and $t$ are correspondingly shifted up, and the marking cycle
continues from the new top of the stack.

\newcommand{\tm}{\textit{tm}}
\newcommand{\pM}{\textit{pm}}
\newcommand{\tmp}{\textit{tmp}}
\renewcommand{\algorithmiccomment}[2][.2\linewidth]{%
  \leavevmode\hfill\makebox[#1][l]{//~#2}}

\begin{figure}[t]
  \centering
  \includegraphics[width=\textwidth]{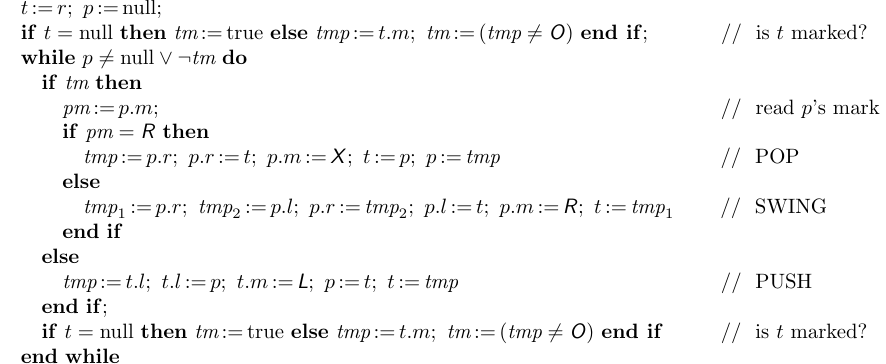} 
 \caption{Schorr-Waite algorithm. The algorithm follows the monadic
   style common to many separation logics, strictly dividing stateful
   \emph{commands} from pure \emph{expressions}. This requires
   pointer-dependent conditions (in if/while) to first dereference
   into variables, since expressions can't contain commands.
For example, $\tm$ is
   assigned before and at the loop's end to enable its use in the
   loop condition.}\label{alg:cap}
\end{figure}

Algorithm in Fig.~\ref{alg:cap} takes the root node $r$ from which the
traversal begins, and starts by setting the tip $t$ to $r$, and $p$ to
$\nnull$. An invariant of the algorithm is that $p$ is always the top
of the stack. Thus, $p$ equals $\nnull$ if the stack is empty,
otherwise $p$ is marked $\leftM$ or $\rightM$, but never $\notM$ or
$\killed$.  Next, $t$'s marking status is computed into $\tm$. In
general, if $t$ is unmarked and non-$\nnull$, then $t$ is encountered
for the first time.
At each iteration, the algorithm mutates the graph---illustrated
in Fig.~\ref{fig:mainops}---using: (1) PUSH, if $t$ is unmarked and
non-null, thus encountered for the first time, and can be pushed onto
the stack; (2) SWING, if $t$ is marked or $\nnull$ (encountered
before), and $p$ isn't marked $\rightM$ (thus, \emph{is} marked
$\leftM$) signifying that $p$'s left subgraph has been traversed; (3)
POP, if $t$ is marked or $\nnull$ and $p$ is marked $\rightM$
signifying that $p$'s right subgraph has been traversed. The
algorithm terminates when $p$ is $\nnull$ (the stack is empty) and $t$
is marked or $\nnull$ (there's nothing to push onto the stack).

\begin{figure}[t]
  \centering
  \includegraphics[width=\textwidth]{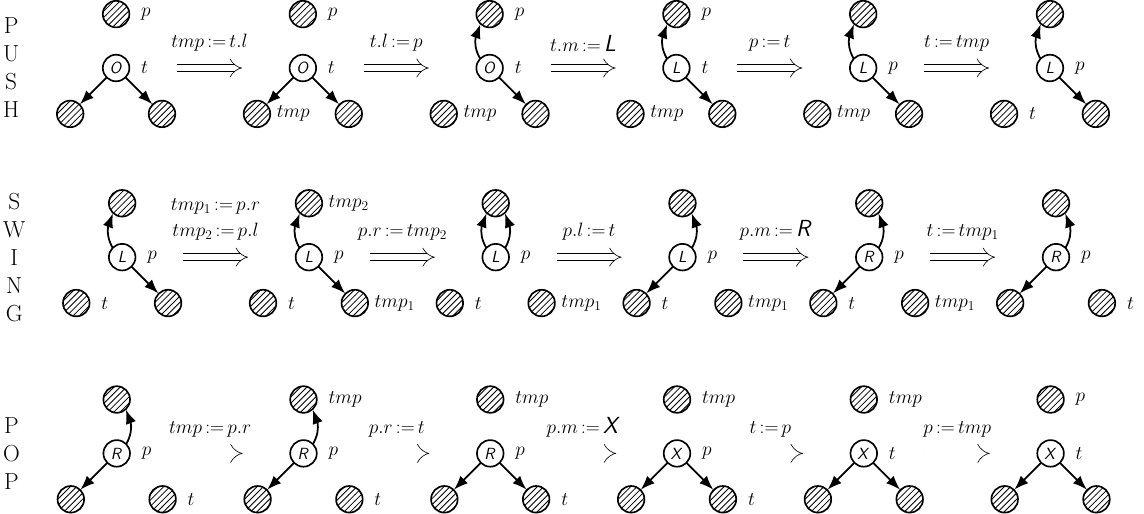} 
\caption{Details of the main operations of Schorr-Waite.}\label{fig:mainops}
\end{figure}

The verification task is to prove that the graph obtained upon
termination modifies the initial unmarked graph as follows: (1) the
nodes reachable from $r$, and only those, are marked $\killed$, and
(2) the edges are restored to their initial versions.

\section{Schorr-Waite's Invariants}\label{sec:invariants}
To formalize the above specification, the first step is to state the
main invariants of Schorr-Waite that relate the initial graph $\GG$ to
the current graph $\G$, the stack $\alpha$,
and the nodes $t$ and $p$.
In this formalization, we rely solely on combining graph abstractions
from the small vocabulary introduced in
Section~\ref{sec:partialgraphs}. This ensures that later proofs remain
compatible with graph decomposition, and depend only on general lemmas
over this core vocabulary.

The relations to be captured are the following, all observed in the
example graphs from Fig.~\ref{fig:initialcurrentG}. These invariants
hold throughout the execution of Schorr-Waite, \emph{except} inside
the three main operations, PUSH, SWING, and POP, when the invariants
are temporarily invalidated, to be restored by the time the operation
terminates.
\begin{enumerate}
\item[(\ref{invbook})] The stack $\St$ is a sequence of distinct nodes
  (i.e., each node is unique), also distinct from $\nnull$, with $p$
  the top of stack (i.e., $p$ is the last element of $\St$, or
  $\nnull$ if $\St$ is empty). The tip $t$, being a child of $p$, is a
  node in $\G$, or $\nnull$ if the child doesn't exist.
\item[(\ref{invclosed})] The graph $\G$ is closed, i.e., it has no dangling edges.
\item[(\ref{invstack})] The stack $\St$ contains exactly the nodes
  that are partially marked (labeled $\leftM$ or
  $\rightM$). Intuitively, this holds because $\St$ implements the
  call stack, and thus only contains the nodes whose subtrees are
  being currently marked.
\item[(\ref{invinvert})] The stack $\St$ describes a path in $\G$ that
  \emph{respects the markings}, in the following sense: if a node in
  $\St$ is marked $\leftM$ (resp.~$\rightM$), then its left
  (resp.~right) child in $\G$ is the node's predecessor in the
  traversal, and thus the node's predecessor in $\alpha$.  For
  example, in the graph $\G$ in Fig.~\ref{fig:initialcurrentG}, the
  node $n_5$ is marked $\leftM$ and its left child $n_2$ is its
  predecessor in $\St$. In other words, $\St$ stores the nodes in the
  relative order in which they are reached.
\item[(\ref{invrestore})] Reorienting the edges of the nodes in $\St$
  produces the original graph $\GG$ (modulo node marks).
\item[(\ref{invreach})] The unmarked nodes (labeled $\notM$) are ``to
  the right'' of $\St$ and $t$, because the algorithm implements a
  left-to-right traversal order. More precisely, each unmarked node is
  reachable, by an unmarked path, either from $t$ or from a right
  child of some node in $\St$. For example, in the graph $\G$ in
  Fig.~\ref{fig:initialcurrentG}, the nodes $n_7, n_8, n_9$ (unmarked)
  are ``to the right'' of $\St$ and $t$, whereas the nodes $n_3, n_4$
  (marked $\killed$) are ``to the left''.
\end{enumerate}

\begin{figure}[t]
\begin{align}
\!\!\!\! \invpx{\GG}{\G}{\St}{t}{p} \eqdef\ 
  &   \unique\ (\nnull \join \St) \wedge p = \llast\ (\nnull \join \St) \wedge t \in \nodes_0\ \G \wedge \hbox{} \tag{$a$} \label{invbook}\\
  &   \goodg\ \G \wedge \hbox{}\tag{$b$} \label{invclosed}\\
  &   \nodes\ {\G\filter{\leftM,\rightM}} = \St \wedge \hbox{}\ \tag{$c$}\label{invstack}\\
  &   \sconsecx{\G}{\St} = \erasure{\G} \wedge \hbox{} \tag{$d$}\label{invinvert}\\
  &   \gdiffx{\G}{\St}{t} = \erasure{\GG} \wedge \hbox{} \tag{$e$}\label{invrestore}\\
  &   \nodes\ {\G\filter{\notM}} \subseteq {\textstyle \bigcup\limits_{\St}}\ (\reachone{(\G\filter{\notM})}\circ\G_r) \cup \reachtwo{(\G\filter{\notM})}{t} \tag{$f$}\label{invreach}\\
  \invx{\GG}{\G}{t}{p} \eqdef\ & \exists \St\ldot \invpx{\GG}{\G}{\St}{t}{p} \notag 
\end{align}
\caption{Main invariants of Schorr-Waite, formally. The parameter
  $\G_0$ is the initial graph, $\G$ is the current partially marked
  graph, $\St$, $t$ and $p$ are the current stack, tip and
  predecessor, respectively.}\label{fig:loopinv}
\end{figure}

Fig.~\ref{fig:loopinv} presents the formal statements of the above
invariants, which we proceed to explain. 
The encoding of (\ref{invbook}) and (\ref{invclosed}) in
Fig.~\ref{fig:loopinv} is direct. The statement (\ref{invstack}) says
that the stack $\St$, viewed as a set rather than a sequence, equals
the set of nodes in $\G$ that are marked $\leftM$ or $\rightM$.
Statement (\ref{invreach}) denotes by $\G_r\ x$ the right child of $x$
in $\G$,\footnote{Dually, $\G_l\ x$ is the left child of $x$, so that
$\Ga\ x$ is the sequence $[\G_l\ x, \G_r\ x]$.} and directly says that
an unmarked node $x$ (i.e., $x \in \nodes\ {\G\filter{\notM}}$) is
reachable from $t$ by an unmarked path ($x \in
\reachtwo{(\G\filter{\notM})}{t}$), or there exists a node $y \in \St$
such that $x$ is reachable from the right child of $y$, also by an
unmarked path ($x \in
\bigcup_{\St}\ (\reachone{(\G\filter{\notM})}\circ\G_r)$).
This leaves us with the statements (\ref{invinvert}) and
(\ref{invrestore}), which we discuss next.

If one ignores for a moment the property of respecting the markings,
then the English description of (\ref{invinvert}) and
(\ref{invrestore}) says that the stack $\St$ is the contents of a
linked list embedded in the graph $\G$, and that reversing the linkage
of this list produces the original graph $\GG$. Thus, one might
consider following the approach to linked lists described in
Section~\ref{sec:overview}, and attempt to relate $\St$ with the
\emph{heap layouts} of $\G$ and $\GG$, by somehow relating
$\lseg\,\St$ with $\gseg\ \GG$ and $\lseg\,(\textit{reverse}\,\St)$
with $\gseg\ \G$.\footnote{In fact, relating $\St$ to the heap layouts
of $\G$ and $\GG$ via the $\lseg$ predicate is precisely the approach
of Yang's proof.}
However, in our setting there is a more direct option, that elides the
detour through heaps, and relates $\St$ to $\G$ and $\GG$ by means of
the following graph transformation.
 
In particular, we first define the \emph{higher-order} function
$\textit{if-mark}$ that branches on the marking of an individual node,
to output a transformation of the node's edges.
\[
\begin{array}{rcl}
\textit{if-mark} & : & (node \rightarrow node) \rightarrow node \rightarrow \hbox{}\\
& & \qquad \{\notM, \leftM, \rightM, \killed\} \times (node \times node) \rightarrow node \times node \\[2ex]
\textit{if-mark}\ f\ x\ (m, [l, r]) & \eqdef &
  \left\{\begin{aligned}
    &[f\ x,\ r] && \mbox{if $m = \leftM$} \\
    &[l,\ f\ x] && \mbox{if $m = \rightM$} \\
    &[l, r] && \mbox{otherwise}
  \end{aligned}\right.
\end{array}
\]
The transformation that $\textit{if-mark}$ computes is guided by the
argument function $f$, itself mapping nodes to nodes. Thus, supplying
different values for $f$ produces different transformations, but
within the general pattern encoded by $\textit{if-mark}$.
In more detail, $\textit{if-mark}$ applies to a node $x$ of $\G$,
$x$'s mark $m$ and children $[l, r]$ as follows. If $m$ is $\leftM$
(resp.~$\rightM$), then $x$'s left (resp.~right) child is replaced in
the output by $f\ x$, while the right (resp.~left) child is passed
along unchanged. Otherwise, if $x$ is marked $\notM$ or $\killed$, its
children are returned unmodified.

We then apply the $\textit{map}$ combinator to $\textit{if-mark}$ in
two different ways, to define the following two functions that will
help us formalize the invariants (\ref{invinvert}) and
(\ref{invrestore}) in a uniform way.
\begin{align}
\begin{split}
  \sconsec &\ \, :\ \textit{seq}\ node \rightarrow \bingraph\ \{\notM,\leftM,\rightM,\killed\} \rightarrow \bingraph\ \textit{unit}\\
  \sconsecx{\G}{\St} &\eqdef \textit{map}\ (\textit{if-mark}\ (\pprev\ (\nnull \join \St)))\ \G 
\end{split} \label{invert-def}\\
\begin{split}
  \gdiff &\ \, :\ node \rightarrow \textit{seq}\ node \rightarrow \bingraph\ \{\notM,\leftM,\rightM,\killed\} \rightarrow \bingraph\ \textit{unit}\\
  \gdiffx{\G}{\St}{t}&\eqdef \textit{map}\ (\textit{if-mark}\ (\nnext\ (\St \join t)))\ \G
 \end{split} \label{restore-def}
\end{align}
Here $\pprev\ (\nnull \join \St)$ is a function that takes a node $x$
and returns the predecessor of the first occurrence of $x$ in $(\nnull
\join \St)$ if $x \in \St$, or $\nnull$ if $x \notin \St$.  Similarly,
$\nnext\ (\St \join t)\ x$ is the successor of the first occurrence of
$x$ in $(\St \join t)$ if $x \in \St$, or $t$ if $x \notin \St$.
For example, if $\St$ is the sequence $[n_1, n_2, n_5]$, then
$\pprev\ (\nnull \join \St)\ n_2 = n_1$, $\pprev\ (\nnull \join
\St)\ n_1 = \nnull$, $\nnext\ (\St \join t)\ n_2 = n_5$, and
$\nnext\ (\St \join t)\ n_5 = t$.

From the definition, it follows that graphs $\sconsecx{\G}{\St}$ and
$\gdiffx{\G}{\St}{t}$ modify the graph $\G$ by manipulating the edges
related to the stack $\St$, and otherwise erasing $\G$'s marks.
Specifically, $\sconsecx{\G}{\St}$ replaces the child (left/right,
based on mark) of each node in $\St$ with its predecessor in $\nnull
\join \St$. For $\St$ to respect the markings, as required by the
invariant (\ref{invinvert}), this transformation must actually
preserve the graph. Thus, invariant (\ref{invinvert}) is formally
stated in Fig.~\ref{fig:loopinv} as $\sconsecx{\G}{\St} =
\erasure{\G}$.
Similarly, $\gdiffx{\G}{\St}{t}$ replaces the child (left/right, based
on mark) of each node in $\St$ with its successor in $\St \join t$,
thereby inverting the path $\St$ in $\G$, and redirecting the
appropriate child of $p$ towards $t$. Thus, invariant
(\ref{invrestore}) is formally stated in Fig.~\ref{fig:loopinv} as
$\gdiffx{\G}{\St}{t} = \erasure{\G_0}$.

\paragraph*{\textbf{Example.}}
Consider the graphs $\G$ and $\GG$ from
Fig.~\ref{fig:initialcurrentG}, taking $p = n_5$, $t = n_6$, and $\St
= [n_1,n_2,n_5]$. The following calculation illustrates that
$\gdiffx{\G}{\St}{t}$ computes the erasure of $\GG$. 
\begin{align*}
  \gdiffx{\G}{\St}{t} =\ & \textit{map}\ (\textit{if-mark}\ (\nnext\ (\St \cons t)))\ \G \\
  =\ & \hbox{}\hphantom{\join}\ \,n_1\, {\mapsto}\, \textit{if-mark}\ (\nnext\ (\St \cons t))\ n_1\ (\leftM, [\nnull, n_9])\\
    & \hbox{}\join n_2\, {\mapsto}\, \textit{if-mark}\ (\nnext\ (\St \cons t))\ n_2\ (\rightM, [n_3, n_1])\\
    & \hbox{}\join n_5\, {\mapsto}\, \textit{if-mark}\ (\nnext\ (\St \cons t))\ n_5\ (\leftM, [n_2, n_8])\\
    & \hbox{}\join \textit{map}\ (\textit{if-mark}\ (\nnext\ (\St \cons t)))\ (\G{\setminus}\{n_1,n_2,n_5\})\\
   =\ & \hbox{}\hphantom{\join}\ \,n_1\,{\mapsto}\, [\nnext\ (\St \cons t)\ n_1, n_9] \\
     &  \hbox{}\join n_2\, {\mapsto}\, [n_3, \nnext\ (\St \cons t)\ n_2] \\
     & \hbox{}\join n_5\, {\mapsto}\, [\nnext\ (\St \cons t)\ n_5, n_8] \\
     & \hbox{}\join \erasure{\G{\setminus}\{n_1,n_2,n_5\}}\\
  =\ & n_1\,{\mapsto}\,[n_2, n_9] \join n_2\,{\mapsto}\,[n_3, n_5] \join n_5\,{\mapsto}\,[t, n_8] \join \erasure{\G{\setminus}\{n_1,n_2,n_5\}}\\
  =\ & \erasure{\GG}
\end{align*}
When applied to $\G{\setminus}\{n_1,n_2,n_5\}$, the mapping returns
the erasure $\erasure{\G{\setminus}\{n_1,n_2,n_5\}}$, because
$\textit{if-mark}$ elides the contents, and doesn't modify the
children of the nodes that aren't marked $\leftM$ or
$\rightM$. However, the calculations modifies the edges out of $n_1$,
$n_2$ and $n_5$ in $\G$ to obtain precisely the edges of $\GG$.

On the other hand, the role of $\sconsec$ is to ensure that $\St$ is a
marking-respecting path in $\G$, and thus a valid stack. For the same
values of $\St$ and $\G$ as above, $\sconsecx{\G}{\St}$ returns
$\erasure{\G}$, because the passed $\St$ is indeed the stack in
$\G$. However, if we passed a permutation of $\St$, the equality to
$\erasure{\G}$ won't hold. For example, consider passing
$[n_2,n_5,n_1]$ for $\St$.
\begin{align*}
   \sconsecx{\G}{\St} =\ & \textit{map}\ (\textit{if-mark}\ (\pprev\ (\nnull \cons \St)))\ \G \\
   =\ & \hbox{}{\hphantom{\join}}\ \,n_1\, {\mapsto}\, \textit{if-mark}\ (\pprev\ (\nnull \cons \St))\ n_1\ (\leftM, [\nnull, n_9])\\
     & \hbox{} \join n_2\,{\mapsto}\, \textit{if-mark}\ (\pprev\ (\nnull \cons \St))\ n_2\ (\rightM, [n_3, n_1])\\
     & \hbox{} \join n_5\, {\mapsto}\, \textit{if-mark}\ (\pprev\ (\nnull \cons \St))\ n_5\ (\leftM, [n_2, n_8])\\
     & \hbox{} \join \textit{map}\ (\textit{if-mark}\ (\pprev\ (\nnull \cons \St)))\ (\G{\setminus}\{n_1,n_2,n_5\})\\
    =\ & \hbox{}\hphantom{\join}\ \, n_1\,{\mapsto}\,[\pprev\ (\nnull \cons \St)\ n_1, n_9] \\
       & \hbox{} \join n_2\, {\mapsto}\, [n_3, \pprev\ (\nnull \cons \St)\ n_2] \\
      & \hbox{} \join n_5\,{\mapsto}\,[\pprev\ (\nnull \cons \St)\ n_5, n_8] \\
      & \hbox{}\join \erasure{\G{\setminus}\{n_1,n_2,n_5\}}\\
   =\ & n_1\, {\mapsto}\, [n_5, n_9] \join n_2\,{\mapsto}\, [n_3, \nnull] \join n_5\,{\mapsto}\,[n_2, n_8] \join \erasure{\G{\setminus}\{n_1,n_2,n_5\}}\\
   \neq\ & \erasure{\G}
  \end{align*}
  The calculation considers the nodes $n_1$, $n_2$ and $n_5$, as these
  are the nodes marked $\leftM$ or $\rightM$. However, because the
  nodes aren't in the traversed order, the computation fails to
  encode $\erasure{\G}$. \hfill $\square$

\medskip

The significance of using the combinator $\textit{map}$ to define
$\sconsec$ and $\gdiff$ is that the general lemmas from
Section~\ref{sec:partialgraphs} apply, immediately deriving that both
$\sconsec$ and $\gdiff$ are morphisms in the argument $\G$, and thus
facilitating contextual localization in the proofs in
Section~\ref{sec:swproof}.
Similarly, using a higher-order function \textit{if-mark} lets us
prove general lemmas about it once, and use them several times. For
example, Lemma~\ref{sconsec-empty} below states a property about
\textit{if-mark} that immediately applies to both $\sconsec$ and
$\gdiff$. Similarly, Lemma~\ref{lemma:ifmark} is used to prove both
equations (\ref{eq:invert}) and (\ref{eq:gdiff}) of
Lemma~\ref{sconsec-p}, which are in turn each used several times in
the proofs of the various components of Schorr-Waite in
Section~\ref{sec:swproof}.

\begin{lemma}\label{sconsec-empty}
If $\nodes\ {\G\filter{\leftM,\rightM}} = \emptyset$ (i.e., $\G$ has
no partially marked nodes) then
$\textit{map}\ (\textit{if-mark}\ f)\ \G = \erasure{\G}$ for any
$f$. In particular, $\sconsecx{\G}{\St} = \gdiffx{\G}{\St}{t} =
\erasure{\G}$ for any $\St$ and $t$.
\end{lemma}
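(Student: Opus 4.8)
The plan is to reduce the claimed equality to a pointwise check on nodes, using the Mapping lemma. First I would unfold the filter-by-contents shorthand: by (\ref{filterc-def}) and (\ref{filterm-def}), ${\G\filter{\leftM,\rightM}}$ keeps exactly the nodes of $\G$ whose mark lies in $\{\leftM,\rightM\}$, so
\[
\nodes\ {\G\filter{\leftM,\rightM}} = \{x \in \nodes\ \G \mid \Gm\ x \in \{\leftM,\rightM\}\}.
\]
Hence the hypothesis $\nodes\ {\G\filter{\leftM,\rightM}} = \emptyset$ is just the statement that $\Gm\ x \in \{\notM,\killed\}$ for every $x \in \nodes\ \G$, i.e., no node of $\G$ carries a partial mark.

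Next I would observe that erasure is itself an instance of $\textit{map}$: comparing (\ref{erasure-def}) with (\ref{map-def}), $\erasure{\G} = \textit{map}\ \pi\ \G$ where $\pi\ x\ (v,\alpha) \eqdef \alpha$ projects out the adjacency list and discards the mark. The goal $\textit{map}\ (\textit{if-mark}\ f)\ \G = \erasure{\G}$ is then an equality between two maps over the same graph, so by Lemma~\ref{lemma:mapeq} it suffices to show $\textit{if-mark}\ f\ x\ (\G\ x) = \Ga\ x$ for every $x \in \nodes\ \G$. Here the first step pays off: for such $x$ the mark $\Gm\ x$ is neither $\leftM$ nor $\rightM$, so by the definition of $\textit{if-mark}$ we land in its ``otherwise'' branch, which returns the adjacency list $\Ga\ x$ unchanged --- exactly what $\pi$ yields.

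The ``in particular'' part then follows by instantiation. By (\ref{invert-def}) and (\ref{restore-def}), $\sconsecx{\G}{\St} = \textit{map}\ (\textit{if-mark}\ (\pprev\ (\nnull \join \St)))\ \G$ and $\gdiffx{\G}{\St}{t} = \textit{map}\ (\textit{if-mark}\ (\nnext\ (\St \join t)))\ \G$, so applying the general statement with $f \eqdef \pprev\ (\nnull \join \St)$ and then with $f \eqdef \nnext\ (\St \join t)$ gives $\sconsecx{\G}{\St} = \gdiffx{\G}{\St}{t} = \erasure{\G}$, for arbitrary $\St$ and $t$.

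I do not anticipate a real obstacle. The only mild care points are making explicit that $\erasure{-}$ is a $\textit{map}$ so that Lemma~\ref{lemma:mapeq} applies directly, and carefully unfolding the filter-by-contents notation (\ref{filterc-def}) to extract the pointwise condition on marks; after that the result is a one-line case analysis on the definition of $\textit{if-mark}$.
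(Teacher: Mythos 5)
Your proof is correct and takes essentially the same route as the paper, which justifies the lemma with the one-line observation that $\textit{if-mark}$ modifies only partially marked nodes and erases the rest; your argument simply makes this precise by viewing $\erasure{-}$ as a $\textit{map}$ and reducing to a pointwise check via Lemma~\ref{lemma:mapeq}, exactly the mechanism the paper itself uses for the neighboring Lemma~\ref{lemma:ifmark}.
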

\begin{lemma}\label{lemma:ifmark}
If $f_1\ x = f_2\ x$ for every 
$x \in \nodes\ {\G\filter{\leftM, \rightM}}$ 
then $\textit{map}\ (\textit{if-mark}\ f_1)\ \G =
\textit{map}\ (\textit{if-mark}\ f_2)\ \G$.
\end{lemma}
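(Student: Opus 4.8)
The plan is to reduce everything to Lemma~\ref{lemma:mapeq}. Instantiating that lemma with the two mapped functions $\textit{if-mark}\ f_1$ and $\textit{if-mark}\ f_2$, the goal $\textit{map}\ (\textit{if-mark}\ f_1)\ \G = \textit{map}\ (\textit{if-mark}\ f_2)\ \G$ becomes the node-local obligation: for every $x \in \nodes\ \G$, $\textit{if-mark}\ f_1\ x\ (\G\ x) = \textit{if-mark}\ f_2\ x\ (\G\ x)$. So I fix $x \in \nodes\ \G$, write $\G\ x = (m, [l, r])$, and proceed by case analysis on the mark $m$, following the three branches in the definition of $\textit{if-mark}$.

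If $m \in \{\notM, \killed\}$, then $\textit{if-mark}\ f_i\ x\ (m, [l, r]) = [l, r]$ for $i \in \{1, 2\}$: the function argument is simply not consulted, so the two sides are equal unconditionally and the hypothesis is not needed. If $m = \leftM$ (the case $m = \rightM$ is symmetric), then $\textit{if-mark}\ f_i\ x\ (m, [l, r]) = [f_i\ x,\ r]$, so the obligation reduces to $f_1\ x = f_2\ x$. This is where the hypothesis is used: since $\Gm\ x = m \in \{\leftM, \rightM\}$ and $x \in \nodes\ \G$, the definition of filter-by-contents~(\ref{filterc-def}) together with the definition of filtering~(\ref{filterm-def}) puts $x \in \nodes\ {\G\filter{\leftM, \rightM}}$, which is exactly the set over which $f_1$ and $f_2$ are assumed to agree; hence $f_1\ x = f_2\ x$, closing the case.

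There is no genuine obstacle here: once one observes that $\textit{if-mark}$ depends on its function argument only through its value at the node being processed, and only for partially marked nodes, the lemma is an immediate corollary of Lemma~\ref{lemma:mapeq} plus the bookkeeping fact that a $\leftM$- or $\rightM$-marked node of $\G$ lies in $\nodes\ {\G\filter{\leftM, \rightM}}$. The only mild care required is in tracking this membership through the definition of filter-by-contents. This is precisely the pattern that later lets a single statement about $\textit{if-mark}$ discharge the corresponding equalities for both $\sconsec$ and $\gdiff$.
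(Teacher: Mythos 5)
Your proposal is correct and matches the paper's own argument: the paper likewise reduces the goal to the pointwise condition of Lemma~\ref{lemma:mapeq} and discharges it by observing that $\textit{if-mark}$ consults its function argument only at partially marked nodes, which lie in $\nodes\ {\G\filter{\leftM,\rightM}}$ where $f_1$ and $f_2$ agree. Your explicit case analysis on the mark is just a slightly more detailed rendering of the same step.
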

\begin{lemma}\label{sconsec-p}
Let $\alpha$ be a sequence containing the nodes of
${\G\filter{\leftM,\rightM}}$. Then the following equations hold.
  \begin{align}
    \sconsecx{\G}{(\St \join p)} &= \sconsecx{\G}{\St}  \label{eq:invert}\\ 
    \gdiffx{\G}{(\St \join p)}{t} &= \gdiffx{\G}{\St}{p} \label{eq:gdiff}
  \end{align}
\end{lemma}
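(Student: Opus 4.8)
The plan is to strip the $\textit{map}$ and $\textit{if-mark}$ wrappers using Lemma~\ref{lemma:ifmark}, reducing each equation to a pointwise statement about the list functions $\pprev$ and $\nnext$. Unfolding definitions~(\ref{invert-def}) and~(\ref{restore-def}), equation~(\ref{eq:invert}) reads
\[
\textit{map}\ (\textit{if-mark}\ (\pprev\ (\nnull \join \St \join p)))\ \G \;=\; \textit{map}\ (\textit{if-mark}\ (\pprev\ (\nnull \join \St)))\ \G ,
\]
and equation~(\ref{eq:gdiff}) reads
\[
\textit{map}\ (\textit{if-mark}\ (\nnext\ (\St \join p \join t)))\ \G \;=\; \textit{map}\ (\textit{if-mark}\ (\nnext\ (\St \join p)))\ \G .
\]
By Lemma~\ref{lemma:ifmark}, each goal follows once the two functions passed to $\textit{if-mark}$ are shown to agree on every $x \in \nodes\ {\G\filter{\leftM,\rightM}}$. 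The hypothesis that $\St$ contains all nodes of ${\G\filter{\leftM,\rightM}}$ means every such $x$ occurs in the sequence $\St$, and since $\G$ (hence ${\G\filter{\leftM,\rightM}}$) is undefined on $\nnull$, such $x$ is moreover distinct from $\nnull$.

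It then remains to establish two elementary facts about appending one element at the tail of a sequence. Writing $\ell \join q$ for $\ell$ with $q$ appended: if $x$ already occurs in $\ell$, then $\ell$ and $\ell \join q$ coincide on all positions of $\ell$, and the positions consulted by $\pprev$ (the first occurrence of $x$ and the one immediately before it) all lie within that common part; hence $\pprev\ (\nnull \join \St \join p)\ x = \pprev\ (\nnull \join \St)\ x$ for every $x$ occurring in $\St$. For $\nnext$, because $x$ occurs in $\St$ its first occurrence in $\St \join p$ lies within the $\St$-prefix, so its successor is at a position no later than the element $p$ sitting immediately after $\St$ — again within the part common to $\St \join p$ and $\St \join p \join t$ — giving $\nnext\ (\St \join p \join t)\ x = \nnext\ (\St \join p)\ x$ for every $x$ occurring in $\St$. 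Combining each fact with the reduction via Lemma~\ref{lemma:ifmark} closes the corresponding equation.

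I do not expect a real obstacle. The only mildly delicate points are the bookkeeping of first-occurrence indices under a tail append — in particular the boundary case where $x$'s first occurrence is the last element of $\St$, which is precisely why the $\nnext$ argument needs ``no later than $p$'' rather than ``strictly inside $\St$'' — and discharging the $x \neq \nnull$ side condition of $\pprev$ from the fact that a $\notM/\leftM/\rightM$-filtered graph never lists $\nnull$ among its nodes. Notably, no distinctness assumption on $\St$ is needed, since $\pprev$ and $\nnext$ are defined through \emph{first} occurrences, which pin down the relevant position regardless of repetitions.
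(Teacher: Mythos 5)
Your proof is correct and follows essentially the same route as the paper's: both reduce the two equations via Lemma~\ref{lemma:ifmark} to pointwise agreement of the $\pprev$ (resp.\ $\nnext$) functions on $\nodes\ {\G\filter{\leftM,\rightM}}$, which holds because every such node occurs in $\St$. Your additional bookkeeping about first occurrences under a tail append (including the boundary case at the last element of $\St$ and the $x \neq \nnull$ observation) just makes explicit what the paper asserts with ``because $x \in \St$.''
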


Lemma~\ref{sconsec-empty} holds because \textit{if-mark} modifies only
partially marked nodes, and applies erasure to the rest of the graph.

In Lemma~\ref{lemma:ifmark}, if $f_1$ and $f_2$ agree on partially
marked nodes, then by definition, $\textit{if-mark}\ f_1\ x\ (\G\ x) =
\textit{if-mark}\ f_2\ x\ (\G\ x)$ for all $x \in \nodes\ \G$, because
$\textit{if-mark}$ modifies only partially marked nodes. The
conclusion then follows by Lemma~\ref{lemma:mapeq}.

Equation~(\ref{eq:invert}) follows from Lemma~\ref{lemma:ifmark}, by
taking $f_1\ x= \pprev\ (\nnull \join \St \join p)\ x$ and $f_2\ x =
\pprev\ (\nnull \join \St)\ x$, and observing that 
for every $x \in \nodes\ {\G\filter{\leftM,\rightM}}$
it must be $\pprev\ (\nnull \join \St \join p)\ x =
\pprev\ (\nnull \join \St)\ x$, because $x \in \St$.
Equation~(\ref{eq:gdiff}) follows similarly by taking $f_1\ x =
\nnext\ (\St \join p \join t)\ x$ and $f_2\ x = \nnext\ (\St \join
p)\ x$.

\section{Proof of Schorr-Waite}\label{sec:swproof}
The proof of the algorithm is divided into two steps. Following
\citet{Yang2001AnEO}, Section~\ref{sec:connected} first verifies
Schorr-Waite assuming that the input graph is connected from the root
node $r$.
Section~\ref{sec:pop} verifies the POP fragment, and 
Section~\ref{sec:global} extends to general graphs by framing the
nodes \emph{not} reachable from $r$.
The subproof utilize diverse forms of contextual localization, as we
discuss below.

\subsection{Proof for Connected Graphs}\label{sec:connected}
We first establish the following specification.
\begin{equation}
\begin{array}{c}\label{sw-connect}
\spec{\gseg\ \GG \wedge \goodg\ \GG \wedge r \in \nodes\ \GG \wedge \nodes\ \GG = \reachtwo{(\GG\filter{\notM})}{r}}\\
\textit{Schorr-Waite}\,(r)\\
\spec{\exists \G\ldot \gseg\ \G \wedge \erasure{\GG} = \erasure{\G} \wedge \allm{\G}}
\end{array}
\end{equation}
The precondition in (\ref{sw-connect}) says that the input graph $\GG$
is well-formed ($\gseg\ \GG$), closed, contains the root node $r$, and
is unmarked and connected from $r$ ($\nodes\ \GG =
\reachtwo{(\GG\filter{\notM})}{r}$).
The postcondition posits an ending graph $\G$ which, aside from node
marking, equals the input graph ($\erasure{\GG} = \erasure{\G}$), and
is fully marked itself ($\allm{\G}$).

\makeatletter
\newcounter{codesw}
\renewcommand{\lineno}{\stepcounter{codesw}\textsc{\thecodesw}.\quad}
\renewcommand{\linelb}[1]{{\refstepcounter{codesw}\ltx@label{#1}\textsc{\thecodesw}.\quad}}
\makeatother
\begin{figure}[!t]
  {
\begin{align*}
\linelb{sw-ln01} &\spec{\gseg\ \GG \wedge \goodg\ \GG \wedge r \in \nodes\ \GG \wedge \nodes\ \GG = \reachtwo{(\GG\filter{\notM})}{r}}\\
\linelb{sw-ln02} &\spec{\gseg\ \GG \wedge \invx{\GG}{\GG}{r}{\nnull}}\\
\lineno  &t \assign r;\ p \assign \nnull;\\
\lineno  &\spec{\exists \G\ldot \gseg\ \G \wedge \invx{\GG}{\G}{t}{p}}\\ 
\linelb{sw-ln05} &\textrm{\textbf{if}}\ t = \nnull\ \textrm{\textbf{then}}\ \tm \assign \textrm{true}\
  \textrm{\textbf{else}}\ \tmp \assign \deref{t.m};\ \tm \assign (\tmp \neq \notM)\ \textrm{\textbf{end if}};\\
\linelb{sw-ln06} &\spec{\exists \G\ldot \gseg\ \G \wedge \invx{\GG}{\G}{t}{p} \wedge \tm = (t \in \setmarknull{\G})} 
\qquad \hspace{20mm}\textrm{// loop invariant}\\
\lineno  &\textrm{\textbf{while}}\ p \neq \nnull \vee \neg \tm\ \textrm{\textbf{do}}\\
\linelb{sw-ln08} &\quad \spec{\exists \G\ldot \gseg\ \G \wedge \invx{\GG}{\G}{t}{p} \wedge \tm = (t \in \setmarknull{\G}) \wedge
    (p \neq \nnull \vee \neg \tm)}\\
\lineno  &\quad \textrm{\textbf{if}}\ \tm\ \textrm{\textbf{then}}\\
\lineno  &\qquad \spec{\exists \G\ldot \gseg\ \G \wedge \invx{\GG}{\G}{t}{p} \wedge
    p \neq \nnull \wedge t \in \setmarknull{\G}}\\
\linelb{sw-ln11} &\qquad \pM \assign \deref{p.m};\\ 
\lineno  &\qquad \spec{\exists \G\ldot \gseg\ \G \wedge \invx{\GG}{\G}{t}{p} 
   \wedge t \in \setmarknull{\G} \wedge \Gm\ p = \pM \in \{L, R\}}\\
\lineno  &\qquad \textrm{\textbf{if}}\ \pM = \rightM\ \textrm{\textbf{then}}\\
\linelb{sw-ln14}  &\qquad\quad \spec{\exists \G\ldot \gseg\ \G \wedge \invx{\GG}{\G}{t}{p}
    \wedge t \in \setmarknull{\G} \wedge \Gm\ p = \rightM}\\
\linelb{sw-ln15} &\qquad\quad \tmp \assign \deref{p.r};\ p.r \mutate t;\ p.m \mutate \killed;\
  t \assign p;\ p \assign \tmp \hspace{31.1mm} \textrm{//\ POP} \\
\linelb{sw-ln16}  &\qquad\quad \spec{\exists \G\ldot \gseg\ \G \wedge \invx{\GG}{\G}{t}{p}}\\ 
\lineno  &\qquad \textrm{\textbf{else}}\ \\
\linelb{sw-ln18}  &\qquad\quad \spec{\exists \G\ldot \gseg\ \G \wedge \invx{\GG}{\G}{t}{p}
    \wedge t \in \setmarknull{\G} \wedge \Gm\ p = \leftM}\\
\linelb{sw-ln19} &\qquad\quad \tmp_1 \assign \deref{p.r};\ \tmp_2 \assign \deref{p.l};\ p.r \mutate \tmp_2;\
    p.l \mutate t;\ p.m \mutate \rightM;\ t \assign \tmp_1\ \quad \textrm{//\ SWING} \\
  \linelb{sw-ln20}  &\qquad\quad \spec{\exists \G\ldot \gseg\ \G \wedge \invx{\GG}{\G}{t}{p}}\\ 
\lineno  &\qquad \textrm{\textbf{end if}} \\
\lineno  &\quad \textrm{\textbf{else}}\ \\ 
\linelb{sw-ln23}  &\qquad \spec{\exists \G\ldot \gseg\ \G \wedge \invx{\GG}{\G}{t}{p} \wedge t \notin \setmarknull{\G}}\\
\linelb{sw-ln24} &\qquad \tmp \assign \deref{t.l};\ t.l \mutate p;\ t.m \mutate \leftM;\
    p \assign t;\ t \assign \tmp\ \hspace{36mm}\ \textrm{//\ PUSH} \\
\linelb{sw-ln25}  &\qquad \spec{\exists \G\ldot \gseg\ \G \wedge \invx{\GG}{\G}{t}{p}}\\ 
\lineno  &\quad \textrm{\textbf{end if}}; \\
\linelb{sw-ln27} &\quad \spec{\exists \G\ldot \gseg\ \G \wedge \invx{\GG}{\G}{t}{p}}\\ 
\linelb{sw-ln28} &\quad \textrm{\textbf{if}}\ t = \nnull\ \textrm{\textbf{then}}\ \tm \assign \textrm{true}\
  \textrm{\textbf{else}}\ \tmp \assign \deref{t.m};\ \tm \assign (\tmp \neq \notM)\ \textrm{\textbf{end if}} \\
\lineno  &\quad \spec{\exists \G\ldot \gseg\ \G \wedge \invx{\GG}{\G}{t}{p} \wedge \tm = (t \in \setmarknull{\G})}\\
\lineno  &\textrm{\textbf{end while}}\\ 
\linelb{sw-ln31} &\spec{\exists \G\ldot \gseg\ \G \wedge \invx{\GG}{\G}{t}{p} \wedge p = \nnull \wedge t \in \setmarknull{\G}}\\
\linelb{sw-ln32} &\spec{\exists \G\ldot \gseg\ \G \wedge \erasure{\GG}\ = \erasure{\G} \wedge \allm{\G}}
\end{align*}
}%
\caption{Proof outline for the connected graph specification of
  Schorr-Waite. The fragments verifying the pointer primitives in
  lines~\ref{sw-ln05}, \ref{sw-ln11}, \ref{sw-ln15}, \ref{sw-ln19},
  \ref{sw-ln24}, \ref{sw-ln28} are elided, and illustrated
  separately. 
Notation $\setmarknull{\G}$ abbreviates
  $\nodes_0\ ({\G\filter{\leftM,\rightM,\killed}})$, the set of marked
  nodes of $\G$, including $\nnull$.\vspace{5mm}}\label{fig:sw}
\end{figure}

Fig.~\ref{fig:sw} presents the corresponding proof outline, most of
which is self-explanatory by standard Hoare logic. For clarity, we
elide the details about the stateful commands in lines~\ref{sw-ln05},
\ref{sw-ln15}, \ref{sw-ln19}, \ref{sw-ln24}, and~\ref{sw-ln28},
summarizing their effect through assertions. The elided parts are
shown elsewhere: for POP (line~\ref{sw-ln15}) in
Section~\ref{sec:pop}, for SWING (line~\ref{sw-ln19}), PUSH
(line~\ref{sw-ln24}), line~\ref{sw-ln05} and line~\ref{sw-ln28} in the
appendices. The proof of the remaining stateful command in
line~\ref{sw-ln11} is simple and elided altogether.

We detail the key non-trivial aspects of the proof outline: the
non-spatial reasoning steps from the precondition at
line~\ref{sw-ln01} to line~\ref{sw-ln02}, and from line~\ref{sw-ln31}
to the postcondition at line~\ref{sw-ln32}. In both cases, the graph
is at some point decomposed into disjoint parts (e.g., by
Lemma~\ref{filterdisj}), and the proof proceeds to analyze the
relationship between those parts. These steps illustrate another
example of contextual localization, here within a proof outline that
can be characterized as global, since all assertions in
Fig.~\ref{fig:sw} refer to the full graph $\G$, rather than its parts.

\paragraph*{\textbf{Precondition (lines~\ref{sw-ln01}-\ref{sw-ln02}).}}
Referring to Fig.~\ref{fig:sw}, line~\ref{sw-ln02} obtains from
line~\ref{sw-ln01} by unfolding the definition of $\inv$ from
Fig.~\ref{fig:loopinv} with $t = r$, $p = \nnull$, and instantiating
the existentially quantified stack $\St$ with the empty sequence
$\nil$. By setting $\St = \nil$, the unfolding derives proof
obligations: $\unique\ \nnull$, $\nnull = \llast\ (\nnull \join
\nil)$, $r \in \nodes_0\ \GG$, $\goodg\ \GG$,
$\nodes\ {\GG\filter{\leftM,\rightM}} = \emptyset$,
$\sconsecx{\GG}{\nil} = \erasure{\GG}$, $\gdiffx{\GG}{\nil}{t} =
\erasure{\GG}$, and $\nodes\ {\GG\filter{\notM}} \subseteq
\reachtwo{(\GG\filter{\notM})}{r}$.
The first four obligations are immediate, and properties
$\sconsecx{\GG}{\nil} = \erasure{\GG}$ and $\gdiffx{\GG}{\nil}{t} =
\erasure{\GG}$ follow by Lemma~\ref{sconsec-empty}.  The property
$\nodes\ {\GG\filter{\notM}} \subseteq
\reachtwo{(\GG\filter{\notM})}{r}$ is also easily shown because
$\nodes\ {\GG\filter{\notM}} \subseteq \nodes\ \GG$ and $\nodes\ \GG =
\reachtwo{(\GG\filter{\notM})}{r}$ by the assumption in
line~\ref{sw-ln01}.

To show the remaining $\nodes\ {\GG\filter{\leftM,\rightM}} =
\emptyset$, note that $\nodes\ {\GG\filter{\leftM,\rightM}} \subseteq
\nodes\ \GG$ and $\reachtwo{(\GG\filter{\notM})}{r} \subseteq
\nodes\ {\GG\filter{\notM}}$ because filtering and reachability select
a subset of graph's nodes. Thus, $\nodes\ {\GG\filter{\leftM,\rightM}}
\subseteq \nodes\ \GG = \reachtwo{(\GG\filter{\notM})}{r} \subseteq
\nodes\ {\GG\filter{\notM}}$. By Lemma~\ref{filterdisj},
$\GG\filter{\leftM,\rightM}$ and $\GG\filter{\notM}$ select
node-disjoint subgraphs of $\GG$. But
$\nodes\ {\GG\filter{\leftM,\rightM}}$ can be a subset of
$\nodes\ {\GG\filter{\notM}}$ with which it's disjoint, only if
$\nodes\ {\GG\filter{\leftM,\rightM}} = \emptyset$.

\paragraph*{\textbf{Postcondition (lines~\ref{sw-ln31}-\ref{sw-ln32}).}}
From $\invx{\GG}{\G}{t}{p}$ in line~\ref{sw-ln31}, it follows $p =
\llast\ (\nnull \cons \St)$ and $\unique\ (\nnull \cons \St)$. As
$p = \nnull$, it must be $\St = \nil$, i.e., the stack is
empty.
Then, by (\ref{invrestore}) in Fig.~\ref{fig:loopinv}, $\erasure{\GG}
= \gdiffx{\G}{\nil}{t}$, which in turn equals $\erasure\G$ by
Lemma~\ref{sconsec-empty}, thus proving $\erasure\GG=\erasure\G$ in
the postcondition.
By Lemma~\ref{filterdisj}, the remaining $\G = {\G\filter{\killed}}$
from the postcondition is equivalent to
$\nodes\ {\G\filter{\notM,\leftM,\rightM}} = \emptyset$. By
Lemma~\ref{filterdisj} again, and distributivity of $\nodes$, the
latter is further equivalent to
$\nodes\ {\G\filter{\notM}} \cup \nodes\ {\G\filter{\leftM,\rightM}}
= \emptyset$. By (\ref{invstack}) in Fig.~\ref{fig:loopinv},
$\nodes\ {\G\filter{\leftM,\rightM}} = \emptyset$, so it suffices to
show $\nodes\ {\G\filter{\notM}} = \emptyset$.
By (\ref{invreach}) of Fig.~\ref{fig:loopinv},
$\nodes\ {\G\filter{\notM}} \subseteq
\reachtwo{(\G\filter{\notM})}{t}$, i.e., unmarked nodes of $\G$ are
reachable from $t$ by unmarked paths.
But $t$ itself is marked
in line~\ref{sw-ln31},
so no node is reachable from $t$ by an unmarked path. Hence,
$\reachtwo{(\G\filter{\notM})}{t} = \emptyset$, and thus
$\nodes\ {\G\filter{\notM}} = \emptyset$.

\newcommand{\pp}{p_0}
\newcommand{\TT}{t_0}

\subsection{Proof of POP}\label{sec:pop}
The pre- and postcondition for POP derive from
lines~\ref{sw-ln14} and~\ref{sw-ln16} of Fig.~\ref{fig:sw}.
\begin{equation}\label{pre}
\begin{array}[c]{c} 
\spec{\exists \G\ldot \gseg\ \G \wedge \invx{\GG}{\G}{t}{p} \wedge t \in \setmarknull{\G} \wedge \Gm\ p = \rightM}\\
\textrm{POP} \\
\spec{\exists \G'\ldot\ \gseg\ \G' \wedge \invx{\GG}{\G'}{t}{p}}
\end{array}
\end{equation}
The precondition says that the heap implements a well-formed graph
$\G$, that satisfies the invariant, given $t$ and $p$. Additionally,
$t$ is marked or $\nnull$ and $p$ is marked $\rightM$, as POP is
invoked only if these properties are satisfied. The postcondition
asserts that the heap represents a new graph $\G'$ that satisfies the
invariant for the updated values of $t$ and $p$.  
The specification is given solely in terms of graphs to hide the
internal low-level reasoning about pointers, and expose only the more
abstract graph properties required in Fig.~\ref{fig:sw}.

The proof outline is in Fig.~\ref{fig:pop-outline-spatial} and divides
into two parts. The first part (lines~\ref{pop-ln1}-\ref{pop-ln9})
serves to show that the stateful commands implementing POP are safe to
execute, and result in a valid graph $\G'$.
The second part (lines~\ref{pop-ln9}-\ref{pop-ln10}) serves to show
that $\G'$ actually satisfies the Schorr-Waite invariants.  Along with
the analogous steps in the proofs of SWING and PUSH, this is the most
important, and most substantial part of the whole verification. We
discuss the two parts in more detail.

\newcommand{\pl}{p_l}
\newcommand{\pr}{p_r}

\makeatletter
\newcounter{codepop}
\renewcommand{\lineno}{\stepcounter{codepop}\textsc{\thecodepop}.\quad}
\renewcommand{\linelb}[1]{{\refstepcounter{codepop}\ltx@label{#1}\textsc{\thecodepop}.\quad}}
\makeatother
\begin{figure}[!t]
  \begin{align*}
\linelb{pop-ln1}    & \spec{\exists \G\ldot \gseg\ \G \wedge \invx{\GG}{\G}{t}{p}
    \wedge t \in \setmarknull{\G} \wedge \Gm\ p = \rightM}\\
\linelb{pop-ln2}  &\qquad \sspecopen{\gseg\ \G \wedge \TT = t \wedge \pp = p}\\
                  &\qquad \opensspec{\textcolor{teal}{\hbox{}\wedge \invx{\GG}{\G}{\TT}{\pp}
      \wedge \TT \in \setmarknull{\G} \wedge \G\ \pp = (\rightM, [p_l,p_r])}} \nonumber\\
\linelb{pop-ln3}  & \qquad\qquad \spec{\gseg\ (\pp \mapsto (\rightM, [p_l,p_r]) \join {\G{\setminus}\pp}) 
\wedge \TT = t \wedge \pp = p}\\
\linelb{pop-ln4} & \qquad\qquad \spec{(\pp \Mapsto \rightM, \pl, \pr \wedge \TT = t \wedge \pp = p)\
       \textcolor{teal}{*\ \gseg\ {\G{\setminus}\pp}}}\\
\linelb{pop-ln5}   &\qquad\qquad\qquad\spec{\pp \Mapsto \rightM,\pl,\pr \wedge \TT = t \wedge \pp = p}\\
    &\qquad\qquad\qquad\ \tmp \assign \deref{p.r};\ p.r \mutate t;\ p.m \mutate \killed;\
  t \assign p;\ p \assign \tmp; \qquad \textrm{//\ POP} \nonumber\\
\linelb{pop-ln6}   &\qquad\qquad\qquad\spec{\pp \Mapsto \killed,\pl,\TT \wedge t = \pp \wedge p = \pr}\\
\linelb{pop-ln7}    &\qquad\qquad\spec{(\pp \Mapsto \killed,\pl,\TT \wedge t = \pp \wedge p = \pr)\
        \textcolor{teal}{*\ \gseg\ {\G{\setminus}\pp}}}\\ 
\linelb{pop-ln8}    &\qquad\qquad\spec{\gseg\ (\pp \mapsto (\killed,[\pl,\TT]) \join {\G{\setminus}\pp}) \wedge
      t = \pp \wedge p = \pr}\\
\linelb{pop-ln9}  &\qquad \sspecopen{\gseg\ (\pp \mapsto (\killed,[\pl,\TT]) \join {\G{\setminus}\pp}) \wedge
      t = \pp \wedge p = \pr}\\
    &\qquad \opensspec{\textcolor{teal}{\hbox{}\wedge \invx{\GG}{\G}{\TT}{\pp}
    \wedge \TT \in \setmarknull{\G} \wedge \G\ \pp = (\rightM, [p_l,p_r])}}\nonumber\\
\linelb{pop-ln10}  &\spec{\exists \G'\ldot \gseg\ \G' \wedge \invx{\GG}{\G'}{t}{p}}
\end{align*}
\caption{Proof outline for POP. The property
  $\textcolor{teal}{\invx{\GG}{\G}{\TT}{\pp} \wedge \TT \in
    \setmarknull{\G} \wedge \G\ \pp = (\rightM, [p_l,p_r])}$
  propagates from line~\ref{pop-ln2} to line~\ref{pop-ln9} because it
  doesn't describe the heap or variables mutated by the
  program.}\label{fig:pop-outline-spatial}
\end{figure}

\paragraph*{\textbf{POP produces valid graph (lines~\ref{pop-ln1}-\ref{pop-ln9}).}}
Transitioning from line~\ref{pop-ln1} to line~\ref{pop-ln2} involves
eliminating the existential quantifier and saving the current values
of $t$, $p$, and the left and right child of $p$ into $\TT$, $\pp$,
$p_l$ and $p_r$, respectively. Saving these values prepares for
line~\ref{pop-ln3} to omit the last three conjuncts of
line~\ref{pop-ln2}, which will be re-attached in
line~\ref{pop-ln9}. The move is valid, because the conjuncts are
unaffected by execution, as they don't involve the heap or the variables
mutated by POP. 
Proceeding with the proof outline, line~\ref{pop-ln3} expands $\G$
around $\pp$ (=$p$), using equation~(\ref{expand1}), so that
distributivity of $\gseg$~(\ref{dist}) applies in line~\ref{pop-ln4}
to frame away $\gseg\ {\G{\setminus}\pp}$.
This is similar to how the pointer $j$ in Section~\ref{sec:overview}
was separated from the rest of the heap, in order to verify the
command that mutates it.
The derivation of line~\ref{pop-ln6} is elided because it's standard,
involving several applications of framing and the inference rules for
the stateful primitives. It suffices to say that the line reflects the
mutations of the pointer $\pp$, and the assignment of new values to
$t$ and $p$, in accord with the illustration of POP in
Fig.~\ref{fig:markcycle}. In particular, the right edge of $\pp$ is
redirected towards $\TT$, the node is marked $\killed$, and $t$ and
$p$ are set to $\pp$ and $\pr$, respectively.
The remainder of the proof outline up to line~\ref{pop-ln9} restores
the framed graph and the non-spatial conjuncts omitted in
line~\ref{pop-ln3}.

\begin{figure}[t!]
\rowcolors{2}{gray!7}{white}
{\allowdisplaybreaks 
\[{\displaystyle 
\begin{array}[c]{rl>{\quad}l<{}}
\rowcolor{gray!7}
  &\hspace{-5mm}(\mbox{\ref{invstack}})\ nodes\ {\G'\filter{\leftM,\rightM}} = &\text{Def. of $\G'$} \\
  &=\nodes\ {(\pp \mapsto (\killed, [\pl, \TT]) \join \G{\setminus}{\pp})\filter{\leftM,\rightM}} &\text{Lem. \ref{nodes-dist} \& \ref{filterc-dist} (distrib.)} \\
  &= \nodes\ {(\pp \mapsto (\killed, [\pl, \TT]))\filter{\leftM,\rightM}}\ \dotcup\ \nodes\ {(\G{\setminus}{\pp})\filter{\leftM,\rightM}} & \text{Def. of \textit{filter} (\ref{filterc-def})}\\ 
  &= \nodes\ {(\G{\setminus}{\pp})\filter{\leftM,\rightM}} & \text{Assump.~(\ref{pop-pre2}.\ref{invstack})}\\
            &= \St & \\[1.5mm]
  &\hspace{-5mm}(\mbox{\ref{invinvert}})\ \sconsecx{\G'}{\St} = & \text{Def. of $\G'$} \\
            &= \sconsecx{(\pp \mapsto (\killed, [\pl, \TT]) \join \G{\setminus}{\pp})}{\St} & \text{Lem. \ref{map-dist} (distrib.)} \\
            &= \sconsecx{(\pp \mapsto (\killed,[\pl,\TT]))}{\St} \join \sconsecx{\G{\setminus}{\pp}}{\St} \hspace*{5mm}& \text{Def. of $\sconsec$ (\ref{invert-def})} \\
            &= \pp \mapsto [\pl,\TT] \join \sconsecx{\G{\setminus}{\pp}}{\St} & \text{Lem.~\ref{sconsec-p}} \\
  &= \pp \mapsto [\pl,\TT] \join \sconsecx{\G{\setminus}{\pp}}{(\St \join \pp)} & \text{Assump.~(\ref{pop-pre2}.\ref{invinvert})} \\
            &= \pp \mapsto [\pl,\TT] \join \erasure{\G{\setminus}{\pp}} & \text{Def. of \textit{erasure} (\ref{erasure-def})} \\
            &= \erasure{\pp \mapsto (\killed,[\pl,\TT])} \join \erasure{\G{\setminus}{\pp}} & \text{Lem. \ref{erasure-dist} (distrib.)} \\
            &= \erasure{\pp \mapsto (\killed,[\pl,\TT]) \join \G{\setminus}{\pp}} & \text{Def. of $\G'$} \\
            &= \erasure{\G'} & \\[1.5mm]
  &\hspace{-5mm}(\mbox{\ref{invrestore}})\ \gdiffx{\G'}{\St}{\pp}= & \text{Def. of $\G'$}\\
            &= \gdiffx{(\pp \mapsto (\killed, [\pl, \TT]) \join \G{\setminus}{\pp})}{\St}{\pp} &\text{Lem. \ref{map-dist} (distrib.)}\\
            &= \gdiffx{(\pp \mapsto (\killed, [\pl, \TT]))}{\St}{\pp} \join \gdiffx{\G{\setminus}{\pp}}{\St}{\pp}&\text{Def. of $\gdiff$ (\ref{restore-def})}\\
            &= \pp \mapsto [\pl, \TT] \join \gdiffx{\G{\setminus}{\pp}}{\St}{\pp}&\text{Lem.~\ref{sconsec-p}}\\
            &= \pp \mapsto [\pl, \TT] \join \gdiffx{\G{\setminus}{\pp}}{(\St \join \pp)}{\TT}&\text{Def. of $\gdiff$ (\ref{restore-def})}\\
  &= \gdiffx{(\pp \mapsto (\rightM, [\pl, \pr]))}{(\St \join \pp)}{\TT} &\\
  &\hspace{4mm} \join\ \gdiffx{\G{\setminus}{\pp}}{(\St \join \pp)}{\TT}&\text{Lem. \ref{map-dist} (distrib.)}\\ 
            &= \gdiffx{(\pp \mapsto (\rightM, [\pl, \pr]) \join \G{\setminus}{\pp})}{(\St \join \pp)}{\TT} &\text{Assump.~(\ref{pop-pre2}.\ref{invrestore})}\\
            &= \erasure{\GG} &\\[1.5mm]
  &\hspace{-5mm}(\mbox{\ref{invreach}})\ \nodes\ {\G'\filter{\notM}} = & \text{Def. of \textit{filter}, $\nodes$, $\G$ \& $\G'$}\\ 
  &= \nodes\ {\G\filter{\notM}} & \text{Assump.~(\ref{pop-pre2}.\ref{invreach})}\\
  &\subseteq {\textstyle \bigcup\limits_{\St\cdot \pp}} (\reachone{(\G\filter{\notM})}\circ\G_r) \cup \reachtwo{(\G\filter{\notM})}{\TT} & \text{$\TT \notin \nodes\ {\G\filter{\notM}}$} \\
  &= {\textstyle\bigcup\limits_{\St\cdot \pp}} (\reachone{(\G\filter{\notM})}\circ\G_r)& \text{Comm.\&Assoc. of $\cup$}\\
  &= {\textstyle\bigcup\limits_{\St}}\ (\reachone{(\G\filter{\notM})}\circ\G_r) \cup \reachtwo{(\G\filter{\notM})}{(\G_r\ \pp)} & \text{$\G_r\ \pp \notin \nodes\ {\G\filter{\notM}}$}\\
  &= {\textstyle\bigcup\limits_{\St}}\ (\reachone{(\G\filter{\notM})}\circ\G_r)&\text{($\G_r = \G_r'$ on $\St$) \& (${\G\filter{\notM}} = {\G'\filter{\notM}}$)} \\
  &= {\textstyle\bigcup\limits_{\St}}\ (\reachone{(\G'\filter{\notM})}\circ\G'_r)& \text{$\pp \notin \nodes\ {\G'\filter{\notM}}$}\\ 
  &= {\textstyle\bigcup\limits_{\St}}\ (\reachone{(\G'\filter{\notM})}\circ\G'_r) \cup \reachtwo{(\G'\filter{\notM})}{\pp} & 
\end{array} 
}\]}
\caption{POP preserves invariants (\ref{invstack})-(\ref{invreach})
for $\G = \pp
 \mapsto (\rightM, [\pl, \pr]) \join \G{\setminus}{\pp}$ and $\G' =
 \pp \mapsto (\killed, [\pl, \TT]) \join \G{\setminus}{\pp}$. 
 \vspace{-2mm}
}\label{fig:pop-outline-non-spatial}
\end{figure}

\paragraph*{\textbf{Invariant preservation (lines~\ref{pop-ln9}-\ref{pop-ln10}).}}
This part of the proof is fully non-spatial and relies almost entirely
on the distributivity of various morphisms, along with a few general
graph lemmas from Section~\ref{sec:partialgraphs}. It follows the same
pattern of contextual locality as Lemma~\ref{lemma:closedlocal},
exploiting that the graphs in lines~\ref{pop-ln9} and~\ref{pop-ln10}
share the subgraph $\G{\setminus}\pp$.

As the first step, we reformulate the problem as the following
implication, whose premise restates the assertions about $\G$ from
line~\ref{pop-ln9}, while the conclusion specifies the ending graph
$\G'$.
\begin{align}
& \G = \pp \mapsto (\rightM, [\pl, \pr]) \join \G{\setminus}{\pp} \wedge \hbox{} \label{pop-pre1}\\
& \invpx{\GG}{\G}{(\St \join \pp)}{\TT}{\pp} \wedge \hbox{} \label{pop-pre2}\\
& \TT \in \setmarknull{\G} \implies \hbox{}\label{pop-pre3}\\
\exists \G'\ldot & \G' = \pp \mapsto (\killed,[\pl,\TT]) \join \G{\setminus}{\pp} \wedge \hbox{} \label{pop-post1}\\
& \invpx{\GG}{\G'}{\St}{\pp}{\pr}\label{pop-post2}
\end{align}

To see that the premise follows from line~\ref{pop-ln9}, note that
(\ref{pop-pre1}) expands $\G$ around $\pp$, using $\G\ \pp = (\rightM,
[\pl, \pr])$ from line~\ref{pop-ln9}; that (\ref{pop-pre2}) holds
because $\invx{\GG}{\G}{\TT}{\pp}$ in line~\ref{pop-ln9} implies that
$\pp$ is the top of stack of $\G$, which thus has the form $\St \join
\pp$ for some $\alpha$; and that (\ref{pop-pre3}) is an explicit
conjunct in line~\ref{pop-ln9}.

In the conclusion of the implication, (\ref{pop-post1}) reflects that
the $\pp$ is now marked $\killed$, and that its right edge is restored
towards $\TT$. On the other hand, (\ref{pop-post2}) indicates that the
new stack is $\St$ (node $\pp$ having been popped from the prior stack
$\St \join \pp$), and the new tip and stack's top are $\pp$ and $\pr$
respectively. These changes correspond to the illustration of POP in
Fig.~\ref{fig:mainops}, if one takes $\pp$, $\TT$ and $\pr$ as the
initial values of $p$, $t$, and $p$'s right child.  It's also readily
apparent that (\ref{pop-post1}) and (\ref{pop-post2}) imply the
postcondition of POP in line~\ref{pop-ln10}.

The second step of the proof proceeds to establish the invariant
$\invpx{\GG}{\G'}{\St}{\pp}{\pr}$ from~(\ref{pop-post2}) out of
$\invpx{\GG}{\G}{(\St \join \pp)}{\TT}{\pp}$ in~(\ref{pop-pre2}), and
$\TT \in \setmarknull{\G}$ from~(\ref{pop-pre3}). As $\invp$ is
defined in Fig.~\ref{fig:loopinv} in terms of conjuncts
(\ref{invbook})-(\ref{invreach}), each conjunct in (\ref{pop-post2})
is proved starting from the corresponding conjunct in
(\ref{pop-pre2}).
The subcase (\ref{invclosed}) of this proof is explicitly
Lemma~\ref{lemma:closedlocal}, and
Fig.~\ref{fig:pop-outline-non-spatial} presents the subcases
(\ref{invstack})-(\ref{invreach}) in the self-explanatory equational
style. The proofs follow the pattern of contextual localization from
Lemma~\ref{lemma:closedlocal}. They begin by decomposing $\G$ and
$\G'$ as in~(\ref{pop-pre1}) and~(\ref{pop-post1}), isolating the node
$\pp$ from the shared subcomponent $\G{\setminus}\pp$. Each subcase
then applies the appropriate morphism to distribute over the
decomposition, ultimately reducing the goal to a property of
$\G{\setminus}\pp$ that is already assumed. 

The remaining subcase (\ref{invbook}) doesn't have an equational
flavor so we give it here explicitly. That case requires showing that
(\ref{pop-pre1}-\ref{pop-pre3}) imply $\unique\ (\nnull \join \St)$,
$\pr = \llast\ (\nnull \join \St)$ and $\pp \in \nodes_0\ \G'$. The
only non-trivial property is $\pr = \llast\ (\nnull \join \St)$, i.e.,
that $\pr$ is the top of the stack $\St$. This holds because
(\ref{pop-pre2}) gives $\sconsecx{\G}{(\St \join \pp)} =
\erasure{\G}$, i.e., that the stack $\St \join \pp$ is a path in
$\G$. By definition of $\sconsec$ (\ref{invert-def}), $\pprev\ (\nnull
\join \St \join \pp)\ \pp = \G_r\ \pp = \pr$, and thus $\pr =
\llast\ (\nnull \join \St)$.

\subsection{Proof for General Graphs}\label{sec:global}
We can now establish the following general specification of
Schorr-Waite, which doesn't assume that $\G$ is connected from $r$.
\begin{equation*}
\begin{array}{c}
\spec{\gseg\ \GG \wedge \goodg\ \GG \wedge r \in \nodes\ \GG \wedge \allu{\GG}}\\
\textit{Schorr-Waite}\,(r)\\
\spec{\exists \G \ldot \gseg\ \G \wedge \erasure{\GG} = \erasure{\G}
      \wedge \G = {\G\filter{\killed}} \join {\G\filter{\notM}} \wedge \nodes\ {\G\filter{\killed}} = \reachtwo{(\GG\filter{\notM})}{r}}
\end{array}
\end{equation*}
As in specification~(\ref{sw-connect}), the precondition says that the
input graph $\GG$ is well-formed ($\gseg\ \GG$), closed, contains the
node $r$, and is unmarked ($\allu\GG$), but elides the conjunct about
connectedness. The postcondition posits an ending graph $\G$ which,
aside from node marking, equals the input graph ($\erasure{\GG} =
\erasure{\G}$), and splits into fully marked and unmarked parts ($\G =
        {\G\filter{\killed}} \join {\G\filter{\notM}}$), with the
        fully-marked part corresponding to the nodes initially
        reachable from $r$ ($\nodes\ {\G\filter{\killed}} =
        \reachtwo{(\GG\filter{\notM})}{r}$).

\makeatletter
\newcounter{codegsw}
\renewcommand{\lineno}{\stepcounter{codegsw}\textsc{\thecodegsw}.\quad}
\renewcommand{\linelb}[1]{{\refstepcounter{codegsw}\ltx@label{#1}\textsc{\thecodegsw}.\quad}}
\makeatother
\begin{figure}[!t] 
  \begin{align*}
\linelb{gl-ln1}  &\spec{\gseg\ \GG \wedge \goodg\ \GG \wedge r \in \nodes\ \GG \wedge \allu{\GG}}\\
\linelb{gl-ln2}    &\sspecopen{\exists \G_1 \G_2\ldot \gseg\ (\G_1 \join \G_2) \wedge \goodg\ \G_1 \wedge
      r \in \nodes\ \G_1 \wedge \nodes\ \G_1 = \reachtwo{(\G_1\filter{\notM})}{r}}\\
       &\opensspec{\hbox{} \wedge \nodes\ \G_1 = \reachtwo{((\G_1\join\G_2)\filter{\notM})}{r} \wedge \G_2 = {\G_2\filter{\notM}} \wedge \GG = \G_1 \join \G_2 }\nonumber\\
\linelb{gl-ln3}  &\quad\sspecopen{(\gseg\ \G_1 \wedge \goodg\ \G_1  \wedge r \in \nodes\ \G_1 \wedge \nodes\ \G_1 = \reachtwo{(\G_1\filter{\notM})}{r})}\\
    &\quad\opensspec{\textcolor{teal}{\hbox{} * (\gseg\ \G_2 \wedge \nodes\ \G_1 = \reachtwo{((\G_1\join\G_2)\filter{\notM})}{r}
      \wedge \allu{\G_2} \wedge \GG = \G_1 \join \G_2)}}\nonumber\\
\lineno    &\qquad\spec{\gseg\ \G_1 \wedge \goodg\ \G_1 \wedge r \in \nodes\ \G_1 \wedge \nodes\ \G_1 = \reachtwo{(\G_1\filter{\notM})}{r}}\\
    &\qquad\textit{Schorr-Waite}\,(r)\nonumber\\
\lineno &\qquad\spec{\exists \G_1'\ldot \gseg\ \G_1' \wedge \erasure{\G_1}\ = \erasure{\G_1'} \wedge \allm{\G_1'}}\\
\linelb{gl-ln6}  &\quad \sspecopen{(\exists \G_1'\ldot \gseg\ \G_1' \wedge \erasure{\G_1}\ = \erasure{\G_1'} \wedge \allm{\G_1'})}\\
    &\quad \opensspec{\textcolor{teal}{\hbox{} * (\gseg\ \G_2 \wedge \nodes\ \G_1 = \reachtwo{((\G_1\join\G_2)\filter{\notM})}{r} 
      \wedge \allu{\G_2} \wedge \GG = \G_1 \join \G_2)}}\nonumber\\
\linelb{gl-ln7}  &\quad \sspecopen{\exists \G_1' \ldot \gseg (\G_1' \join \G_2) \wedge \erasure{\GG}\ = \erasure{\G_1'\join \G_2}}\\
    &\quad \opensspec{\hbox{} \wedge \G_1' \join \G_2 = {(\G_1'\join\G_2)\filter{\killed}} \join {(\G_1'\join\G_2)\filter\notM} \wedge \nodes\ {(\G_1' \join \G_2)\filter{\killed}} = \reachtwo{(\GG\filter{\notM})}{r}}\nonumber\\
\linelb{gl-ln8}    &\spec{\exists \G \ldot \gseg\ \G \wedge \erasure{\GG} = \erasure{\G}
      \wedge \G = {\G\filter{\killed}} \join {\G\filter{\notM}} \wedge \nodes\ {\G\filter{\killed}} = \reachtwo{(\GG\filter{\notM})}{r}}
  \end{align*}
\caption{Proof outline for the general graph specification of Schorr-Waite.}\label{fig:gsw}
\end{figure}

The proof is in Fig.~\ref{fig:gsw}, and is obtained generically, by
framing~(\ref{sw-connect}) without reanalyzing the code. It utilizes
distributivity of $\gseg$ to set up the framing, and morphisms to
propagate the framed information from smaller to larger graph. This
propagation, carried out in the last step of the proof (from
line~\ref{gl-ln7} to line~\ref{gl-ln8}), is a form of contextual
localization in reverse, where instead of focusing inward, information
flows outward through the structure.

In more detail, $\GG$ is first split disjointly into $\G_1 =
{\GG\filter{(\reachtwo{\GG}{r})}}$ (part connected from $r$), and its
complement $\G_2$, so that $\GG = \G_1\join \G_2$. By definition then,
$\nodes\ \G_1 = \reachtwo{(\G_1 \join \G_2)}{r}$, so that by
Lemma~\ref{closed-sub2}, $\goodg\ \G_1$, $r \in \nodes\ \G_1$ and
$\nodes\ \G_1 = \reachtwo{\G_1}{r}$.
Because $\G_1$ and $\G_2$ are subgraphs of $\GG$, they are also
unmarked (${\G_1\filter{\notM}} = \G_1$ and ${\G_2\filter{\notM}} =
\G_2$), thus obtaining line~\ref{gl-ln2} in Fig.~\ref{fig:gsw}.
Distributivity of $\gseg$ (\ref{dist}) then derives line~\ref{gl-ln3},
which is a separating conjunction of the precondition
from~(\ref{sw-connect}) with the predicate that collects all the
conjuncts from line~\ref{gl-ln2} that refer to graph $\G_2$.  By
framing the specification~(\ref{sw-connect}), we can thus derive a
postcondition for Schorr-Waite in line~\ref{gl-ln6} that is a
separating conjunction of the conjuncts about $\G_2$, with the
postcondition from~(\ref{sw-connect}) that asserts the existence of a
graph $\G_1'$ such that $\erasure{\G_1} = \erasure{\G_1'}$ and $\G_1'
= {\G_1'}\filter{\killed}$.

Proceeding, line~\ref{gl-ln7} joins the graphs $\G_1'$ and $\G_2$
together as follows.
First, distributivity of $\gseg$ and erasure obtains $\gseg\ (\G_1'
\join \G_2)$, and $\erasure{\GG} = \erasure{\G_1 \join \G_2} = \erasure{\G_1' \join
  \G_2}$.
We also get ${\G_1'\filter{\notM}} =
{{\G_1'\filter{\killed}}\filter{\notM}} = \emptymap$ and
${\G_2\filter{\killed}} = {{\G_2\filter{\notM}}\filter{\killed}} =
\emptymap$ by Lemma~\ref{filter-disj}.
From here, 
$\G_1' \join
\G_2 = {\G_1'\filter\killed} \join {\G_2\filter\notM} =
{(\G_1'\join\G_2)\filter{\killed}} \join {(\G_1'\join\G_2)\filter\notM}$
and also 
$\nodes\ {(\G_1' \join \G_2)\filter{\killed}} = 
\nodes\ {\G_1'\filter{\killed}} = 
\nodes\ {\G_1'} = 
\nodes\ \erasure{\G_1'} = 
\nodes\ \erasure{\G_1} = 
\nodes\ {\G_1} = 
\reachtwo{(\GG\filter{\notM})}{r}$.
Finally, line~\ref{gl-ln8} abstracts over $\G = \G_1' \join \G_2$.

\section{Sketch of Union-Find Verification}\label{sec:unionfind}
\newcommand{\summit}{\textit{summit}}
\newcommand{\summits}{\textit{summits}}
\newcommand{\ccenter}[2]{\summit\ #1\ #2}
\newcommand{\ccentersg}[2]{\summits\ #1\ #2}
\newcommand{\ccenters}[1]{\summits\ #1}
\newcommand{\selfloops}{\textit{loops}}
\newcommand{\selfloopss}[1]{\selfloops\ #1}
\newcommand{\setx}{\textit{set}}
\newcommand{\set}[2]{\setx\ #1\ #2}
\newcommand{\unfoldset}[2]{\exists \G \ldot \gseg_1\ \G \wedge \ccenters{\G} = \selfloopss{\G} = \{#2\} \wedge \nodes\ \G = #1}

The union-find data structure manages a collection of disjoint sets,
supporting efficient set merging and membership queries. Each set is
represented in memory as an inverted tree, each element (a node in the
tree) points to its unique parent, and the root of the tree---which
points to itself---serves as the set’s representative. An element's
representative is obtained by following parent pointers upward. The
union operation merges two sets by making the root of one point to the
root of the other set, thereby unifying their representatives
(Fig.~\ref{fig:uftree}).

Because each node has a unique parent, union-find operates over
\emph{unary graphs}, with a $\gseg$ predicate analogous to that for
binary graphs from Section~\ref{sec:partialgraphs}.
\begin{align*}
  \gseg_1\, \emptymap\ &\eqdef \emp\\
  \gseg_1\, (x \mapsto y \join \G)\ &\eqdef x \Mapsto y * \gseg_1\, \G
\end{align*}
The main graph primitives in the specification for union-find are the following.
\[
\begin{array}{rcl}
\ccenter{\G}{x} & \eqdef &
  \left\{\begin{array}{ll}
    \bigcup\limits_{\Ga\,x} \ccenter{(\G{\setminus{x}})} & \mbox{if $x \in \nodes\ \G$} \\
    \{x\} & \mbox{otherwise}
  \end{array}\right.\\[5mm]
\ccenters{\G} & \eqdef & \bigcup\limits_{\nodes\ \G} \ccenter{\G}\\[5mm]
\selfloopss{\G} & \eqdef & \{x \mid x \in \Ga\ x\} 
\end{array}
\]
The set $\ccenter{\G}{x}$ contains the nodes of $\G$ that \emph{mark
an end (i.e., summit)} of a path from $x$. More precisely, $z \in
\ccenter{\G}{x}$ if there exists a path from $x$ to some node $y$ in
$\G$ such that $z$ is a child of $y$, and either $z \notin \nodes\ \G$
(i.e., the edge from $y$ to $z$ is dangling) or $z$ is already in the
path from $x$ to $y$ (thus starting a cycle).
The set $\ccenters{\G}$ collects all such ending nodes.
The set $\selfloopss\G$ collects the nodes of $\G$ that include
themselves in their adjacency list; that is, nodes forming cycles of
size $1$. 

\begin{figure}[t]
  \centering
  \includegraphics[trim=0mm 0 0 0,clip,width=0.9\textwidth]{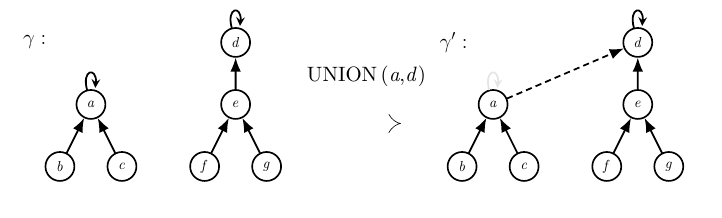} 
  \vspace{-5mm}
  \caption{Two disjoint sets encoded as inverted trees being merged by
    the operation of $\textrm{UNION}$. Initially, $a$ and $d$ represent the sets
    $\{a, b, c\}$ and $\{d, e, f, g\}$, respectively. After the
    union, $d$ represents all the elements.}
\label{fig:uftree}
\end{figure}

Fig.~\ref{fig:uftree} illustrates the relevance of the above
primitives for union-find. Specifically, $\ccenter$ computes the
representative of a given node (e.g, $\ccenter{\G}{c} = \{a\}$), as
the representatives in union-find are precisely the path-ending
nodes. Analogously, $\ccenters$ collects all the representatives
(e.g., $\ccenters{\G} = \{a, d\}$ and $\ccenters{\G'} = \{d\}$).
Importantly, $\ccenter$ and $\ccenters$ are somewhat more general
still, as they may return nodes that are outside of the graph
(dangling edges), or nodes that start cycles. In particular, if
$\ccenters{\G}$ only contains nodes in $\G$, then $\G$ is
closed. Furthermore, if $\ccenters{\G}$ only contains nodes that form
\emph{trivial} cycles (i.e., of size $1$, from $\selfloopss{\G}$),
then $\G$ is acyclic (modulo representatives), and thus an inverted
tree.

We thus use the above primitives not only to compute the
representatives, but also to define when a graph is an inverted
tree. We do so via the following predicate $\set{S}{x}$ which holds of
a heap that contains the layout of a \emph{single set} $S$ in a
union-find collection, whose representative is $x$.
\[
 \set{S}{x} \eqdef \unfoldset{S}{x}
\]
Indeed, the definition requires the existence of a unary graph $\G$
whose nodes are exactly $S$, such that $\G$ has $x$ as the unique
summit (i.e., $\G$ is closed), and $x$ forms a trivial cycle (i.e.,
$\G$ is inverted tree). For example, in the graph $\G$ in
Fig.~\ref{fig:uftree} we have $\set{\{a, b, c\}} a$ and $\set{\{d, e,
  f, g\}} d$.

\newcommand{\res}{\textit{result}}

The following Hoare triples specify the methods of union-find, where
the method's postcondition denotes the method's return value by the
dedicated variable $\res$.\footnote{This requires extending
the heretofore used standard logic of O'Hearn, Reynolds and Yang with
value-returning methods, but Hoare Type Theory admits such an
extension.}
\[
\begin{array}{rcl}
\spec{emp} & \textrm{NEW} & \spec{\set{\{\res\}}{\res}}\\
\spec{\set{S}{y} \wedge x \in S} & \textrm{FIND}\,(x) & \spec{\set{S}{y} \wedge \res = y}\\
\spec{\set{S_1}{x_1} * \set{S_2}{x_2}} & \textrm{UNION}\,(x_1, x_2) & \spec{\set{(S_1\,\dotcup\,S_2)}{\res} \wedge \res \in \{x_1, x_2\}}
\end{array}
\]
\textrm{NEW} starts from an empty heap and allocates a
node that forms a singleton set and serves as its own
representative. $\textrm{FIND}$ takes a node $x$ known to belong to
some set $S$, and returns the representative $y$ of
$S$. $\textrm{UNION}$ assumes that $x_1$ and $x_2$ are
representatives of disjoint sets and joins the sets, returning one of
$x_1$ or $x_2$ as the new representative (Fig.~\ref{fig:uftree}).
Notably, our specifications refer only to the disjoint sets each
operation manipulates, following the \emph{small footprint}
style. While this is the norm in separation logic generally, we're
unaware of prior union-find verifications that adopt it.

The implementation and the proof outlines for the methods are in the
appendix. Here, we only discuss how morphisms and distributivity help
with the verification of union-find.
The main challenge in this verification is establishing, in various
subproofs, the inclusion $\ccenters{\G} \subseteq \selfloopss{\G}$,
which is one side of the defining equation $\ccenters{\G} =
\selfloopss{\G}$ from the $\setx$ predicate. This closely resembles
the definition of $\goodg$ in Section~\ref{sec:partialgraphs} as
$\adj\ \G \subseteq \nodes_0\ \G$, where the fact that $\adj$ and
$\nodes$ are morphisms enabled contextual localization via
Lemma~\ref{lemma:closedlocal}.

In the case of union-find, we apply the same principle of localization
by leveraging distribution properties of $\selfloops$ and $\summits$. The former is a standard PCM morphism.
The latter, however, exhibits more nuanced
behavior. First, it distributes only under the specific condition that
all cycles in the graph are trivial---a property satisfied by
union-find graphs. Second, and more unusually, its distribution
doesn't follow the standard additive form of morphisms, but
\emph{subtracts} nodes from the opposing component to avoid
interference. 
\begin{align*}
  \summits\ (\G_1 \join \G_2) = (\summits\ \G_1){\setminus{\nodes\
    \G_2}} \cup (\summits\ \G_2){\setminus{\nodes\ \G_1}} 
\end{align*}
The intuition is that a summit in a subgraph $\G_1$ is a path-ending
node. If the path ends with a dangling edge into $\G_2$, the summit
ceases to be path-ending in the composition, where the path
continues. The equation accounts for this by explicitly removing the
overlap through set subtraction.

While the full theory of $\selfloops$ and $\summits$ is more
extensive, and holds for general (not just unary) graphs, the
distribution principle above suffices to draw a high-level analogy
with Schorr-Waite, and illustrate how contextual localization
applies equally well to union-find.

\section{Related work}\label{sec:related}

\paragraph*{\textbf{Proofs of Schorr-Waite in separation logic.}}
The starting point of our paper has been Yang's
proof~\cite{Yang2001AnEO,yang2001}, which is an early work on
separation logic generally, and the first work on graphs in separation
logic specifically. The distinction with our proof is that Yang
doesn't use mathematical graphs as an explicit argument to the $\gseg$
predicate, but rather relies on non-spatial proxy abstractions, such
as the spanning tree of the graph, and various subsets of nodes.
These proxy abstractions aren't independent, and to keep them
synchronized with each other and with the graph laid out in the heap,
the proof must frequently switch between non-spatial and spatial
reasoning, at the cost of significant formal overhead. In contrast, we
avoid the overhead by keeping most of the reasoning about graphs at
the non-spatial level.

Considering his first proof too complex, Yang tackled Schorr-Waite
again using \emph{relational} separation
logic~\cite{Yang2007RelationalSL}. Relational logic establishes a
contextual refinement between two programs; in the case of
Schorr-Waite, a refinement with the obvious depth-first-search (DFS)
implementation of graph marking.
While the resulting proof achieved a conceptual simplification over
the original non-relational proof, this is somewhat counter-intuitive,
as refinement between two programs is \emph{generally} a much stronger
property---and thus more demanding to prove---than establishing a pre-
and postcondition for a program.
Our result confirms this intuition. In comparison, an optimization of
Yang's relational proof is given by \citet{cre+kun:sefm11}, who show
that Schorr-Waite is contextually equivalent to DFS, and mechanize the
proof in around 3000 lines in Coq.

\paragraph*{\textbf{Non-separation proofs of Schorr-Waite.}}
Being a standard verification benchmark for graph algorithms,
Schorr-Waite has been verified numerous times, using a variety of
different approaches. These ranged from automated
ones~\cite{Suzuki76,log+rep+sag:sas06,roe:icalp77,leino:lpar10}, to
studies of the algorithm's mathematical
properties~\cite{Bubel2007,meh+nip:cade03,%
  Topor1979TheCO,Griffiths1979,Morris1982,%
  der:ipl80,Gries1979,bornat00,abr:fme03,hub+mar:sefm05}, to proofs
based on program
transformation~\cite{duf:icfem14,pre+bac:afp10,Gerhart1979,ward96,%
  gio+str+mat+pan:lopstr10,pre+bac:fac12,bro+pep:toplas}.
The important difference from this work is that we explicitly wanted
to support and utilize framing in our proof, as framing is the key
feature of separation logic.

\paragraph*{\textbf{Proofs of union-find.}}
The union-find structure~\cite{gal+fis:cacm64} is also a well-studied
graph benchmarks, with numerous correctness proofs, including in
separation logic. For instance, \citet{con+fil:ml07} verify a
persistent version in Coq that uses two functional arrays, while
\citet{lam+mei:2012} verify an implementation in
Imperative/HOL. \citet{neelk:phd} provides a non-mechanized proof, and
\citet{cha+pot:jar17} give a Coq proof with a complexity analysis,
both targeting a heap-allocated implementation, as we
do. \citet{wan+qin+moh+hob:oopsla19}, additionally verifies an
array-based variant in Coq.
Our approach differs in two key ways: we use partial graphs rather
than traditional graph theory in order to leverage PCM morphisms; we
also adopt small-footprint specifications that describe only the
modified disjoint subsets, as opposed to the whole structure.

\paragraph*{\textbf{Graphs in separation logic, without PCMs.}}
In the search for a $\gseg$ predicate that enables decomposition,
~\citet{bor+cal+ohe:space04} define a partial graph as a
recursive tree-shaped term, encoding the following strategy. Given
some default traversal order, the first time a node is encountered,
it's explicitly recorded in the term. Every later occurrence is
recorded as a reference (i.e., pointer) to the first one. This enables
that closed graphs can be encoded in a way that facilitates
decomposition, but is dependent on the traversal order, which is
problematic, as it prevents developing general libraries of lemmas
about graphs. We used the PCM of partial graphs to remove the
restriction to closed graphs, which also removes the traversal-order
dependence.

\citet{wang:phd20} and~\citet{wan+qin+moh+hob:oopsla19}
parametrize their $\gseg$ predicate with a closed mathematical graph,
and express program invariant in terms of it. Because a closed graph
doesn't necessarily decompose into closed subgraphs, these invariants
aren't considered under distribution. Thus, to prove that a state
modification maintains the graph invariant, one typically must resort
to reasoning about the whole graph. This global reasoning can be
ameliorated by introducing additional logical connectives and rules,
such as e.g., overlapping conjunction and ramified frame
rules~\cite{hob+vil:popl13}. In contrast, by relying on PCMs and
morphisms, we were able to stay within standard separation logic over
heaps.

\paragraph*{\textbf{Graphs in separation logic, with PCMs.}}
\citet{ser+nan+ban:pldi15} and~\citet{nan:oplss16} parametrized their
$\gseg$ predicate by a heap (itself a PCM), that serves as a
representation of a partial graph. Heaps obviously decompose under
disjoint union. However, since the $\gseg$ predicate includes the
restriction that the implemented graph must be closed, the predicate
itself doesn't distribute. This is worked around by introducing helper
relations for when one graph is a subgraph of another, which our
formulation provides for free (e.g., $\G_1$ is a subgraph of $\G$ iff
$\G = \G_1 \join \G_2$ for some $\G_2$). These works don't consider
PCM morphisms or Schorr-Waite.
 
More recently, \citet{kri+sum+wie:esop20}
and~\citet{mey+wie:wol:tacas23} introduced a theory of \emph{flows},
as a framework to study the decomposition of graph properties. Flows
evoke Yang's first proof, in that they serve as
decomposition-supporting proxies for the specification of the graph.
In contrast, we parametrize the reasoning by the whole graph, and compute proxies by
morphisms, when needed. That said, flows are intended for automated
reasoning, which we haven't considered.

Finally, \citet{cos+bro+pym:pregraphs} define the PCM of
\emph{pregraphs}. A pregraph is like our partial graph, but it may
contain incoming dangling edges, not just outgoing ones that sufficed
for us. The paper proceeds to define a separation logic over
pregraphs, much as separation logic over heaps is classically defined
using local actions~\cite{cal+ohe+yan:lics07}. In contrast, we use a
standard separation logic over heaps, where (partial) graphs are
merely a secondary user-level PCM. Having heaps and graphs coexist is
necessary for representing pointer-based graph algorithms such as
Schorr-Waite, but it requires PCM morphisms to mediate between the two.

\paragraph*{\textbf{Morphisms in separation logic.}}
While extant separation logics extensively rely on PCMs,
PCM morphisms remains underutilized. A notable exception is the work
of~\citet{far+nan+ban+del+fab:popl21}, who develop a theory of
\emph{partial PCM morphisms}---morphisms that distribute only under
certain conditions. These conditions give rise to \emph{separating
relations}, a novel algebraic structure with rich theoretical
properties. Together, morphisms and separating relations define when
one PCM is a sub-PCM of another, and how to refine Hoare logic triples
accordingly.

This refinement concept originated
with~\citet{nan+ban+del+fab:oopsla19}, who formulated it as a morphism
over \emph{resources}, which are algebraic structures modeling
state-transition systems for concurrency. Although resources involve a
form of structural inclusion,
they don't form PCMs, nor does their theory rely on join-based
decomposition. Moreover, resource morphisms act on programs, to
retroactively adapt a program's ghost code via a simulation function,
whereas PCM morphisms operate within logical assertions. As a result,
resources and their morphisms address orthogonal concerns to ours.
Finally, neither of the works applies their morphisms
to graphs.

\paragraph*{\textbf{Graphs algebraically.}}
Recent works in functional programming, theorem proving and category
theory have proposed treating graphs
algebraically~\cite{mas:arxiv21,mas:arxiv22,lie+sch:24,mok:17,mok:22},
though so far without application to separation logic. Most
recently,~\citet{ois+wu:popl25} share with us the representation of
graphs as maps, which they further endow with the algebraic structure
of rings, comprising distinct monoids for vertices and for edges. They
also propose several different algebraic constructions along with the
associated morphisms. Their application is in implementing graph
algorithms as state-free coinductive programs in Agda and Haskell.
In contrast, we use the algebra of graphs for specification and
verification of programs, not for their implementation, as our
programs in general, and our variant of Schorr-Waite in particular,
are implemented in an imperative pointer-based language customary for
separation logic. Our focus on separation logic also gives rise to
somewhat different monoidal structures. In particular, as we need to
relate graphs to pointers and framing, we focus on partiality and,
correspondingly, use a monoid of graphs whose join operation is very
different from the constructions considered in the above work.

\section{Conclusion}\label{sec:conclusion}
This paper establishes that graphs form a 
PCM when extended with \emph{dangling edges}, yielding \emph{partial
graphs}. The PCM structure facilitates natural composition through
subgraph joins, and induces \emph{PCM morphisms} as
structure-preserving functions. Our central contribution demonstrates
how PCM morphisms address separation logic's long-standing challenge
of effective graph verification.

Crucially, PCM morphisms enrich the foundational principle of
locality: while traditional framing enables spatial locality by
isolating heap portions, morphisms enable non-spatial locality by
isolating relevant graph subcomponents. The key mechanism is
\emph{contextual localization}; that is, distributing morphisms across
subgraph joins ($f(\G_1 \join \G_2) = f{\G_1} \join f{\G_2}$). Unlike
framing, which operates only at the top level of a specification,
contextual localization supports rewriting deeply inside a context.

We further employ \emph{higher-order combinators} like \textit{map} to
define helper notions for complex graph invariants (e.g., in
Schorr-Waite) and to support general lemmas that can be reused across
multiple instances. The integration of partial graphs, morphisms, and
higher-order combinators yields \emph{novel}, \emph{mechanized proofs}
for both \emph{Schorr-Waite algorithm} and the \emph{union-find data
structure}, achieving substantial conciseness through lemma reuse and
contextual localization.

All proofs are fully mechanized in Coq using the Hoare Type Theory
library. The mechanization uses functional (non-mutable) variables in
place of mutable ones used in the paper, but this is a technical
variation with no impact on the results.

\begin{acks}
We thank Anindya Banerjee and the anonymous ICFP reviewers for their
numerous useful comments. This work was partially supported by the
Spanish MICIU grants DECO PID2022-138072OB-I00, Maria de Maeztu
CEX2024-001471-M, PRODIGY TED2021-132464B-I00, and the ERC Starting
Grant CRETE GA:101039196. Any views, opinions and findings expressed
in the text are those of the authors and don't necessarily reflect the
views of the funding agencies.
\end{acks}

\vfill

\bibliography{bibliography} 

\appendix

\section{Proof outline for the length-calculating program}\label{sec:lengthoutline}

\begin{align*}
& \spec{list\ \alpha_0\ (i, \nnull)}\\
&  n \assign 0;\\
& \spec{list\ \alpha_0\ (i, \nnull) \wedge n = 0}\\
&  j \assign i;\\
& \spec{i = j \wedge list\ \alpha_0\ (i, \nnull) \wedge n = 0}\\
& \spec{list\ \nil\ (i, j) * list\ \alpha_0\ (j, \nnull) \wedge n = \len{\nil} \wedge \alpha_0 = \nil \cat \alpha_0}\\
& \spec{\exists \alpha\ \beta\ldot list\ \alpha\ (i, j) * list\ \beta\ (j, \nnull) \wedge n = \len{\alpha} \wedge \alpha_0 = \alpha \cat \beta}\\
& \textrm{\textbf{while}}\ j \neq \nnull\ \textrm{\textbf{do}} \\
& \quad \spec{\exists \alpha\ \beta\ldot \lseg\ \alpha\ (i, j) * \lseg\ \beta\ (j, \nnull) \wedge n = \len{\alpha} \wedge \alpha_0 = \alpha \cat \beta \wedge j \neq \nnull}\\
& \quad \spec{\exists \alpha\ b\ \beta'\ldot \lseg\ \alpha\ (i, j) * \lseg\ (b\cons\beta')\ (j, \nnull) \wedge n = \len{\alpha} \wedge \alpha_0 = \alpha \cat (b \cons \beta')}\\
& \quad \spec{\exists \alpha\ b\ \beta'\ k\ldot \lseg\ \alpha\ (i, j) * j \Mapsto b, k * \lseg\ \beta'\ (k, \nnull) \wedge n = \len{\alpha} \wedge \alpha_0 = \alpha \cat (b \cons \beta')}\\
& \quad j \assign \deref{j.\textrm{next}};\\
& \quad \spec{\exists \alpha\ b\ \beta'\ j'\ldot \lseg\ \alpha\ (i, j') * j' \Mapsto b, j * \lseg\ \beta'\ (j, \nnull) \wedge n = \len{\alpha} \wedge \alpha_0 = \alpha \cat (b \cons \beta')}\\
& \quad \spec{\exists \alpha\ b\ \beta'\ldot \lseg\ (\alpha\cat\sngl{b})\ (i, j) * \lseg\ \beta'\ (j, \nnull) \wedge n = \len{\alpha} \wedge \alpha_0 = (\alpha\cat\sngl{b}) \cat \beta'}\\
& \quad n \assign n+1;\\
& \quad \spec{\exists \alpha\ b\ \beta'\ldot \lseg\ (\alpha\cat\sngl{b})\ (i, j) * \lseg\ \beta'\ (j, \nnull) \wedge n = \len{\alpha} + 1 \wedge \alpha_0 = (\alpha\cat\sngl{b}) \cat \beta'}\\
& \quad \spec{\exists \alpha\ b\ \beta'\ldot \lseg\ (\alpha\cat\sngl{b})\ (i, j) * \lseg\ \beta'\ (j, \nnull) \wedge n = \llen{\alpha\cat\sngl{b}} \wedge \alpha_0 = (\alpha\cat\sngl{b}) \cat \beta'}\\
& \quad \spec{\exists \alpha'\ \beta'\ldot \lseg\ \alpha'\ (i, j) * \lseg\ \beta'\ (j, \nnull) \wedge n = \len{\alpha'} \wedge \alpha_0 = \alpha' \cat \beta'}\\
& \textrm{\textbf{end while}} \\
&  \spec{\exists \alpha'\ \beta'\ldot list\ \alpha'\ (i, j) * list\ \beta'\ (j, \nnull) \wedge n = \len{\alpha'} \wedge \alpha_0 = \alpha' \cat \beta' \wedge j = \nnull}\\
&  \spec{\exists \alpha'\ \beta'\ldot list\ \alpha'\ (i, j) * list\ \beta'\ (j, \nnull) \wedge n = \len{\alpha'} \wedge \alpha_0 = \alpha' \cat \beta' \wedge j = \nnull \wedge \beta' = \nil}\\
&  \spec{\exists \alpha'\ldot list\ \alpha'\ (i, \nnull) * list\ \nil\ (\nnull, \nnull) \wedge n = \len{\alpha'} \wedge \alpha_0 = \alpha'}\\
&  \spec{list\ \alpha_0\ (i, \nnull) * emp \wedge n = \len{\alpha_0}}\\
&  \spec{list\ \alpha_0\ (i, \nnull) \wedge n = \len{\alpha_0}}
\end{align*}

\section{Proof outline for computing if $t$ is marked (or $\nnull$)}\label{sec:readtm-outline}
{\allowdisplaybreaks
  \begin{align*}
    &\spec{\exists \G \ldot \gseg\ \G \wedge \inv\ \G_0\ \G\ t\ p}\\
    &\quad \spec{\gseg\ \G \wedge \inv\ \G_0\ \G\ t\ p}\\
  &\quad \textrm{\textbf{if}}\ t = \nnull\ \textrm{\textbf{then}}\\
  &\qquad \spec{\gseg\ \G \wedge \inv\ \G_0\ \G\ t\ p \wedge t = null}\\
  &\qquad \tm \assign \textrm{true}\\
  &\qquad \spec{\gseg\ \G \wedge \inv\ \G_0\ \G\ t\ p \wedge t = null \wedge \tm = \textrm{true}}\\
  &\qquad \spec{\gseg\ \G \wedge \inv\ \G_0\ \G\ t\ p \wedge \tm = (t = null)}\\
  &\quad \textrm{\textbf{else}}\ \\
  &\qquad \spec{\gseg\ \G \wedge \inv\ \G_0\ \G\ t\ p \wedge t \neq null}\\
  &\qquad \spec{\exists m \ldot t \Mapsto m,-,- * \gseg\ \G{\setminus} t \wedge \inv\ \G_0\ \G\ t\ p}\\
  &\qquad \tmp \assign \deref{t.m};\\
  &\qquad \spec{\exists m \ldot t \Mapsto m,-,- * \gseg\ \G{\setminus} t \wedge \inv\ \G_0\ \G\ t\ p \wedge \tmp = m}\\
    &\qquad \tm \assign (\tmp \neq \notM)\\
    &\qquad \spec{\exists m \ldot t \Mapsto m,-,- * \gseg\ \G{\setminus} t \wedge \inv\ \G_0\ \G\ t\ p \wedge \tmp = m \wedge \tm = (\tmp \neq \notM)}\\
    &\qquad \spec{\exists m \ldot t \Mapsto m,-,- * \gseg\ \G{\setminus} t \wedge \inv\ \G_0\ \G\ t\ p \wedge \tm = (m \neq \notM)}\\
    &\qquad \spec{\gseg\ \G \wedge \inv\ \G_0\ \G\ t\ p \wedge \tm = (t.m \neq \notM)}\\
  &\qquad \spec{\gseg\ \G \wedge \inv\ \G_0\ \G\ t\ p \wedge \tm = (t \in \nodes\ {\G\filter{\leftM,\rightM,\killed}})}\\
    &\quad \textrm{\textbf{end if}}\\
    &\quad \spec{\gseg\ \G \wedge \inv\ \G_0\ \G\ t\ p \wedge \tm = (t = \nnull \vee t \in \nodes\ {\G\filter{\leftM,\rightM,\killed}})}\\
    &\quad \spec{\gseg\ \G \wedge \inv\ \G_0\ \G\ t\ p \wedge \tm = (t \in \setmarknull{\G})} \\
    &\spec{\exists \G \ldot \gseg\ \G \wedge \inv\ \G_0\ \G\ t\ p \wedge \tm = (t \in \setmarknull{\G})}
\end{align*}}
 
\section{Proof for SWING}

The pre- and postcondition for SWING derive from
lines~\ref{sw-ln18} and~\ref{sw-ln20} of Fig.~\ref{fig:sw}.
\begin{equation}\label{swing-spec}
\begin{array}[c]{c} 
\spec{\exists \G\ldot \gseg\ \G \wedge \invx{\GG}{\G}{t}{p}
    \wedge t \in \setmarknull{\G} \wedge \Gm\ p = \leftM}\\
\textrm{SWING}  \\
\spec{\exists \G'\ldot\ \gseg\ \G' \wedge \invx{\GG}{\G'}{t}{p}}
\end{array}
\end{equation}
The precondition says that the heap implements a well-formed graph
$\G$, that satisfies the invariant. Additionally, $t$ is marked or
$\nnull$ and $p$ is marked $\leftM$. The postcondition asserts that
the heap represents a new graph $\G'$ that satisfies the invariant for
the updated values of $t$ and $p$.
\makeatletter
\newcounter{codeswing}
\renewcommand{\lineno}{\stepcounter{codeswing}\textsc{\thecodeswing}.\quad}
\renewcommand{\linelb}[1]{{\refstepcounter{codeswing}\ltx@label{#1}\textsc{\thecodeswing}.\quad}}
\makeatother
{\allowdisplaybreaks
  \begin{align*}
\linelb{swing-ln1}    & \spec{\exists \G\ldot \gseg\ \G \wedge \invx{\GG}{\G}{t}{p}
    \wedge t \in \setmarknull{\G} \wedge \Gm\ p = \leftM}\\
\linelb{swing-ln2}  &\qquad \sspecopen{\gseg\ \G \wedge t = \TT \wedge p = \pp}\\
                  &\qquad \opensspec{\textcolor{teal}{\wedge \invx{\GG}{\G}{\TT}{\pp}
      \wedge \TT \in \setmarknull{\G} \wedge \G\ \pp = (\leftM, [\pl,\pr])}} \nonumber\\
\linelb{swing-ln3}  & \qquad\qquad \spec{\gseg\ (\pp \mapsto (\leftM, [\pl,\pr]) \join \G{\setminus} \pp) 
\wedge t = \TT \wedge p = \pp}\\
\linelb{swing-ln4} & \qquad\qquad \spec{(\pp \Mapsto \leftM, \pl, \pr \wedge t = \TT \wedge p = \pp)\
       \textcolor{teal}{*\ \gseg\ \G{\setminus} \pp}}\\
\linelb{swing-ln5}   &\qquad\qquad\qquad\spec{\pp \Mapsto \leftM,\pl,\pr \wedge t = \TT \wedge p = \pp}\\
    &\qquad\qquad\qquad\ \tmp_1 \assign \deref{p.r};\ tmp_2 \assign \deref{p.l};\ p.r \mutate \tmp_2;\
    p.l \mutate t;\ p.m \mutate \rightM;\ t \assign \tmp_1;  \nonumber\\
\linelb{swing-ln6}   &\qquad\qquad\qquad\spec{\pp \Mapsto \rightM,\TT,\pl \wedge t = \pr \wedge p = \pp}\\
\linelb{swing-ln7}    &\qquad\qquad\spec{(\pp \Mapsto \rightM,\TT,\pl \wedge t = \pr \wedge p = \pp)\
        \textcolor{teal}{*\ \gseg\ \G{\setminus} \pp}}\\ 
\linelb{swing-ln8}    &\qquad\qquad\spec{\gseg\ (\pp \mapsto (\rightM,[\TT,\pl]) \join \G{\setminus} \pp) \wedge
      t = \pr \wedge p = \pp}\\
\linelb{swing-ln9}  &\qquad \sspecopen{\gseg\ (\pp \mapsto (\rightM,[\TT,\pl]) \join \G{\setminus} \pp) \wedge
      t = \pr \wedge p = \pp}\\
    &\qquad \opensspec{\textcolor{teal}{\wedge \invx{\GG}{\G}{\TT}{\pp}
    \wedge \TT \in \setmarknull{\G} \wedge \G\ \pp = (\leftM, [p_l,p_r])}}\nonumber\\
\linelb{swing-ln10}  &\spec{\exists \G'\ldot \gseg\ \G' \wedge \invx{\GG}{\G'}{t}{p}}
\end{align*}}
\paragraph*{\textbf{Line~\ref{swing-ln9} implies line~\ref{swing-ln10}.}}
As for the case of POP, this step involves reformulating it as an implication,
where initial and final values of the graph, stack and nodes $t$ and $p$ are
made explicit.

\begin{align}
& \G = \pp \mapsto (\leftM, [\pl, \pr]) \join \G{\setminus}{\pp} \wedge \hbox{} \label{swing-pre1}\\
& \invpx{\GG}{\G}{\St}{\TT}{\pp} \wedge \hbox{} \label{swing-pre2}\\
& \TT \in \setmarknull{\G} \implies \hbox{}\label{swing-pre3}\\
\exists \G'\ldot & \G' = \pp \mapsto (\rightM,[\TT,\pl]) \join \G{\setminus}{\pp} \wedge \hbox{} \label{swing-post1}\\
& \invpx{\GG}{\G'}{\St}{\pr}{\pp}\label{swing-post2}
\end{align}

SWING differs from the other two operations in that the stack remains
unaltered. Furthermore, because we know $\pp \neq \nnull$,
$\unique\ (\nnull \join \St)$ and $p_0 = \llast\ (\nnull \join \St)$
it follows that $\St = \St' \join \pp$ for some sequence $\St'$.
The invariant (\ref{invbook}) $\pr \in \nodes_0\ \G'$ follows from
(\ref{swing-pre1}) and (\ref{swing-pre2}) as we know that
$\pr \in \adj\ \G$, $\goodg\ \G$ and $\nodes_0\ \G = \nodes_0\ \G'$.
 {\allowdisplaybreaks
 \begin{alignat*}{3}
   &(\mbox{\ref{invclosed}})\ \adj\ \G'=  && \text{Def. of $\G'$}\\
   &= \adj\ (\pp \mapsto (\rightM, [\TT, \pl]) \join \G{\setminus}{\pp}) && \text{Lem. \ref{sinks-dist} (distrib.)}\\
   &= \adj\ (\pp \mapsto (\rightM, [\TT, \pl])) \cup \adj\ (\G{\setminus}{\pp}) && \text{Def. of $\adj$ (\ref{sinks-def})} \\ 
   &= \{\TT,\pl\} \cup \adj\ (\G{\setminus}{\pp}) && \text{Set inclusion} \\
   &\subseteq \{\TT,\pl,\pr\} \cup \adj\ (\G{\setminus}{\pp}) && \text{Assump.~(\ref{swing-pre2}) \& (\ref{swing-pre3})} \\
             &\subseteq \nodes_0\ \G && \text{Def. of $\nodes$} \\
   &= \nodes_0\ \G' &&  \\[3mm]
    &(\mbox{\ref{invstack}})\ nodes\ {\G'\filter{\leftM,\rightM}} = &&\text{Def. of $\G'$} \\
    &=\nodes\ {(\pp \mapsto (\rightM, [\TT, \pl]) \join \G{\setminus}{\pp})\filter{\leftM,\rightM}} &&\text{Lem. \ref{nodes-dist} (distrib.)} \\
   &= \nodes\ {(\pp \mapsto (\rightM, [\TT, \pl]))\filter{\leftM,\rightM}}\ \dotcup\ \nodes\ {(\G{\setminus}{\pp})\filter{\leftM,\rightM}} && \text{Def. of \textit{filter} (\ref{filterc-def})}\\
   &= \nodes\ (\pp \mapsto (\rightM, [\TT, \pl]))\ \dotcup\ \nodes\ {(\G{\setminus}{\pp})\filter{\leftM,\rightM}} && \text{Def. of \textit{nodes}}\\
   &= \{\pp\}\ \dotcup\ \nodes\ {(\G{\setminus}{\pp})\filter{\leftM,\rightM}} && \text{Def. of \textit{nodes}} \\
   &= \nodes\ (\pp \mapsto (\leftM, [\pl, \pr]))\ \dotcup\ \nodes\ {(\G{\setminus}{\pp})\filter{\leftM,\rightM}} && \text{Def. of \textit{filter} (\ref{filterc-def})} \\
    &= \nodes\ {(\pp \mapsto (\leftM, [\pl, \pr]))\filter{\leftM,\rightM}}\ \dotcup\ \nodes\ {(\G{\setminus}{\pp})\filter{\leftM,\rightM}} && \text{Lem. \ref{nodes-dist} (distrib.)} \\
    &=\nodes\ {(\pp \mapsto (\leftM, [\pl, \pr]) \join \G{\setminus}{\pp})\filter{\leftM,\rightM}} && \text{Assump.~(\ref{swing-pre2})} \\
    &=\St && \\[3mm]
   &(\mbox{\ref{invinvert}})\ \sconsecx{\G'}{\St} = && \text{Def. of $\G'$} \\
             &= \sconsecx{(\pp \mapsto (\rightM, [\TT, \pl]) \join \G{\setminus}{\pp})}{\St} && \text{Lem. \ref{map-dist} (distrib.)} \\
             &= \sconsecx{(\pp \mapsto (\rightM, [\TT, \pl]))}{\St} \join \sconsecx{\G{\setminus}{\pp}}{\St} \hspace*{5mm}&& \text{Def. of $\sconsec$ (\ref{invert-def})} \\
             &= \pp \mapsto [\TT, \pprev\ (\nnull\join\St)\ \pp] \join \sconsecx{\G{\setminus}{\pp}}{\St}  && \text{Assump.~(\ref{swing-pre2})} \\
             &= \pp \mapsto [\TT, \pl] \join \erasure{\G{\setminus}{\pp}} && \text{Def. of \textit{erasure} (\ref{erasure-def})} \\
             &= \erasure{\pp \mapsto (\rightM, [\TT, \pl])} \join \erasure{\G{\setminus}{\pp}} && \text{Lem. \ref{erasure-dist} (distrib.)} \\
             &= \erasure{\pp \mapsto (\rightM, [\TT, \pl]) \join \G{\setminus}{\pp}} && \text{Def. of $\G'$} \\
             &= \erasure{\G'} & \\[3mm]
    &(\mbox{\ref{invrestore}})\ \gdiffx{\G'}{\St}{\pr}= && \text{Def. of $\G'$}\\
              &= \gdiffx{(\pp \mapsto (\rightM, [\TT, \pl]) \join \G{\setminus}{\pp})}{\St}{\pr} &&\text{Lem. \ref{map-dist} (distrib.)}\\
              &= \gdiffx{(\pp \mapsto (\rightM, [\TT, \pl]))}{\St}{\pr} \join \gdiffx{\G{\setminus}{\pp}}{\St}{\pr}\quad&&\text{Def. of $\gdiff$ (\ref{restore-def})}\\
   &= \pp \mapsto [\TT, \nnext\ (\St \join \pr)\ \pp] \join \gdiffx{\G{\setminus}{\pp}}{\St}{\pr}&&\text{Assump.~(\ref{swing-pre2})}\\
   &= \pp \mapsto [\TT, \pr] \join \gdiffx{\G{\setminus}{\pp}}{\St}{\pr}&&\text{Lem.~\ref{eq:gdiff}}\\
              &= \pp \mapsto [\TT, \pr] \join \gdiffx{\G{\setminus}{\pp}}{\St}{\TT}&&\text{Def. of $\gdiff$ (\ref{restore-def})}\\
    &= \gdiffx{(\pp \mapsto (\leftM, [\pl, \pr]))}{\St}{\TT} \join\ \gdiffx{\G{\setminus}{\pp}}{\St}{\TT}&&\text{Lem. \ref{map-dist} (distrib.)}\\ 
   &= \gdiffx{(\pp \mapsto (\leftM, [\pl, \pr]) \join \G{\setminus}{\pp})}{\St}{\TT} &&\text{Def. of $\G$}\\
   &= \gdiffx{\G}{\St}{\TT} &&\text{Assump.~(\ref{swing-pre2})}\\
              &= \erasure{\GG} &&\\[3mm]
    &(\mbox{\ref{invreach}})\ \nodes\ {\G'\filter{\notM}} = && \text{Def. of $\G'$}\\
    &= \nodes\ {(\pp \mapsto (\rightM, [\TT,\pl]) \join \G{\setminus}{\pp})\filter{\notM}} &&\text{Lem.~\ref{filterc-dist} (distrib.)} \\
    &= \nodes\ ({(\pp \mapsto (\rightM, [\TT,\pl]))\filter{\notM}} \join {(\G{\setminus}{\pp})\filter{\notM}}) &&\text{Lem.~\ref{nodes-dist} (distrib.)} \\
    &= \nodes\ {(\pp \mapsto (\rightM, [\TT,\pl]))\filter{\notM}}\ \dotcup\ \nodes\ {(\G{\setminus}{\pp})\filter{\notM}} &&\text{Def. of \textit{filter} (\ref{filterc-def})} \\
    &= \nodes\ e\ \dotcup\ \nodes\ {(\G{\setminus}{\pp})\filter{\notM}} &&\text{Def. of \textit{filter} (\ref{filterc-def})} \\
    &= \nodes\ {(\pp \mapsto (\leftM, [\pl, \pr]))\filter{\notM}}\ \dotcup\ \nodes\ {(\G{\setminus}{\pp})\filter{\notM}} &&\text{Lem.~\ref{nodes-dist} (distrib.)} \\
    &= \nodes\ ({(\pp \mapsto (\leftM, [\pl, \pr]))\filter{\notM}} \join {(\G{\setminus}{\pp})\filter{\notM}}) &&\text{Lem.~\ref{filterc-dist} (distrib.)} \\
    &= \nodes\ {(\pp \mapsto (\leftM, [\pl, \pr]) \join \G{\setminus}{\pp})\filter{\notM}} &&\text{Def. of $\G$} \\
     &= \nodes\ {\G\filter{\notM}} && \text{Assump.~(\ref{swing-pre2})}\\
     &\subseteq {\textstyle \bigcup\limits_{\St'\cdot \pp}} (\reachone{\G\filter{\notM}}\ \circ\ \G_r) \cup \reachtwo{\G\filter{\notM}}{\TT} && \text{$\TT \notin \nodes\ {\G\filter{\notM}}$} \\
     &= {\textstyle\bigcup\limits_{\St'\cdot \pp}} (\reachone{\G\filter{\notM}}\ \circ\ \G_r)&& \text{Comm.\&Assoc. of $\cup$}\\
     &= {\textstyle\bigcup\limits_{\St'}}\ (\reachone{\G\filter{\notM}}\ \circ\ \G_r) \cup\reachtwo{\G\filter{\notM}}{\pr} && \text{${\G\filter{\notM}} = {\G'\filter{\notM}}$}\\
     &= {\textstyle\bigcup\limits_{\St'}}\ (\reachone{\G'\filter{\notM}}\ \circ\ \G_r) \cup\reachtwo{\G'\filter{\notM}}{\pr} && \text{$\G_r = \G_r'$ on $\St'$}\\
     &= {\textstyle\bigcup\limits_{\St'}}\ (\reachone{\G'\filter{\notM}}\ \circ\ \G'_r) \cup\reachtwo{\G'\filter{\notM}}{\pr} && \text{$\G'_r\ \pp \notin \nodes\ {\G'\filter{\notM}}$}\\
     &= {\textstyle\bigcup\limits_{\St'\cdot \pp}}\ (\reachone{\G'\filter{\notM}}\ \circ\ \G'_r) \cup\reachtwo{\G'\filter{\notM}}{\pr} &&
 \end{alignat*}}

\newcommand{\tl}{t_l}
\newcommand{\tr}{t_r}

\section{Proof for PUSH}
The pre- and postcondition for PUSH derive from
lines~\ref{sw-ln23} and~\ref{sw-ln25} of Fig.~\ref{fig:sw}.
\begin{equation}\label{push-spec}
\begin{array}[c]{c} 
\spec{\exists \G\ldot \gseg\ \G \wedge \invx{\GG}{\G}{t}{p} \wedge t \notin \setmarknull{\G}}\\
\textrm{PUSH}  \\
\spec{\exists \G'\ldot\ \gseg\ \G' \wedge \invx{\GG}{\G'}{t}{p}}
\end{array}
\end{equation}
The precondition says that the heap implements a well-formed graph
$\G$, that satisfies the invariant and $t$ is an unmarked node
different from $\nnull$. The postcondition asserts that the heap
represents a new graph $\G'$ that satisfies the invariant for the
updated values of $t$ and $p$.

\makeatletter
\newcounter{codepush}
\renewcommand{\lineno}{\stepcounter{codepush}\textsc{\thecodepush}.\quad}
\renewcommand{\linelb}[1]{{\refstepcounter{codepush}\ltx@label{#1}\textsc{\thecodepush}.\quad}}
\makeatother
{\allowdisplaybreaks
  \begin{align*}
\linelb{push-ln1}    & \spec{\exists \G\ldot \gseg\ \G \wedge \invx{\GG}{\G}{t}{p} \wedge t \notin \setmarknull{\G}}\\
\linelb{push-ln2}  &\qquad \sspecopen{\gseg\ \G \wedge t = \TT \wedge p = \pp}\\
                  &\qquad \opensspec{\textcolor{teal}{\wedge \invx{\GG}{\G}{\TT}{\pp}
      \wedge \G\ \TT = (\notM, [\tl,\tr])}} \nonumber\\
\linelb{push-ln3}  & \qquad\qquad \spec{\gseg\ (\TT \mapsto (\notM, [\tl,\tr]) \join \G{\setminus} \TT) 
\wedge t = \TT \wedge p = \pp}\\
\linelb{push-ln4} & \qquad\qquad \spec{(\TT \Mapsto \notM, \tl, \tr \wedge t = \TT \wedge p = \pp)\
       \textcolor{teal}{*\ \gseg\ \G{\setminus} \TT}}\\
\linelb{push-ln5}   &\qquad\qquad\qquad\spec{\TT \Mapsto \notM, \tl, \tr \wedge t = \TT \wedge p = \pp}\\
    &\qquad\qquad\qquad\ \tmp \assign \deref{t.l};\ t.l \mutate p;\ t.m \mutate \leftM;\
    p \assign t;\ t \assign \tmp;  \nonumber\\
\linelb{push-ln6}   &\qquad\qquad\qquad\spec{\TT \Mapsto \leftM,\pp,\tr \wedge t = \tl \wedge p = \TT}\\
\linelb{push-ln7}    &\qquad\qquad\spec{(\TT \Mapsto \leftM,\pp,\tr \wedge t = \tl \wedge p = \TT)\
        \textcolor{teal}{*\ \gseg\ \G{\setminus} \TT}}\\ 
\linelb{push-ln8}    &\qquad\qquad\spec{\gseg\ (\TT \mapsto (\leftM,[\pp,\tr]) \join \G{\setminus} \TT) \wedge t = \tl \wedge p = \TT}\\
\linelb{push-ln9}  &\qquad \sspecopen{\gseg\ (\TT \mapsto (\leftM,[\pp,\tr]) \join \G{\setminus} \TT) \wedge t = \tl \wedge p = \TT}\\
    &\qquad \opensspec{\textcolor{teal}{\wedge \invx{\GG}{\G}{\TT}{\pp}
    \wedge \G\ \TT = (\notM, [\tl,\tr])}}\nonumber\\
\linelb{push-ln10}  &\spec{\exists \G'\ldot \gseg\ \G' \wedge \invx{\GG}{\G'}{t}{p}}
\end{align*}}

\paragraph*{\textbf{Line~\ref{push-ln9} implies line~\ref{push-ln10}.}}
Proving this last step corresponds to proving the following
implication. 
\begin{align}
& \G = \TT \mapsto (\notM, [\tl, \tr]) \join \G{\setminus}{\TT} \wedge \hbox{} \label{push-pre1}\\
& \invpx{\GG}{\G}{\St}{\TT}{\pp} \implies \hbox{}\label{push-pre2}\\
\exists \G'\ldot & \G' = \TT \mapsto (\leftM,[\pp,\tr]) \join \G{\setminus}{\TT} \wedge \hbox{} \label{push-post1}\\
& \invpx{\GG}{\G'}{(\St \join \TT)}{\tl}{\TT}\label{push-post2}
\end{align}

Proving invariant (\ref{invbook}) requires showing that (\ref{push-pre1}) and
(\ref{push-pre2}) imply $\unique\ (\nnull \join \St \join \TT)$,
$\TT = \llast\ (\nnull \join \St \join \TT)$ and
$\tl \in \nodes_0\ \G'$.
From $\G_m\ \TT = \notM$ and $\nodes\ {\G\filter{\leftM,\rightM}} = \St$ it
follows $\unique\ (\nnull \join \St \join \TT)$.
By \ref{push-pre1} it follows $\tl \in \adj\ \G$. Then $\goodg\ \G$ implies
that $\tl \in \nodes_0\ \G$. And finally $\nodes_0\ \G = \nodes_0\ \G'$ so
$\tl \in \nodes_0\ \G'$. 

 {\allowdisplaybreaks
 \begin{alignat*}{3}
  &(\mbox{\ref{invclosed}})\ \adj\ \G'=  && \text{Def. of $\G'$}\\
  &= \adj\ (\TT \mapsto (\leftM, [\pp, \tr]) \join \G{\setminus}{\TT}) && \text{Lem. \ref{sinks-dist} (distrib.)}\\
  &= \adj\ (\TT \mapsto (\leftM, [\pp, \tr])) \cup \adj\ (\G{\setminus}{\TT}) && \text{Def. of $\adj$ (\ref{sinks-def})} \\ 
  &= \{\pp,\tr\} \cup \adj\ (\G{\setminus}{\TT}) && \text{Set inclusion} \\
  &\subseteq \{\pp,\tl,\tr\} \cup \adj\ (\G{\setminus}{\TT}) && \text{Assump.~(\ref{push-pre2})} \\
            &\subseteq \nodes_0\ \G && \text{Def. of $\nodes$} \\
  &= \nodes_0\ \G' &&  \\[3mm]
  &(\mbox{\ref{invstack}})\ nodes\ {\G'\filter{\leftM,\rightM}} = &&\text{Def. of $\G'$} \\
  &=\nodes\ {(\TT \mapsto (\leftM, [\pp, \tr]) \join \G{\setminus}{\TT})\filter{\leftM,\rightM}} &&\text{Lem. \ref{filterc-dist} (distrib.)} \\
  &=\nodes\ ({(\TT \mapsto (\leftM, [\pp, \tr]))\filter{\leftM,\rightM}} \join {(\G{\setminus}{\TT})\filter{\leftM,\rightM}}) &&\text{Def. of \textit{filter} (\ref{filterc-def})} \\
   &=\nodes\ (\TT \mapsto (\leftM, [\pp, \tr]) \join {(\G{\setminus}{\TT})\filter{\leftM,\rightM}}) &&\text{Lem.~\ref{nodes-dist} (distrib.)} \\
  &= \nodes\ {(\TT \mapsto (\leftM, [\pp, \tr]))\ \dotcup\ \nodes\ (\G{\setminus}{\TT})\filter{\leftM,\rightM}} && \text{Def. of $\nodes$}\\
  &= \{\TT\}\ \dotcup\ \nodes\ {(\G{\setminus}{\TT})\filter{\leftM,\rightM}} && \text{Def. of \textit{filter} (\ref{filterc-def})} \\
  &= \{\TT\}\ \dotcup\ \nodes\ {(\TT \mapsto (\notM,[\tl,\tr]) \join \G{\setminus}{\TT})\filter{\leftM,\rightM}} && \text{Def. of $\G$} \\
   &= \{\TT\}\ \dotcup\ \nodes\ {\G\filter{\leftM,\rightM}} && \text{Assump.~(\ref{push-pre2})} \\
  &= \{\TT\}\ \dotcup\ \St && \text{Set equality} \\
  &= (\St \join \TT) && \\[3mm]
  &(\mbox{\ref{invinvert}})\ \sconsecx{\G'}{(\St \join \TT)} = && \text{Def. of $\G'$} \\
            &= \sconsecx{(\TT \mapsto (\leftM, [\pp, \tr]) \join \G{\setminus}{\TT})}{(\St \join \TT)} && \text{Lem. \ref{map-dist} (distrib.)} \\
            &= \sconsecx{(\TT \mapsto (\leftM, [\pp, \tr]))}{(\St \join \TT)} \join \sconsecx{\G{\setminus}{\TT}}{(\St \join \TT)} \hspace*{5mm}&& \text{Def. of $\sconsec$ (\ref{invert-def})} \\
  &= \TT \mapsto [\pprev\ (\nnull \join \St \join \TT)\ \TT, \tr] \join \sconsecx{\G{\setminus}{\TT}}{(\St \join \TT)} && \text{Assump.~(\ref{push-pre2})} \\
  &= \TT \mapsto [\pp, \tr] \join \sconsecx{\G{\setminus}{\TT}}{(\St \join \TT)} && \text{Lem.~\ref{eq:invert}} \\
            &= \TT \mapsto [\pp, \tr] \join \sconsecx{\G{\setminus}{\TT}}{\St} && \text{Def. of \textit{erasure} (\ref{erasure-def})} \\
            &= \erasure{\TT \mapsto (\leftM, [\pp, \tr])} \join \erasure{\G{\setminus}{\TT}} && \text{Lem. \ref{erasure-dist} (distrib.)} \\
            &= \erasure{\TT \mapsto (\leftM, [\pp, \tr]) \join \G{\setminus}{\TT}} && \text{Def. of $\G'$} \\
            &= \erasure{\G'} && \\[3mm]
  &(\mbox{\ref{invrestore}})\ \gdiffx{\G'}{(\St \join \TT)}{\tl}= && \text{Def. of $\G'$}\\
            &= \gdiffx{(\TT \mapsto (\leftM, [\pp, \tr]) \join \G{\setminus}{\TT})}{(\St \join \TT)}{\tl} &&\text{Lem. \ref{map-dist} (distrib.)}\\
  &= \gdiffx{(\TT \mapsto (\leftM, [\pp, \tr]))}{(\St \join \TT)}{\tl} && \\
   &\qquad \join \gdiffx{\G{\setminus}{\TT}}{(\St \join \TT)}{\tl}&&\text{Def. of $\gdiff$ (\ref{restore-def})}\\
            &= \TT \mapsto [\nnext\ (\St \join \TT \join \tl)\ \TT, \tr] \join \gdiffx{\G{\setminus}{\TT}}{(\St \join \TT)}{\tl}&&\text{Assump.~(\ref{push-pre2})}\\
  &= \TT \mapsto [\tl, \tr] \join \gdiffx{\G{\setminus}{\TT}}{(\St \join \TT)}{\tl}&&\text{Lem.~\ref{eq:gdiff}}\\
            &= \TT \mapsto [\tl, \tr] \join \gdiffx{\G{\setminus}{\TT}}{\St}{\TT}&&\text{Def. of $\gdiff$ (\ref{restore-def})}\\
  &= \gdiffx{(\TT \mapsto (\notM, [\tl, \tr]))}{\St}{\TT} &&\\
   &\qquad \join\ \gdiffx{\G{\setminus}{\TT}}{\St}{\TT}&&\text{Lem. \ref{map-dist} (distrib.)}\\ 
  &= \gdiffx{(\TT \mapsto (\notM, [\tl, \tr]) \join \G{\setminus}{\TT})}{\St}{\TT} &&\text{Def. of $\G$}\\
   &= \gdiffx{\G}{\St}{\TT} &&\text{Assump.~(\ref{push-pre2})}\\
            &= \erasure{\GG} &&\\[3mm]
   &(\mbox{\ref{invreach}})\ \nodes\ {\G'\filter{\notM}} = && \text{Def. of $\G'$}\\
   &= \nodes\ {(\TT \mapsto (\leftM, [\pp,\tr]) \join \G{\setminus}{\TT})\filter{\notM}} &&\text{Lem.~\ref{filterc-dist} (distrib.)} \\
    &= \nodes\ ({(\TT \mapsto (\leftM, [\pp,\tr]))\filter{\notM}} \join {(\G{\setminus}{\TT})\filter{\notM}}) &&\text{Lem.~\ref{nodes-dist} (distrib.)} \\
    &= \nodes\ {(\TT \mapsto (\leftM, [\pp,\tr]))\filter{\notM}}\ \dotcup\ \nodes\ {(\G{\setminus}{\TT})\filter{\notM}} &&\text{Def. of \textit{filter} (\ref{filterc-def})} \\
    &= \nodes\ e\ \dotcup\ \nodes\ {(\G{\setminus}{\TT})\filter{\notM}} &&\text{Def. of $\nodes$} \\
    &= \nodes\ {(\G{\setminus}{\TT})\filter{\notM}} &&\text{Set difference} \\
   &= (\{\TT\}\ \dotcup\ \nodes\ {(\G{\setminus}{\TT})\filter{\notM}}){\setminus}{\TT} &&\text{Def. of $\nodes$} \\
   &= (\nodes\ (\TT \mapsto (\notM, [\tl,\tr]))\ \dotcup\ \nodes\ {(\G{\setminus}{\TT})\filter{\notM}}){\setminus}{\TT} &&\text{Def. of \textit{filter} (\ref{filterc-def})} \\
   &= (\nodes\ {(\TT \mapsto (\notM, [\tl,\tr]))\filter{\notM}}\ \dotcup\ \nodes\ {(\G{\setminus}{\TT})\filter{\notM}}){\setminus}{\TT} &&\text{Lem.~\ref{nodes-dist} (distrib.)} \\
   &= (\nodes\ ({\TT \mapsto (\notM, [\tl,\tr]) \join \G{\setminus}{\TT})\filter{\notM}}){\setminus}{\TT} &&\text{Def. of $\G$} \\
   &= (\nodes\ {\G\filter{\notM}}){\setminus}{\TT}&& \text{Assump.~(\ref{push-pre2})}\\
    &\subseteq ({\textstyle \bigcup\limits_{\St}} (\reachone{\G\filter{\notM}}\ \circ\ \G_r) \cup
     \reachtwo{\G\filter{\notM}}{\TT}){\setminus}{\TT} && \text{Lem.~\ref{notin-reach-single} \& \ref{in-reach-single}} \\
   &= ({\textstyle \bigcup\limits_{\St}} (\reachone{(\G\filter{\notM}){\setminus}{\TT}}\ \circ\ \G_r) \cup \reachtwo{\G\filter{\notM}}{\TT}){\setminus}{\TT} && \text{Def. of $\rreach$ (\ref{def:reachvia})} \\
   &= ({\textstyle \bigcup\limits_{\St}} (\reachone{(\G\filter{\notM}){\setminus}{\TT}}\ \circ\ \G_r) \cup \{\TT\} \cup {\textstyle \bigcup\limits_{\{\tl,\tr\}}} \reachone{(\G\filter{\notM})}){\setminus}{\TT} && \text{Set difference} \\
   &= {\textstyle \bigcup\limits_{\St}} (\reachone{(\G\filter{\notM}){\setminus}{\TT}}\ \circ\ \G_r) \cup
     {\textstyle \bigcup\limits_{\{\tl,\tr\}}} \reachone{(\G\filter{\notM}){\setminus}{\TT}} && \text{$(\G\filter{\notM}){\setminus{\TT}} = \G'\filter{\notM}$} \\
   &= {\textstyle \bigcup\limits_{\St}} (\reachone{\G'\filter{\notM}})\ \circ\ \G_r) \cup
     {\textstyle \bigcup\limits_{\{\tl,\tr\}}} \reachone{\G'\filter{\notM}} && \text{$\G_r = \G_r'$ on $(\St \join \TT$)} \\
   &= {\textstyle \bigcup\limits_{(\St \join \TT)}} (\reachone{\G'\filter{\notM}}\ \circ\ \G'_r) \cup
   \reachtwo{\G'\filter{\notM}}{\tl} &&
\end{alignat*} 
}

\section{Union-Find Data Structure}
\newcommand{\forest}[1]{\textit{partitioned}\ #1}
\newcommand{\closed}{\textit{closed}}
\newcommand{\preacyclic}{\textit{preacyclic}}
\newcommand{\bireachU}{\cycles}
\newcommand{\danglings}{\textit{dangls}}
\newcommand{\cycles}{\textit{cycles}}

\subsection{Non-spatial definitions}

\begin{align*}
  \cycles\ \G \eqdef & \{x\ |\ x \in {\textstyle \bigcup\limits_{y \in \Ga\ x}} \rreach\ \G\ y \} \\
  \preacyclic\ \G \eqdef & \cycles\ \G \subseteq \selfloopss{\G} \\
  \danglings\ \G \eqdef & (\adj\ \G){\setminus{\nodes\ \G}}
\end{align*}
The function $\cycles$ computes the set of nodes that constitute a
cycle in the graph. More precisely, it identifies nodes that are
reachable from one of their adjacent nodes. This avoids the trivial
case where every node is reachable from itself.  $\textit{Loops}$ are
cycles of size one—that is, nodes that explicitly include themselves
in their adjacency list. Then a $\preacyclic$ graph is one where every
cycle is of size one. Loops are given a special status as they are
cycles that do not break under decomposition.  $\danglings\ \G$
selects the dangling nodes of a graph, those that are pointed at by a
node in the graph but are not themselves in the graph. Notice that
$\nnull$ may be in this set.

\begin{lemma}[Union-find abstractions]\label{lemma:ufabstractions}
\quad
\begin{enumerate}[ref=\ref{lemma:ufabs} (\arabic*)]
\item $\danglings\ (\G_1 \join \G_2) = (\danglings\
    \G_1){\setminus{\nodes\ \G_2}} \cup (\danglings\
    \G_2){\setminus{\nodes\ \G_1}}$ \label{nodes-dist-ufabs} 
\item $\selfloopss\ \G \subseteq \bireachU\ \G \subseteq \nodes\ \G$ \label{cycles-in-nodes}
\item $\danglings\ \G \cap \nodes\ \G = \emptyset$ \label{danglings-nodes-disj}
\end{enumerate}
\end{lemma}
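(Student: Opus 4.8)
The plan is to handle the three parts separately; each is elementary, with the morphism laws of Section~\ref{sec:partialgraphs} carrying the distribution argument. Part~(3) is immediate from unfolding: since $\danglings\ \G = (\adj\ \G)\setminus\nodes\ \G$ and a set difference is disjoint from what is removed, $\danglings\ \G \cap \nodes\ \G = \emptyset$.

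For part~(1), I would unfold $\danglings$ on both sides and rewrite using the distributivity of the morphisms $\adj$ (Lemma~\ref{sinks-dist}) and $\nodes$ (Lemma~\ref{nodes-dist}), leaving only plain set algebra. Concretely $\danglings\ (\G_1\join\G_2) = \adj\ (\G_1\join\G_2)\setminus\nodes\ (\G_1\join\G_2) = (\adj\ \G_1\cup\adj\ \G_2)\setminus(\nodes\ \G_1\cup\nodes\ \G_2)$; distributing $\setminus$ over the left union, then peeling the two subtrahends off one at a time via $A\setminus(B\cup C) = (A\setminus B)\setminus C$ (commuting the subtrahends in the second disjunct), yields $(\danglings\ \G_1\setminus\nodes\ \G_2)\cup(\danglings\ \G_2\setminus\nodes\ \G_1)$. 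This is the contextual-localization pattern of Lemma~\ref{lemma:closedlocal}: pushing $\adj$ and $\nodes$ through $\join$ splits the reasoning over the two components. Note that $\danglings$ is not itself an ordinary morphism, but has the subtractive distribution shape later seen for $\summits$.

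For part~(2) I would prove the two inclusions independently. For $\selfloopss{\G}\subseteq\bireachU\ \G$: if $x\in\selfloopss{\G}$ then $x\in\Ga\ x$, which forces $x\in\nodes\ \G$ since the adjacency list $\Ga\ x$ is defined only when $\G$ is defined on $x$; then the defining equation~(\ref{def:reachvia}) gives $x\in\reachtwo{\G}{x}$, and since $x$ is one of the elements of $\Ga\ x$ we get $x\in\bigcup_{y\in\Ga\ x}\reachtwo{\G}{y} = \bireachU\ \G$. For $\bireachU\ \G\subseteq\nodes\ \G$: the subsidiary fact needed is that reachability never escapes the graph, $\reachtwo{\G}{y}\subseteq\nodes\ \G$ for every $y$, proved by the same well-founded induction on the finite set $\nodes\ \G$ that justifies the definition of $\rreach$ --- the ``otherwise'' branch returns $\emptyset$, and in the defined case $\reachtwo{\G}{x} = \{x\}\cup\bigcup_{\Ga\ x}\reachone{(\G{\setminus}x)}$ with $x\in\nodes\ \G$ and each recursive call landing in $\nodes\ (\G{\setminus}x)\subseteq\nodes\ \G$ by the induction hypothesis. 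Given this, $x\in\bireachU\ \G$ means $x\in\reachtwo{\G}{y}$ for some $y\in\Ga\ x$, hence $x\in\nodes\ \G$.

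The only step that goes beyond definitional unfolding is this subsidiary fact $\reachtwo{\G}{y}\subseteq\nodes\ \G$ in part~(2), which I expect to be the main (still routine) obstacle; it is in any case already a standard lemma in the graph library. Everything else is bookkeeping with set operations together with the domain convention for partial maps.
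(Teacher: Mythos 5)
Your proposal is correct, and there is nothing in the paper to diverge from: the paper states Lemma~\ref{lemma:ufabstractions} without proof, deferring (as with the lemmas of Section~\ref{sec:partialgraphs}) to the mechanized Coq graph library. Your argument is exactly the expected one given the paper's machinery: part~(3) by unfolding $\danglings$, part~(1) by distributing $\adj$ and $\nodes$ over $\join$ and finishing with set algebra (correctly noting the subtractive, $\summits$-like shape rather than a plain morphism law), and part~(2) via the auxiliary fact $\reachtwo{\G}{y}\subseteq\nodes\ \G$, proved by the same well-founded induction that underlies the definition of $\rreach$ --- a fact that is indeed a basic library lemma, so no genuine obstacle remains.
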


\begin{lemma}[Summit characterization]\label{summit-characterization}
  Let $\G$ be a graph and $x \in \nodes\ \G$. Then,
  $z \in \ccenter{\G}{x}$ iff there exists a path from $x$ to $y$ in
  $\G$ such that $z$ is a child of $y$, and either
  $z \notin \nodes\ \G$ (i.e., the edge from $y$ to $z$ is dangling)
  or $z$ is already in the path from $x$ to $y$.
\end{lemma}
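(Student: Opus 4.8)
The plan is to prove both directions of the biconditional simultaneously, by strong induction on the number of nodes of $\G$. This is the natural measure, because $\summit$ is defined by the recursion $\ccenter{\G}{x} = \bigcup_{y\in\Ga\,x}\ccenter{(\G{\setminus}x)}{y}$ when $x\in\nodes\ \G$, and $\G{\setminus}x$ has strictly fewer nodes, so the induction hypothesis applies to $\G{\setminus}x$ for every choice of source $y$ and target $z$. The recursion bottoms out at $\ccenter{\G}{x}=\{x\}$ when $x\notin\nodes\ \G$, which will be used whenever the recursive call lands on a node that has already been removed or was never in the graph.

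First I would fix $\G$ and $x\in\nodes\ \G$, unfold the definition, and note that $z\in\ccenter{\G}{x}$ iff there is a child $y\in\Ga\,x$ with $z\in\ccenter{(\G{\setminus}x)}{y}$. Then I would split on whether $y\in\nodes\ (\G{\setminus}x)$. If $y\notin\nodes\ (\G{\setminus}x)$, i.e.\ $y=x$ or $y\notin\nodes\ \G$, then $\ccenter{(\G{\setminus}x)}{y}=\{y\}$ forces $z=y$; the witnessing path on the right-hand side is the trivial one-node path $x$, with $z$ a child of $x$, and the disjunct is ``$z\notin\nodes\ \G$'' when $y\notin\nodes\ \G$ and ``$z=x$ lies on the path'' when $y=x$ (a self-loop). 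If $y\in\nodes\ (\G{\setminus}x)$, i.e.\ $y\in\nodes\ \G$ and $y\neq x$, then the induction hypothesis on $\G{\setminus}x$ with source $y$ characterizes $z\in\ccenter{(\G{\setminus}x)}{y}$ by: there is a path from $y$ to some $w$ in $\G{\setminus}x$ with $z$ a child of $w$ in $\G{\setminus}x$, and either $z\notin\nodes\ (\G{\setminus}x)$ or $z$ lies on that path.

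The crux is the correspondence between the paths witnessing the two sides, which I would isolate as two routine facts about the paper's notion of path. (i) A path from $y$ to $w$ in $\G{\setminus}x$ extends, by prepending the edge $x\to y$ (available since $y\in\Ga\,x$), to a path from $x$ to $w$ in $\G$: this is legitimate because $\G$ and $\G{\setminus}x$ have the same adjacency lists at every node other than $x$, and $x$ does not already occur on the path. (ii) Conversely, a path from $x$ to $y$ in $\G$ of length at least $1$, with second node $p_1\neq x$, restricts to a path from $p_1$ to $y$ in $\G{\setminus}x$. With (i) and (ii) the two characterizations align: ``$z$ a child of $w$ in $\G{\setminus}x$'' becomes ``$z$ a child of $w$ in $\G$'' (stability of adjacency lists under node removal); the clause $z\notin\nodes\ (\G{\setminus}x)$ unpacks to $z\notin\nodes\ \G$ or $z=x$, and $z=x$ is absorbed into ``$z$ lies on the path $x\to y\to\cdots\to w$''; and ``$z$ on the sub-path from $y$'' is absorbed into ``$z$ on the full path from $x$''. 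The two base-of-recursion obligations (the case $y=x$, i.e.\ path length $0$) are discharged directly, as sketched above.

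I expect the main obstacle to be exactly this path bookkeeping: pinning down the paper's precise definition of path (whether nodes may repeat, and whether the endpoint $y$ is required to be a graph node), checking that the ``prepend $x$'' and ``drop $x$'' operations preserve whatever well-formedness that definition demands, and verifying that the recursion's removal of the current node corresponds precisely to the ``$z$ already lies on the path'' disjunct rather than to genuine reachability. None of the ingredients is deep, but several small lemmas (agreement of $\G$ and $\G{\setminus}x$ off $x$, stability of $\Ga$ under node removal, and the path extension/restriction facts) need to be assembled; most of these are already available, or entirely routine, in the Coq graph library.
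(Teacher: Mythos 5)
The paper never spells out a proof of this lemma---it is one of the union-find library facts whose proofs are elided and live only in the Coq development---so there is no textual argument to compare against. Your plan (strong induction on the number of nodes, unfolding the recursion $\ccenter{\G}{x} = \bigcup_{y\in\Ga\,x}\ccenter{(\G{\setminus}x)}{y}$, splitting on whether the chosen child $y$ survives in $\G{\setminus}x$, and translating paths between $\G$ and $\G{\setminus}x$) is the natural argument, and as far as I can check both directions go through as you describe.

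One point you file under ``bookkeeping'' is in fact load-bearing: the lemma is only true if ``path'' forbids repeated nodes. If walks with repetitions were admitted, the right-to-left direction already fails on the two-cycle $\G = a \mapsto [b] \join b \mapsto [a]$: the walk $a \to b \to a$ ends at a node of which $b$ is a child and contains $b$, yet the recursion gives $\ccenter{\G}{a} = \{a\}$. Correspondingly, your restriction step (ii) needs more than ``second node $p_1 \neq x$'': the entire tail must avoid $x$ for it to be a path in $\G{\setminus}x$, and that is exactly what simplicity of a path starting at $x$ provides (it also justifies identifying ``$y = x$'' with ``length $0$''). With that definition pinned down, your case analysis is complete---the $y \notin \nodes\ (\G{\setminus}x)$ case correctly splits into the dangling disjunct ($y \notin \nodes\ \G$) and the self-loop disjunct ($y = x$, absorbed into ``$z$ on the path''), and in the inductive case $z \notin \nodes\ (\G{\setminus}x)$ correctly unpacks to ``$z \notin \nodes\ \G$ or $z = x$'', the latter absorbed by the prepended node. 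So the approach is right; just promote the no-repetition requirement from an afterthought to an explicit hypothesis of your path extension/restriction lemmas.
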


\begin{lemma}[Summits]\label{lemma:ufabs}
\quad
\begin{enumerate}[ref=\ref{lemma:ufabs} (\arabic*)]
\item $\ccenters{\G} = \bireachU\ \G\ \dotcup\ \danglings\ \G $ \label{summits-char} 
\item If $\ccenters{\G_1} = \ccenters{\G_2}$ then $\ccenters{(\G_1 \join \G_2)} = \ccenters{\G_1}$ \label{summits-eq}
\item If $x \in \selfloops\ \G$ then $\ccenters{\G{\setminus{x}}} \subseteq \ccenters{\G} $ \label{summits-free}
\end{enumerate}
\end{lemma}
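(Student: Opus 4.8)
I would prove the three clauses separately. Clauses~(1) and~(3) reduce almost directly to the path characterization of summits in Lemma~\ref{summit-characterization}, while clause~(2)---the substantive one---is obtained by rewriting $\ccenters{-}$ through clause~(1) and then combining the distribution law for $\danglings$ from Lemma~\ref{lemma:ufabstractions} with the fact that $\selfloops$ is an ordinary PCM morphism. As the paper's own discussion of $\summits$ makes explicit, this last distribution is valid only when the cycles of $\G_1\join\G_2$ are trivial; I would therefore prove clause~(2) under that side condition (which holds for the union-find graphs on which the lemma is used), so that $\cycles$ coincides with $\selfloops$ on $\G_1$, on $\G_2$, and on $\G_1\join\G_2$.

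\noindent\textbf{Clauses~(1) and~(3).}
For~(1), disjointness of $\cycles\ \G$ and $\danglings\ \G$ is immediate from Lemma~\ref{lemma:ufabstractions}, since $\cycles\ \G\subseteq\nodes\ \G$ whereas $\danglings\ \G\cap\nodes\ \G=\emptyset$. For the set equality I would unfold $\ccenters{\G}=\bigcup_{x\in\nodes\ \G}\ccenter{\G}{x}$ and apply the ``iff'' of Lemma~\ref{summit-characterization}: $z\in\ccenters{\G}$ exactly when there is a path (possibly trivial) in $\G$ from some $x\in\nodes\ \G$ to some $y$, with $z$ a child of $y$, and either $z\notin\nodes\ \G$ or $z$ occurring on that path. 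Since the path lies inside $\nodes\ \G$, the two alternatives are mutually exclusive; the first is equivalent to $z$ lying in $(\adj\ \G)\setminus(\nodes\ \G)=\danglings\ \G$ (for the converse, take the trivial path at $y$), and the second says $z$ is reachable from one of its own children, i.e.\ $z\in\cycles\ \G$ (for the converse, given $z\in\cycles\ \G$ take a path from a child $w$ of $z$ back to $z$, prepend the edge $z\to w$, and note that $z$ occurs on this path). For~(3), let $z\in\ccenters{\G{\setminus}x}$; Lemma~\ref{summit-characterization} applied inside $\G{\setminus}x$ yields a path there from some $w$ to some $y$ with $y\neq x$, $z$ a child of $y$, and $z$ either outside $\nodes\ (\G{\setminus}x)$ or on the path. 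The path and the edge $y\to z$ are present in $\G$ too (removing $x$ affects neither), so if $z$ lies on the path then $z\neq x$ and Lemma~\ref{summit-characterization} gives $z\in\ccenter{\G}{w}$; if $z\notin\nodes\ (\G{\setminus}x)$ then either $z\notin\nodes\ \G$, with the same conclusion, or $z=x$, in which case $x\in\selfloopss{\G}$ supplies the self-edge $x\to x$ and the trivial path at $x$, giving $x\in\ccenter{\G}{x}$. Either way $z\in\ccenters{\G}$.

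\noindent\textbf{Clause~(2).}
Put $S=\ccenters{\G_1}=\ccenters{\G_2}$. By clause~(1) and triviality, $\ccenters{\G_i}=\selfloopss{\G_i}\ \dotcup\ \danglings\ \G_i$; since $\selfloopss{\G_i}\subseteq\nodes\ \G_i$ and $\danglings\ \G_i$ is disjoint from $\nodes\ \G_i$, this lets me recover $\selfloopss{\G_i}=S\cap\nodes\ \G_i$ and $\danglings\ \G_i=S\setminus(\nodes\ \G_i)$. Combining clause~(1) for $\G_1\join\G_2$ with the morphism law $\selfloopss{\G_1\join\G_2}=\selfloopss{\G_1}\cup\selfloopss{\G_2}$ and the subtractive law $\danglings\ (\G_1\join\G_2)=(\danglings\ \G_1)\setminus(\nodes\ \G_2)\cup(\danglings\ \G_2)\setminus(\nodes\ \G_1)$ of Lemma~\ref{lemma:ufabstractions} gives $\ccenters{\G_1\join\G_2}=\selfloopss{\G_1}\cup\selfloopss{\G_2}\cup(\danglings\ \G_1)\setminus(\nodes\ \G_2)\cup(\danglings\ \G_2)\setminus(\nodes\ \G_1)$. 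Using $\nodes\ \G_1\cap\nodes\ \G_2=\emptyset$, we have $\selfloopss{\G_2}=S\cap\nodes\ \G_2\subseteq S\setminus(\nodes\ \G_1)=\danglings\ \G_1$ and $(\danglings\ \G_2)\setminus(\nodes\ \G_1)\subseteq S\setminus(\nodes\ \G_1)=\danglings\ \G_1$, while $(\danglings\ \G_1)\setminus(\nodes\ \G_2)\subseteq\danglings\ \G_1$ trivially; hence $\ccenters{\G_1\join\G_2}\subseteq\selfloopss{\G_1}\cup\danglings\ \G_1=\ccenters{\G_1}$. Conversely $\selfloopss{\G_1}\subseteq\selfloopss{\G_1\join\G_2}\subseteq\ccenters{\G_1\join\G_2}$, and writing $\danglings\ \G_1=((\danglings\ \G_1)\setminus(\nodes\ \G_2))\cup(\danglings\ \G_1\cap\nodes\ \G_2)$, the first part sits inside $\danglings\ (\G_1\join\G_2)$ and every $z$ in the second part belongs to $\danglings\ \G_1\subseteq S=\ccenters{\G_2}$ with $z\in\nodes\ \G_2$, hence $z\in\cycles\ \G_2=\selfloopss{\G_2}\subseteq\selfloopss{\G_1\join\G_2}$; so $\ccenters{\G_1}\subseteq\ccenters{\G_1\join\G_2}$, and the two sets are equal.

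\noindent\textbf{Main obstacle.}
Clauses~(1) and~(3) are bookkeeping on top of Lemma~\ref{summit-characterization}; the real work is clause~(2), which genuinely needs the cycle-triviality side condition---without it, joining two preacyclic graphs can manufacture a new nontrivial cycle through a node that is a summit of neither $\G_1$ nor $\G_2$ individually. The conceptual heart of the argument is that the equation $\ccenters{\G_1}=\ccenters{\G_2}$ forces every dangling sink of $\G_1$ that happens to fall inside $\nodes\ \G_2$ to be a cycle node of $\G_2$ (and symmetrically), which under triviality makes it a self-loop; that is precisely what makes the additive distribution of $\selfloops$ and the subtractive distribution of $\danglings$ match up.
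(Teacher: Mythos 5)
Your clauses (1) and (3) are fine: both reduce to Lemma~\ref{summit-characterization} exactly as you describe, and the bookkeeping (disjointness from Lemma~\ref{lemma:ufabstractions}, the trivial path for dangling sinks, the prepended edge for cycle nodes, the case $z=x$ handled by the self-loop) is correct. The paper itself gives no written proof of this lemma (it is discharged in the Coq library), so the comparison is against the statement and the way it is used.

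The genuine gap is clause (2): you prove a different, weaker statement, with the added hypothesis that the cycles of $\G_1\join\G_2$ are trivial, and you justify the change by asserting the unconditional claim is false. That assertion is correct only if adjacency lists may have several elements (e.g.\ $\G_1=c_1\mapsto[c_1,m]\join m\mapsto[c_2]$ and $\G_2=c_2\mapsto[c_2,c_1]$ have equal summits $\{c_1,c_2\}$, yet $m\in\ccenters{(\G_1\join\G_2)}$). In the unary setting of this appendix, where the lemma is stated and applied, the unconditional statement is true, and the needed fact is supplied by the hypothesis itself rather than by a side condition: any cross edge of the join lands on a node $v\in\danglings\ \G_1\cap\nodes\ \G_2\subseteq\ccenters{\G_2}=\cycles\ \G_2\ \dotcup\ \danglings\ \G_2$, hence $v\in\cycles\ \G_2$; since $v$ has out-degree one, its forward orbit is exactly its old cycle inside $\G_2$ and never re-enters $\G_1$, so no cycle of the join can use a cross edge, giving $\cycles\ (\G_1\join\G_2)=\cycles\ \G_1\cup\cycles\ \G_2$ with no triviality assumption. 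Your own set computation then goes through verbatim with $\cycles$ in place of $\selfloops$, using clause (1), Lemma~\ref{nodes-dist-ufabs}, and $\cycles\ \G_i=\ccenters{\G_i}\cap\nodes\ \G_i$, $\danglings\ \G_i=\ccenters{\G_i}\setminus\nodes\ \G_i$. The distinction is not cosmetic: in the UNION proof the paper invokes clause (2) precisely to \emph{avoid} establishing preacyclicity of the joined graph, so a version of \ref{summits-eq} that presupposes triviality of the join's cycles could not serve at that point without first proving the very fact the lemma is designed to circumvent.
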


\begin{lemma}[Summits of inverted forest]\label{lemma:summits}
  Let $\G = (\G_1 \join \G_2)$ be an an inverted forest
  ($\ccenters{\G} \subseteq \selfloopss{\G}$) then
\begin{enumerate}[ref=\ref{lemma:summits} (\arabic*)]
\item $\adj\ \G \subseteq \nodes\ \G$ ($\goodg\ \G$) \label{summits-closed}
\item $\bireachU\ \G \subseteq \selfloopss{\G}$ ($\preacyclic\ \G$)\label{summits-preacyclic}
\item $\ccenters{(\G_1 \join \G_2)} = (\ccenters{\G_1}){\setminus{\nodes\ \G_2}}\ \dotcup\ (\ccenters{\G_2}){\setminus{\nodes\ \G_1}}$
\item $\ccenter{\G}{x} = \ccenter{\G}{(\G\ x)}$
\end{enumerate}
\end{lemma}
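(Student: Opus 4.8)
The plan is to reduce all four items to the additive decomposition laws already available: the characterisation $\ccenters{\G}=\bireachU\ \G\ \dotcup\ \danglings\ \G$ (Lemma~\ref{summits-char}), the sandwich $\selfloopss{\G}\subseteq\bireachU\ \G\subseteq\nodes\ \G$ (Lemma~\ref{cycles-in-nodes}), the distributivity of $\danglings$ (Lemma~\ref{nodes-dist-ufabs}), the disjointness $\danglings\ \G\cap\nodes\ \G=\emptyset$ (Lemma~\ref{danglings-nodes-disj}), and the routine fact that $\selfloops$ is a PCM morphism with $\selfloopss{\G_i}\subseteq\nodes\ \G_i$. I would establish the items in the order (2), (1), an auxiliary claim about $\G_1$ and $\G_2$, then (3), and finally (4).

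\emph{Items (2) and (1).} Item (2) is essentially immediate: $\bireachU\ \G\subseteq\ccenters{\G}$ by Lemma~\ref{summits-char}, and $\ccenters{\G}\subseteq\selfloopss{\G}$ is the hypothesis, so $\bireachU\ \G\subseteq\selfloopss{\G}$. For item (1), chain $\danglings\ \G\subseteq\ccenters{\G}\subseteq\selfloopss{\G}\subseteq\nodes\ \G$, where the last step is Lemma~\ref{cycles-in-nodes}; but Lemma~\ref{danglings-nodes-disj} gives $\danglings\ \G\cap\nodes\ \G=\emptyset$, forcing $\danglings\ \G=\emptyset$, which unfolds to $\adj\ \G\subseteq\nodes\ \G$.

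\emph{Auxiliary claim and item (3).} From $\danglings\ \G=\emptyset$ and Lemma~\ref{nodes-dist-ufabs} I obtain $\danglings\ \G_1\subseteq\nodes\ \G_2$ and $\danglings\ \G_2\subseteq\nodes\ \G_1$. I would then show that each $\G_i$ is itself preacyclic: a cycle of $\G_i$ is a cycle of $\G$ (reachability is monotone under graph extension, by a straightforward induction), so $\cycles\ \G_i\subseteq\cycles\ \G\subseteq\selfloopss{\G}$ by item (2); intersecting with $\nodes\ \G_i$, and using $\cycles\ \G_i\subseteq\nodes\ \G_i$ (Lemma~\ref{cycles-in-nodes}) together with the morphism law $\selfloopss{\G}=\selfloopss{\G_1}\dotcup\selfloopss{\G_2}$, gives $\cycles\ \G_i\subseteq\selfloopss{\G_i}$ and hence $\bireachU\ \G_i=\selfloopss{\G_i}$. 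Now $\ccenters{\G_i}=\selfloopss{\G_i}\ \dotcup\ \danglings\ \G_i$ by Lemma~\ref{summits-char}; since $\selfloopss{\G_i}\subseteq\nodes\ \G_i$ is disjoint from $\nodes\ \G_{3-i}$ while $\danglings\ \G_i\subseteq\nodes\ \G_{3-i}$, we get $\ccenters{\G_i}\setminus\nodes\ \G_{3-i}=\selfloopss{\G_i}$. Therefore the right-hand side of (3) equals $\selfloopss{\G_1}\dotcup\selfloopss{\G_2}=\selfloopss{\G}$, while its left-hand side $\ccenters{\G}$ equals $\bireachU\ \G\ \dotcup\ \danglings\ \G=\selfloopss{\G}\dotcup\emptyset$ by items (1)--(2) together with Lemmas~\ref{summits-char} and~\ref{cycles-in-nodes}; the two coincide, and the union on the right is genuinely disjoint because its summands lie in the disjoint node sets $\nodes\ \G_1$ and $\nodes\ \G_2$.

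\emph{Item (4) and the main obstacle.} Reading $\ccenter{\G}{(\G\ x)}$ as the one-step unfolding $\bigcup_{y\in\Ga\ x}\ccenter{\G}{y}$ (the recursive clause of $\ccenter$ with the deletion of $x$ suppressed), the claim for $x\in\nodes\ \G$ becomes $\bigcup_{y\in\Ga\ x}\ccenter{(\G{\setminus}x)}{y}=\bigcup_{y\in\Ga\ x}\ccenter{\G}{y}$, which I would prove term by term: $\ccenter{(\G{\setminus}x)}{y}=\ccenter{\G}{y}$ for each $y\in\Ga\ x$. If $x\notin\reachtwo{\G}{y}$ this is the $\ccenter$ analogue of Lemma~\ref{notin-reach-single}. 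If $x\in\reachtwo{\G}{y}$, then---since $y\in\Ga\ x$---the node $x$ lies on a cycle, so $x\in\cycles\ \G\subseteq\selfloopss{\G}$ by item (2); I would then appeal to the summit characterisation (Lemma~\ref{summit-characterization}) to argue that deleting the self-loop $x$ leaves $\ccenter{\G}{y}$ unchanged, because any path that previously terminated at $x$ (a summit, since $x\in\Ga\ x$) now terminates one step earlier at an edge into $x$, which is dangling in $\G{\setminus}x$ and so is still recorded as the summit $x$, whereas all other paths are untouched. This last case is the crux: it needs a careful, characterisation-level accounting of how paths re-route when a self-loop node is removed, and it is exactly where preacyclicity (item (2)) is indispensable---without it, removing a node that sits on a non-trivial cycle could genuinely shrink the summit set.
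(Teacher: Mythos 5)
The paper never spells out a proof of this lemma in the text---like the rest of the graph-theory results it is only mechanized in the Coq library---so your proposal can only be checked against the statement and the auxiliary lemmas it cites, and on that basis items (1)--(3) are sound. (1) and (2) are indeed immediate from Lemmas~\ref{summits-char}, \ref{cycles-in-nodes} and \ref{danglings-nodes-disj}, and your route to (3)---extracting $\danglings\ \G_1 \subseteq \nodes\ \G_2$ and $\danglings\ \G_2 \subseteq \nodes\ \G_1$ from $\danglings\ \G = \emptyset$ via Lemma~\ref{nodes-dist-ufabs}, showing each $\G_i$ is preacyclic, and concluding that both sides of (3) collapse to $\selfloopss{\G} = \selfloopss{\G_1}\ \dotcup\ \selfloopss{\G_2}$---is correct under the stated hypothesis. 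Be aware, though, that this collapse exploits how strong the inverted-forest assumption is: the instance of summit-distributivity actually invoked in the UNION outline concerns $x_1 \mapsto x_2 \join \G_1{\setminus{x_1}}$, whose summit set $\{x_2\}$ is \emph{not} contained in its (empty) self-loop set, so only preacyclicity is available there; your both-sides-equal-$\selfloopss{\G}$ argument proves the statement as given but would not deliver that working version.

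For (4) the relevant graphs are unary, so $\G\ x$ is the single child and, unfolding the definition, $\ccenter{\G}{x} = \ccenter{(\G{\setminus}x)}{(\G\ x)}$; your ``term-by-term'' goal $\ccenter{(\G{\setminus}x)}{y} = \ccenter{\G}{y}$ with $y = \G\ x$ is then exactly the claim. The unreachable case does rest on an unstated summit analogue of Lemma~\ref{notin-reach-single}, which you rightly flag as needing its own induction. But the case you call the crux is trivial in this setting: $x \in \reachtwo{\G}{(\G\ x)}$ gives $x \in \cycles\ \G$, hence $x \in \selfloopss{\G}$ by item (2), and in a unary graph that forces $\G\ x = x$, whence both sides are $\{x\}$. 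Conversely, your path-re-routing narrative does not survive beyond the unary case: if the self-loop node $x$ had a second child $w$ (say $\Ga\ x = [x,w]$ and $\Ga\ w = [w]$, still an inverted forest), then $\ccenter{(\G{\setminus}x)}{x} = \{x\}$ while $\ccenter{\G}{x} = \{x,w\}$---deleting $x$ cuts the paths continuing through $x$ to its other children, so ``all other paths are untouched'' is false, and the identity is only recovered at the level of the union over all of $x$'s children. So phrase (4) for unary graphs, where preacyclicity makes it essentially immediate, rather than via the general re-routing argument.
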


\begin{lemma}[Preacyclic mutation]\label{preacyclic-mutation}
  Let $\G$ be a unary graph such that $\preacyclic\ \G$, and
  $x \in \nodes\ \G$ and $y \notin \nodes\ \G$. The graph $\G'$
  obtained by modifying $x$'s successor to $y$, also satisfies
  $\preacyclic\ \G'$.
\end{lemma}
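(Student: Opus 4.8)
The plan is to reuse the contextual-localization pattern of Lemma~\ref{lemma:closedlocal}: decompose both graphs around the modified node $x$. Using~\eqref{expand1}, write $\G = (x \mapsto \G\ x) \join (\G{\setminus}x)$; the mutated graph is then $\G' = (x \mapsto y) \join (\G{\setminus}x)$. Thus $\G$ and $\G'$ share the subgraph $\G{\setminus}x$ and differ only at $x$, and $\nodes\ \G' = \nodes\ \G$ (so $y \notin \nodes\ \G'$, since $y \notin \nodes\ \G$). Also $\G'{\setminus}x = \G{\setminus}x$ and $(\G{\setminus}x)\ z = \G\ z = \G'\ z$ for every $z \in \nodes\ (\G{\setminus}x)$, i.e.\ for every $z \in \nodes\ \G$ with $z \neq x$. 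The goal $\preacyclic\ \G'$ unfolds to $\cycles\ \G' \subseteq \selfloopss{\G'}$.

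The main obstacle, in contrast to Lemma~\ref{lemma:closedlocal}, is that $\cycles$ is a genuinely global (non-morphic) notion, so it cannot simply be distributed over $\join$. Instead I will track cycle-membership across the mutation using the reachability lemmas, which I apply only in the form of the two-sided bound $\reachtwo{(H{\setminus}u)}{w} \subseteq \reachtwo{H}{w} \subseteq \reachtwo{(H{\setminus}u)}{w} \cup \reachtwo{H}{u}$, valid for every graph $H$ and nodes $w, u$ (the left inclusion and the union formula come from Lemma~\ref{notin-reach-single} and Lemma~\ref{in-reach-single} according to whether $u \in \reachtwo{H}{w}$). The crucial leverage is that $x$'s new successor $y$ is dangling in $\G'$: by the definition of $\rreach$~\eqref{def:reachvia}, $\reachtwo{\G'}{y} = \emptyset$ because $y \notin \nodes\ \G'$, and hence $\reachtwo{\G'}{x} = \{x\} \cup \reachtwo{(\G'{\setminus}x)}{y} = \{x\} \cup \reachtwo{(\G{\setminus}x)}{y} = \{x\}$, again since $y \notin \nodes\ (\G{\setminus}x)$.

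With these facts the argument is short. Take $z \in \cycles\ \G'$, so $z \in \nodes\ \G'$ and $z \in \reachtwo{\G'}{(\G'\ z)}$. If $z = x$, then $\G'\ z = y$ and $z \in \reachtwo{\G'}{y} = \emptyset$, a contradiction; hence $z \neq x$, so $z \in \nodes\ (\G{\setminus}x)$ and $\G'\ z = \G\ z = (\G{\setminus}x)\ z$. Instantiating the upper bound with $H = \G'$, $u = x$, $w = \G\ z$ and using $\reachtwo{\G'}{x} = \{x\}$ gives $z \in \reachtwo{\G'}{(\G\ z)} \subseteq \reachtwo{(\G{\setminus}x)}{(\G\ z)} \cup \{x\}$, and since $z \neq x$ this forces $z \in \reachtwo{(\G{\setminus}x)}{(\G\ z)}$. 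Now instantiating the lower bound with $H = \G$, $u = x$, $w = \G\ z$ (noting $\G{\setminus}x$ is exactly $\G$ with $x$ removed) gives $z \in \reachtwo{(\G{\setminus}x)}{(\G\ z)} \subseteq \reachtwo{\G}{(\G\ z)}$, i.e.\ $z \in \cycles\ \G$. By the hypothesis $\preacyclic\ \G$ we get $z \in \selfloopss{\G}$, which for a unary graph means $z$'s successor in $\G$ is $z$ itself; since $z \neq x$, $z$'s successor in $\G'$ is unchanged, so $z \in \selfloopss{\G'}$. Hence $\cycles\ \G' \subseteq \selfloopss{\G'}$, i.e.\ $\preacyclic\ \G'$.

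The only point requiring care is the bookkeeping of which edge is unchanged: every transfer of membership above depends on $z \neq x$, so that the outgoing edge of $z$---and therefore its status in both $\cycles$ and $\selfloops$---passes verbatim between $\G$, $\G'$, and $\G{\setminus}x$, while the node $x$ itself is disposed of separately by the dead-end observation $\reachtwo{\G'}{y} = \emptyset$. Unarity is used only to license speaking of ``the successor'' of a node; the same proof handles the modification of a single outgoing edge of a node in a general (not necessarily unary) graph.
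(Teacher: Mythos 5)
Your proof of the lemma as stated (for unary graphs) is correct. The paper does not include a textual proof of this lemma---it is stated in the appendix and discharged only in the Coq development---so there is no official argument to compare routes with; on its own merits, your argument is sound: the two-sided bound $\reachtwo{(H{\setminus}u)}{w} \subseteq \reachtwo{H}{w} \subseteq \reachtwo{(H{\setminus}u)}{w} \cup \reachtwo{H}{u}$ does follow from Lemmas~\ref{notin-reach-single} and~\ref{in-reach-single} by case analysis on $u \in \reachtwo{H}{w}$, the dead-end observation $\reachtwo{\G'}{y} = \emptyset$ (hence $\reachtwo{\G'}{x} = \{x\}$) correctly disposes of the case $z = x$, and for $z \neq x$ the transfer of cycle membership back to $\G$ and the preservation of $z$'s self-loop status are both justified because $z$'s outgoing edge is untouched. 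This stays entirely within the paper's core vocabulary of reachability lemmas, which is in the intended spirit.

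One correction: your closing claim that ``the same proof handles the modification of a single outgoing edge of a node in a general (not necessarily unary) graph'' is false---and so is that generalized statement, not just the proof. Your case $z = x$ relies on $y$ being the \emph{only} successor of $x$ in $\G'$; with a second, unmodified out-edge the contradiction is unavailable, and the statement itself fails. Counterexample: the binary graph $\G = x \mapsto [x,b] \join b \mapsto [x,b]$ is preacyclic (both nodes are in $\selfloopss{\G}$, and $\cycles\ \G \subseteq \nodes\ \G = \selfloopss{\G}$), yet after replacing $x$'s left child by a dangling $y$ we get $\G' = x \mapsto [y,b] \join b \mapsto [x,b]$, where $x \in \reachtwo{\G'}{b}$ with $b$ a child of $x$, so $x \in \cycles\ \G'$, while $x \notin \selfloopss{\G'}$ since $\Ga'\,x = [y,b]$. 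So unarity is doing real work here, not merely licensing the phrase ``the successor.''
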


\subsection{Proof outline for NEW}
\makeatletter
\newcounter{codenew}
\renewcommand{\lineno}{\stepcounter{codenew}\textsc{\thecodenew}.\quad}
\renewcommand{\linelb}[1]{{\refstepcounter{codenew}\ltx@label{#1}\textsc{\thecodenew}.\quad}}
\makeatother
{\allowdisplaybreaks
\begin{align*}
\linelb{new-ln1} &\spec{\emp}\\
  \lineno  &p \assign \textit{alloc}\ \nnull;\\
  \linelb{new-ln3} &\spec{p \Mapsto \nnull}\\
  \lineno &p.next \mutate p;\\
  \linelb{new-ln5} &\spec{p \Mapsto p}\\
  \linelb{new-lnr} & \textbf{return}\ p \\
  \linelb{new-ln6} &\spec{p \Mapsto p \wedge \res = p}\\
  \linelb{new-ln7} &\spec{\gseg_1 (p \mapsto p) \wedge \ccenters{(p \mapsto p)} = \selfloopss{(p \mapsto p)} = \nodes\ (p \mapsto p) = \{p\} \wedge \res = p} \\
  \linelb{new-ln8} &\spec{\unfoldset{\{p\}}{p} \wedge \res = p} \\
  \linelb{new-ln9} &\spec{\set{\{\res\}}{\res}}
\end{align*}
}
The first two commands are standard separation logic, the third
command in line~\ref{new-lnr} corresponds to a value returning
function supported by Hoare Type Theory. The step from
line~\ref{new-ln6} to line~\ref{new-ln7} lifts the reasoning from the
heap to the abstract graph and establishes that the singleton graph is
indeed an inverted tree with $\res$ as its representative. Finally the
conjunct are folded into the set predicate.

\subsection{Proof outline for FIND}
\makeatletter
\newcounter{codefind}
\renewcommand{\lineno}{\stepcounter{codefind}\textsc{\thecodefind}.\quad}
\renewcommand{\linelb}[1]{{\refstepcounter{codefind}\ltx@label{#1}\textsc{\thecodefind}.\quad}}
\makeatother
{\allowdisplaybreaks
\begin{align*}
  \lineno &\spec{\set{S}{y} \wedge x \in S}\\
  \lineno &\spec{\unfoldset{S}{y} \wedge x \in S}\\
  \lineno &\quad\spec{\gseg_1\ \G \wedge
            \ccenters{\G} = \selfloopss{\G} = \{y\} \wedge nodes\ \G = S \wedge x \in S} \\
  \lineno &\quad\spec{\gseg_1 (x \mapsto \G\ x \join \G{\setminus{x}}) \wedge x \in S
  \textcolor{teal}{\hbox{}\wedge\ \ccenters{\G} = \selfloopss{\G} = \{y\} \wedge nodes\ \G = S}} \\
  \lineno &\qquad\spec{x \Mapsto \G\ x * \gseg_1\ \G{\setminus{x}} \wedge x \in S} \\
  \linelb{find-ln6}  &\qquad p \assign x.next;\\
  \lineno &\qquad\spec{x \Mapsto \G\ x * \gseg_1\ \G{\setminus{x}} \wedge 
            x \in S \wedge p = \G\ x} \\
  \lineno &\qquad\spec{\gseg_1\ (x \mapsto \G\ x \join \G{\setminus{x}}) \wedge 
            x \in S\wedge p = \G\ x} \\ 
  \linelb{find-ln9} &\qquad\spec{\gseg_1\ \G \wedge x \in S \wedge p = \G\ x} \\
  \linelb{find-ln10}  &\qquad\textrm{\textbf{while}}\ p \neq x\ \textrm{\textbf{do}}\\ 
  \lineno &\quad\qquad\spec{\gseg_1\ \G \wedge x \in S \wedge p = \G\ x \neq x} \\
  \lineno &\quad\qquad x \mutate p;\\
  \linelb{find-ln16} &\quad\qquad\spec{\gseg_1\ \G \wedge  x \in S \wedge x = p} \\
  \lineno &\quad\qquad\spec{\gseg_1 (x \mapsto \G\ x \join \G{\setminus{x}}) \wedge
            x \in S \wedge x = p} \\
  \lineno &\quad\qquad\spec{x \Mapsto \G\ x * \gseg_1\ \G{\setminus{x}} \wedge 
            x \in S \wedge x = p} \\
  \lineno &\quad\qquad p \mutate x.next;\\
  \lineno &\quad\qquad\spec{x \Mapsto \G\ x * \gseg_1\ \G{\setminus{x}} \wedge 
            x \in S \wedge p = \G\ x} \\
  \lineno &\quad\qquad\spec{\gseg_1\ (x \mapsto nx \join \G{\setminus{x}}) \wedge 
            x \in S \wedge p = \G\ x} \\ 
  \linelb{find-ln19} &\quad\qquad\spec{\gseg_1\ \G \wedge x \in S \wedge p = \G\ x} \\  
  \lineno &\qquad\textrm{\textbf{end while}}\\
  \linelb{find-ln21} &\qquad\spec{\gseg_1\ \G \wedge x \in S \wedge p = \G\ x = x} \\
  \linelb{find-ln22} &\qquad\textbf{return}\ x\\
  \linelb{find-ln25} &\qquad\spec{\gseg_1\ \G \wedge x \in S \wedge p = \G\ x = x = \res}\\
  \linelb{find-ln26} &\quad\spec{\gseg_1\ \G \wedge \res = y \textcolor{teal}{\hbox{}\wedge 
            \ccenters{\G} = \selfloopss{\G} = \{y\} \wedge nodes\ \G = S} } \\
  \lineno &\spec{\unfoldset{S}{y} \wedge \res = y} \\ 
  \lineno &\spec{\set{S}{y} \wedge \res = y}
\end{align*}}

The first five lines of the proof outline unfold the
spatial predicates, first set, then $\gseg$, to reveal the node
corresponding to the argument $x$. The non-spatial conjuncts involving
$\G$ and $y$ are framed early on, as these values remain unchanged
throughout the execution and will later be reintroduced without
modification.
The command at line~\ref{find-ln6} assigns to $p$ the successor of
$x$. In line \ref{find-ln10}, $p$ is compared to $x$. A mismatch
implies that the root node, which points to itself, has not yet been
found, and thus the loop must be entered.
Within the loop, the first command advances $x$ to its parent
node. Line~\ref{find-ln16} requires showing that the new value of $x$
remains within $S$. This holds because $x$ is guaranteed to be in
$\nodes\ \G$, given that $\goodg\ \G$ holds by
lemma~\ref{summits-closed} and $\nodes\ \G = S$.
Next, $p$ is updated to be the parent of the current $x$.
By the end of the loop in line~\ref{find-ln19}, its invariant from
line~\ref{find-ln9} is reestablished.
When the loop terminates, the invariant together with the negation of
the loop guard leads to the conclusion in line~\ref{find-ln21}: $x$ is
a node in $S$ and points to itself in $\G$.
Line~\ref{find-ln22} sets the result of the program to be x, which by
the non-spatial conjunct framed at the beginning we know must be $y$,
as it is established to be the only self-pointing node in $\G$.

\subsection{Proof outline for UNION}
\makeatletter
\newcounter{codeunion}
\renewcommand{\lineno}{\stepcounter{codeunion}\textsc{\thecodeunion}.\quad}
\renewcommand{\linelb}[1]{{\refstepcounter{codeunion}\ltx@label{#1}\textsc{\thecodeunion}.\quad}}
\makeatother
{\allowdisplaybreaks
  \begin{align*}
    \linelb{union-ln1} &\spec{\set{S_1}{x_1} * \set{S_2}{x_2}}\\
    \linelb{union-ln2} &\quad\sspecopen{\gseg_1\ \G_1 \textcolor{teal}{\hbox{}* \gseg_1\ \G_2}}\\
    \lineno &\quad\opensspecopen{\textcolor{teal}{\hbox{}\wedge \ccenters{\G_1} = \selfloopss{\G_1} = \{x_1\} \wedge nodes\ \G_1 = S_1}}\\
    \lineno &\quad\opensspec{\textcolor{teal}{\hbox{}\wedge \ccenters{\G_2} = \selfloopss{\G_2} = \{x_2\} \wedge nodes\ \G_2 = S_2}}\\
    \linelb{union-ln5} &\qquad\spec{\gseg_1 (x_1 \mapsto x_1 \join \G_1{\setminus{x_1}})}\\
    \linelb{union-ln6} &\qquad\spec{x_1 \Mapsto x_1 * \gseg_1 (\G_1{\setminus{x_1}})}\\
    \linelb{union-ln7} &\qquad x_1.next \mutate x_2;\\
    \lineno &\qquad\spec{x_1 \Mapsto x_2 * \gseg_1 (\G_1{\setminus{x_1}})}\\
    \lineno &\qquad \textbf{return}\ x_2 \\
    \lineno &\qquad\spec{x_1 \Mapsto x_2 * \gseg_1 (\G_1{\setminus{x_1}}) \wedge \res = x_2}\\
    \linelb{union-ln11} &\qquad\spec{\gseg_1 (x_1 \mapsto x_2 \join \G_1{\setminus{x_1}}) \wedge \res = x_2}\\
    \linelb{union-ln12} &\quad\sspecopen{\gseg_1 (x_1 \mapsto x_2 \join \G_1{\setminus{x_1}}) \wedge \res = x_2 \textcolor{teal}{\hbox{}* \gseg_1\ \G_2}}\\
    \lineno &\quad\opensspecopen{\textcolor{teal}{\hbox{}\wedge \ccenters{\G_1} = \selfloopss{\G_1} = \{x_1\} \wedge nodes\ \G_1 = S_1}}\\
    \linelb{union-ln14} &\quad\opensspec{\textcolor{teal}{\hbox{}\wedge \ccenters{\G_2} = \selfloopss{\G_2} = \{x_2\} \wedge nodes\ \G_2 = S_2}}\\
    \linelb{union-ln15} &\quad\sspecopen{\gseg_1 (x_1 \mapsto x_2 \join \G_1{\setminus{x_1}}) * \gseg_1\ \G_2 \wedge \res = x_2}\\
    \lineno &\quad\opensspecopen{\hbox{}\wedge \ccenters{(x_1 \mapsto x_2 \join \G_1{\setminus{x_1}})} = \{x_2\}  }\\
    \lineno &\quad\opensspecopen{\hbox{}\wedge \selfloopss{(x_1 \mapsto x_2 \join \G_1{\setminus{x_1}})} = \emptyset } \\
    \lineno &\quad\opensspecopen{\hbox{}\wedge nodes\ (x_1 \mapsto x_2 \join \G_1{\setminus{x_1}}) = S_1}\\
    \lineno &\quad\opensspec{\hbox{}\wedge \ccenters{\G_2} = \selfloopss{\G_2} = \{x_2\} \wedge nodes\ \G_2 = S_2}\\
    \linelb{union-ln20} &\quad\sspecopen{\gseg_1 (x_1 \mapsto x_2 \join \G_1{\setminus{x_1}} \join \G_2) \wedge \res = x_2}\\
    \lineno &\quad\opensspecopen{\hbox{}\wedge \ccenters{(x_1 \mapsto x_2 \join \G_1{\setminus{x_1}} \join \G_2)} = \{x_2\}  }\\
    \lineno &\quad\opensspecopen{\hbox{}\wedge \selfloopss{(x_1 \mapsto x_2 \join \G_1{\setminus{x_1}} \join \G_2)} = \{x_2\} } \\
    \lineno &\quad\opensspec{\hbox{}\wedge nodes\ (x_1 \mapsto x_2 \join \G_1{\setminus{x_1}} \join \G_2) = S_1 \join S_2}\\    
    \linelb{union-ln24} &\spec{\set{(S_1 \join S_2)}{\res} \wedge \res \in \{x_1, x_2\}}
\end{align*}}

Similar to the FIND proof, the initial lines of the proof outline
(lines~\ref{union-ln1}-\ref{union-ln6}) unfold the spatial predicates
isolating the node corresponding to the argument $x_1$.
Line~\ref{union-ln2} unfolds the set predicate, instantiates the
existentially quantified graphs as $\G_1$ and $\G_2$ and frames out 
$\gseg_1\ \G_2$ as well as the non-spatial facts about the initial
disjoint graphs.
Line~\ref{union-ln5} rewrites by the equality
$\G_1 = x_1 \mapsto x_1 \join \G_1{\setminus{x_1}}$ which follows from
$x_1 \in \selfloopss{\G_1}$.
The command in line~\ref{union-ln7} makes $x_1$ point to $x_2$. The
second and final command sets $x_2$ as the result of the program.
After the final command, in line~\ref{union-ln11} reasoning is lifted
from the heap level and in line~\ref{union-ln12} the spatial conjuncts
initially framed are reintroduced.
In line~\ref{union-ln15} the abstractions that constitute the
specification are recomputed for the mutated graph
$x_1 \mapsto x_2 \join \G_1{\setminus{x_1}}$, exploiting their
distributivity. Before applying the distributivity of
$\textit{summits}$, preacyclicity is proved for the mutated graph by
application of lemma~\ref{preacyclic-mutation}. 
\[
\begin{array}[c]{rl>{\ \ }l<{}}
&\hspace{-5mm} \ccenters{(x_1 \mapsto x_2 \join \G_1{\setminus{x_1}})} = & \\
               &= (\ccenters{(x_1 \mapsto x_2)}){\setminus{\nodes\ (\G_1{\setminus{x_1}})}} & \text{Distrib. of \textit{summits}}\\
               &\ \cup\  (\ccenters{\G_1{\setminus{x_1}}}){\setminus{\nodes\ (x_1 \mapsto x_2)}}  &  \\
               &= \{x_2\}{\setminus{\nodes\ (\G_1{\setminus{x_1}})}} \cup
                 (\ccenters{\G_1{\setminus{x_1}}}){\setminus{x_1}}  & \text{Def. of \textit{summits} and $\nodes$}  \\
               &= \{x_2\} \cup (\ccenters{\G_1{\setminus{x_1}}}){\setminus{x_1}}  & \text{$x_2 \in \nodes\ \G_2$ and}  \\
  & &\ \nodes\ \G_1{\setminus{x_1}} \cap \nodes\ \G_2 = \emptyset \\
&= \{x_2\} & \text{Lem.~\ref{summits-free} and Assump. in lines~\ref{union-ln12}-\ref{union-ln14}}\\[3mm]
  &\hspace{-5mm} \selfloopss{(x_1 \mapsto x_2 \join \G_1{\setminus{x_1}})} = & \\
&= \selfloopss{(x_1 \mapsto x_2)}\ \dotcup\ \selfloopss{(\G_1{\setminus{x_1}})}  & \text{Distrib. of $\selfloops$}  \\
  &= \selfloopss{(\G_1{\setminus{x_1}})}  & \text{Def. of $\selfloops$} \\
&= \emptyset & \text{Assump. in lines~\ref{union-ln12}-\ref{union-ln14}} \\[3mm]
  &\hspace{-5mm} \nodes\ (x_1 \mapsto x_2 \join \G_1{\setminus{x_1}}) = & \\
&= \nodes\ (x_1 \mapsto x_2)\ \dotcup\ \nodes\ (\G_1{\setminus{x_1}})  & \text{Distrib. of $\nodes$} \\
&= \{x\}\ \dotcup\ \nodes\ ({\G_1{\setminus{x_1}}})  & \text{Def. of $\nodes$} \\
&= \nodes\ (x_1 \mapsto x_1)\ \dotcup\ \nodes\ (\G_1{\setminus{x_1}})  & \text{Def. of $\nodes$}  \\
  &= \nodes\ (x_1 \mapsto x_1 \join \G_1{\setminus{x_1}})  & \text{Distrib. of $\nodes$} \\
               &= S_1 & \text{Assump. in lines~\ref{union-ln12}-\ref{union-ln14}}
\end{array} 
\]

The last step in the proof outline starting in
line~\ref{union-ln20} is joining both subgraphs
$x_1 \mapsto x_2 \join \G_1{\setminus{x_1}}$ and $\G_2$. Both
$\selfloopss$ and $\nodes$ use plain distributivity. In contrast, for
\textit{summits}, instead of proving that the joined graph is
preacyclic in order to apply distributivity, we exploit the fact that
both subgraphs have the same \textit{summits} and apply
lemma~\ref{summits-eq} directly.
\[
  \begin{array}[c]{rl>{\ \ }l<{}}
  &\hspace{-5mm} \selfloopss{(x_1 \mapsto x_2 \join \G_1{\setminus{x_1}} \join \G_2)} = & \\
&= \selfloopss{(x_1 \mapsto x_2 \join \G_1{\setminus{x_1}})}\ \dotcup\ \selfloopss{\G_2}  & \text{Distrib. of $\selfloops$}  \\
&= \{x_2\} &  \text{Assump. in lines~\ref{union-ln12}-\ref{union-ln14}} \\[3mm]
  &\hspace{-5mm} \nodes\ (x_1 \mapsto x_2 \join \G_1{\setminus{x_1}} \join \G_2) = & \\
&= \nodes\ (x_1 \mapsto x_2 \join \G_1{\setminus{x_1}})\ \dotcup\ \nodes\ \G_2  & \text{Distrib. of $\nodes$} \\
&= S_1\ \dotcup\ S_2  & \text{Assump. in lines~\ref{union-ln12}-\ref{union-ln14}}
\end{array} 
\]

The proof concludes in line~\ref{union-ln24} by folding the desired
spatial predicate.

\end{document}